\documentclass{llncs}
\usepackage[T1]{fontenc}
\usepackage{microtype}
\usepackage{amssymb}
\usepackage{xcolor}
\usepackage{xspace}
\usepackage{graphicx}
\usepackage{xspace}
\usepackage{hyperref}
\usepackage{lineno}
\usepackage{wrapfig}
\usepackage{stackengine}
\usepackage{mathtools}

\graphicspath{{figures/}}

\newcommand{\etal}{et~al.\xspace}
\newcommand{\subcap}[1]{\textbf{(#1)}}
\newcommand{\mypar}[1]{\smallskip\noindent{\bfseries #1.}}

\stackMath
\newcommand\frightarrow{\scalebox{.8}[.8]{$\blacktriangleright$}}
\newcommand\frightarrowsmall{\scalebox{.5}[.5]{$\blacktriangleright$}}
\newcommand\morpheq{\mathrel{%
  \stackengine{-4.975pt}{=}{\frightarrowsmall}{U}{r}{F}{T}{S}}}
 \newcommand\morph{\mathrel{%
  \stackengine{-4.95pt}{\relbar\joinrel\relbar}{\frightarrow}{U}{r}{F}{T}{S}}}

\pagestyle{plain}

\bibliographystyle{plainurl}% the mandatory bibstyle

\title{Optimal Morphs\\ of Planar Orthogonal Drawings~II
} 

\author{Arthur van Goethem
%\orcidID{0000-0001-5589-3676}
\and Bettina Speckmann
%\orcidID{0000-0002-8514-7858}
\and Kevin Verbeek
%\orcidID{0000-0003-3052-4844}
}

\institute{Department of Mathematics and Computer Science, TU Eindhoven, The Netherlands
\\ \email{[a.i.v.goethem|b.speckmann|k.a.b.verbeek]@tue.nl}
}

\let\doendproof\endproof
\renewcommand\endproof{~\hfill$\qed$\doendproof}

\begin{document}

%\modulolinenumbers[5]
%\linenumbers

\maketitle

\begin{abstract}
Van Goethem and Verbeek~\cite{vanGoethemVerbeek2018} recently showed how to morph between two planar orthogonal drawings $\Gamma_I$ and $\Gamma_O$ of a connected graph $G$ while preserving planarity, orthogonality, and the complexity of the drawing during the morph. Necessarily drawings $\Gamma_I$ and $\Gamma_O$ must be equivalent, that is, there exists a homeomorphism of the plane that transforms $\Gamma_I$ into $\Gamma_O$. Van Goethem and Verbeek use $O(n)$ linear morphs, where $n$ is the maximum complexity of the input drawings. However, if the graph is disconnected their method requires $O(n^{1.5})$ linear morphs. In this paper we present a refined version of their approach that allows us to also morph between two planar orthogonal drawings of a disconnected graph with $O(n)$ linear morphs while preserving planarity, orthogonality, and linear complexity of the intermediate drawings.

Van Goethem and Verbeek measure the structural difference between the two drawings in terms of the so-called \emph{spirality} $s = O(n)$ of $\Gamma_I$ relative to $\Gamma_O$ and describe a morph from $\Gamma_I$ to $\Gamma_O$ using $O(s)$ linear morphs. We prove that $s+1$ linear morphs are always sufficient to morph between two planar orthogonal drawings, even for disconnected graphs. The resulting morphs are quite natural and visually pleasing.

%\footnote{We would like to thank all anonymous reviewers for their extensive feedback.}
\end{abstract}

\section{Introduction}

Continuous morphs of planar drawings have been studied for many years, starting as early as 1944, when Cairns~\cite{Cairns1944} showed that there exists a planarity-preserving continuous morph between any two (compatible) triangulations that have the same outer triangle. These results were extended by Thomassen~\cite{Thomassen1983} in 1983, who gave a constructive proof of the fact that two compatible straight-line drawings can be morphed into each other while maintaining planarity. The resulting algorithm to compute such a morph takes exponential time (just as Cairns' result). Thomassen also considered the orthogonal setting and showed how to morph between two rectilinear polygons with the same turn sequence. For planar straight-line drawings the question was settled by Alamdari~\etal~\cite{Alamdari2016}, following work by Angelini~\etal~\cite{Angelini2013}. They showed that $O(n)$ uni-directional linear morphs are sufficient to morph between any compatible pair of planar straight-line drawings of a graph with $n$ vertices while preserving planarity. The corresponding morph can be computed in $O(n^3)$ time.

In this paper we consider the orthogonal setting, that is, we study planarity-preserving morphs between two planar orthogonal drawings $\Gamma_I$ and $\Gamma_O$ with maximum complexity $n$, of a graph $G$. Here the complexity of an orthogonal drawing is defined as the number of vertices and bends. All intermediate drawings must remain orthogonal, as to not disrupt the mental map of the reader. This immediately implies that the results of Alamdari~\etal~\cite{Alamdari2016} do not apply, since they do not preserve orthogonality. Biedl~\etal~\cite{Biedl2006} described the first results in this setting, for so-called \emph{parallel} drawings, where every edge has the same orientation in both drawings. They showed how to morph between two parallel drawings using $O(n)$ linear morphs while maintaining parallelity and planarity. More recently, Biedl~\etal~\cite{Biedl2013} showed how to morph between two planar orthogonal drawings using $O(n^2)$ linear morphs, while preserving planarity, orthogonality, and linear complexity. Van Goethem and Verbeek~\cite{vanGoethemVerbeek2018} improved this bound further to $O(n)$ linear morphs for a connected graph $G$. This bound is tight, based on the lower bound for straight-line graphs proven by Alamdari~\etal~\cite{Alamdari2016}.

If the graph $G$ is disconnected, then Aloupis~\etal~\cite{Aloupis2015} show how to connect $G$ in a way that is compatible with both $\Gamma_I$ and $\Gamma_O$ while increasing the complexity of the drawings to at most $O(n^{1.5})$.
They also prove a matching lower bound if $G$ has at most $\frac{n}{4}$ connected components.
This directly implies that Van Goethem and Verbeek require $O(n^{1.5})$ linear morphs for a disconnected graph $G$. 

\mypar{Paper Outline} We show how to refine the approach by Van Goethem and Verbeek~\cite{vanGoethemVerbeek2018} to also morph between two planar orthogonal drawings of a disconnected graph $G$ using $O(n)$ linear morphs while preserving planarity, orthogonality, and linear complexity. In Section~\ref{sec:prelim} we describe the necessary background. In particular, we discuss \emph{wires}: equivalent sets of horizontal and vertical polylines that capture the $x$- and $y$-order of the vertices in $\Gamma_I$ and $\Gamma_O$. The \emph{spirality} of these wires guides the morph. In Section~\ref{sec:disconnected} we show how to find sets of wires with linear spirality for equivalent orthogonal planar drawings $\Gamma_I$ and $\Gamma_O$ of a disconnected planar graph $G$. Van Goethem and Verbeek are agnostic of the connectivity of the graph once they create the wires. Hence, using the wires constructed in Section~\ref{sec:disconnected}, we can directly apply their approach to disconnected graphs.

In the remainder of the paper we show how to ``batch'' intermediate morphs. We argue solely based on sets of wires, hence the results apply to both connected and disconnected graphs. In particular, in Section~\ref{sec:noConstant} we show how to combine all intermediate morphs that act on segments of spirality $s$ into one single linear morph. Hence we need only $s$ linear morphs to morph from $\Gamma_I$ to $\Gamma_O$. However, the rerouting and simplification operations introduced by van Goethem and Verbeek to lower the intermediate complexity are not compatible with batched linear morphs and hence intermediate drawings have complexity of $O(n^3)$. In Section~\ref{sec:linearComplexity} we  present refined versions of both operations which allow us to maintain linear complexity through the $s$ linear morphs. The initial setup for these operations costs one additional morph, for a total of $s+1$ linear morphs that preserve planarity, orthogonality, and linear complexity. We implemented our algorithm and believe that the resulting morphs are natural and visually pleasing\footnote{See \texttt{https://youtu.be/n0ZaPtfg9TM} for a short movie.}. 
We restrict our arguments to proof sketches, full proofs can be found in the appendix.

\section{Preliminaries}\label{sec:prelim}
\mypar{Orthogonal drawings} 
A \emph{drawing} $\Gamma$ of a graph $G=(V,E)$ is a mapping from every vertex $v\in V$ to a unique point $\Gamma(v)$ in the Euclidean plane and from each edge $(u,v)$ to a simple curve in the plane starting at $\Gamma(u)$ and ending at $\Gamma(v)$.
A drawing is \emph{planar} if no two curves intersect in an internal point, and no vertices intersect a curve in an internal point.
A drawing is \emph{orthogonal} if each edge is mapped to an orthogonal polyline consisting of horizontal and vertical segments meeting at \emph{bends}. 
In a \emph{straight-line drawing} every edge is represented by a  single line-segment.
Two planar drawings $\Gamma$ and $\Gamma'$ are \emph{equivalent} if there exists a homeomorphism of the plane that transforms $\Gamma$ into $\Gamma'$.

We consider morphs between two equivalent drawings of a graph $G$. 
To simplify the presentation, we assume that both drawings are straight-line drawings with $n$ vertices.
If this is not the case then we first \emph{unify} $\Gamma$ and $\Gamma'$. We subdivide segments, creating additional virtual bends, to ensure that every edge is represented by the same number of segments in $\Gamma$ and $\Gamma'$. Next, we replace all bends with vertices. All edges of the resulting graph $G^*$ are now represented by straight segments (horizontal or vertical) in both $\Gamma$ and $\Gamma'$. 
%Since the maximal complexity of $\Gamma$ and $\Gamma'$ is $n$ the complexity of the unification of $\Gamma$ and $\Gamma'$ is~$O(n)$.

\begin{figure}[b]
\centering
\includegraphics[scale=0.85]{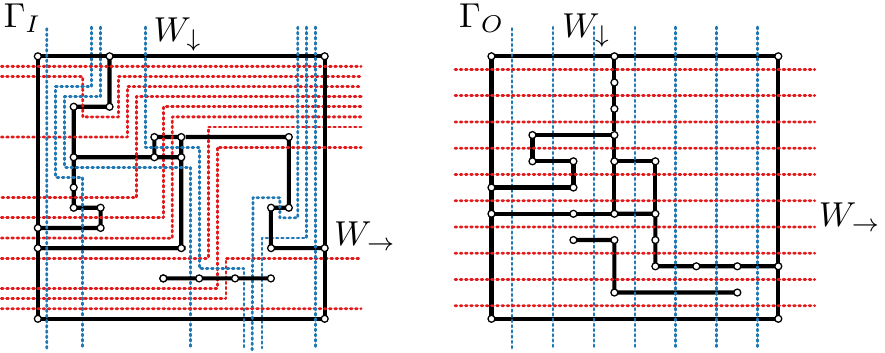}
\caption{Two unified drawings $\Gamma_I$ and $\Gamma_O$ of $G$ (black) plus
equivalent wires (red/blue).}
\label{fig:wires}
\end{figure}

A \emph{linear morph} of two drawings $\Gamma$ and $\Gamma'$ can be described by a continuous linear interpolation of all vertices and bends, which are connected by straight segments.
For each $0\le t\le 1$ there exists an intermediate drawing $\Gamma_t$ where each vertex $v$ is drawn at $\Gamma_t(v)=(1-t)\Gamma_v + t\Gamma'_v$ ($\Gamma_0 = \Gamma$ and $\Gamma_1 = \Gamma'$).
A linear morph \emph{maintains planarity} (orthogonality, linear complexity, resp.), if every intermediate drawing $\Gamma_t$ is planar (orthogonal, of linear complexity, resp.).

\mypar{Wires}
Following van Goethem and Verbeek~\cite{vanGoethemVerbeek2018} we use orthogonal polylines called \emph{wires} as the main tool to determine the morph.
Wires consist of horizontal or vertical segments called \emph{links}.
We use two sets of wires to capture the horizontal and vertical order of the vertices in $\Gamma_I$ and $\Gamma_O$.
The \emph{lr-wires} $W_\rightarrow$ traverse the drawings from left to right, and the \emph{tb-wires} $W_\downarrow$ traverse the drawings from top to bottom.
Since the horizontal and vertical order of the vertices in $\Gamma_O$ are guiding our morph, the wires $W_\rightarrow$ and $W_\downarrow$ are simply horizontal and vertical lines in $\Gamma_O$ separating consecutive vertices in the $x$- and $y$-order (only if their $x$- or $y$-coordinates are distinct).
$\Gamma_O$ and $\Gamma_I$ are equivalent, hence there exist wires in $\Gamma_I$ that are equivalent to the wires in $\Gamma_O$: there is a one-to-one matching between the wires of $\Gamma_O$ and $\Gamma_I$ such that matching wires partition the vertices identically, and cross both the segments of the drawings and the links of the other wires in the same order (see Fig.~\ref{fig:wires}).
Any such two wires in $\Gamma_I$ do not cross if they are from the same set and cross exactly once otherwise. 

Van Goethem and Verbeek use the \emph{spirality} of wires as a measure for the distance to $\Gamma_O$ (where all wires are straight lines of spirality zero).
Spirality is a well-established measure in the context of orthogonal drawings and is frequently used for bend-minimization~\cite{Blasius2016,DiBattista1998,Didimo2009}.
Specifically, let $w\in W_\rightarrow$ be a lr-wire, and $\ell_1,\ldots,\ell_k$ be the links ordered along $w$.
Let $b_i$ be the orientation of the bend from $\ell_i$ to $\ell_{i+1}$, where $b_i=1$ for a left turn, $b_i=-1$ for a right turn, and $b_i=0$ otherwise.
The \emph{spirality} of a link $\ell_i$ is defined as $s(\ell_i)=\sum_{j=1}^{i-1}b_i$.
A \emph{maximum-spirality link} is any link with the largest absolute spirality.
% A maximum-spirality link is \emph{reduced} when a zigzag-eliminating slide collapses the link to a single point.
% Van Goethem and Verbeek showed that all maximum-spirality links can be reduced at the same time while maintaining planarity.
The spirality of a wire is the maximum absolute spirality of any link in the wire, the spirality of a set of wires is the maximum spirality of any wire in the set.

The spirality of a drawing $\Gamma$ is not well defined: it is always relative to another drawing $\Gamma'$ and the straight-line wires induced by $\Gamma'$.
Furthermore, there are possibly multiple sets of matching wires in $\Gamma$ for the straight-line wires in $\Gamma'$.
Still, whenever the drawing $\Gamma'$ and the matching set of wires in $\Gamma$ are clear from the context, then by abuse of notation we will speak of the spirality of $\Gamma$. Unless stated otherwise, we always consider spirality relative to $\Gamma_O$.

% \begin{proof}
% Axis-aligned non-linear scaling does not affect the spirality of a set of wires.
%  \arthur{drop `proof' in paper. Move note to prelim?}
% \end{proof}

\begin{figure}[b]
\centering
\includegraphics[scale=0.85]{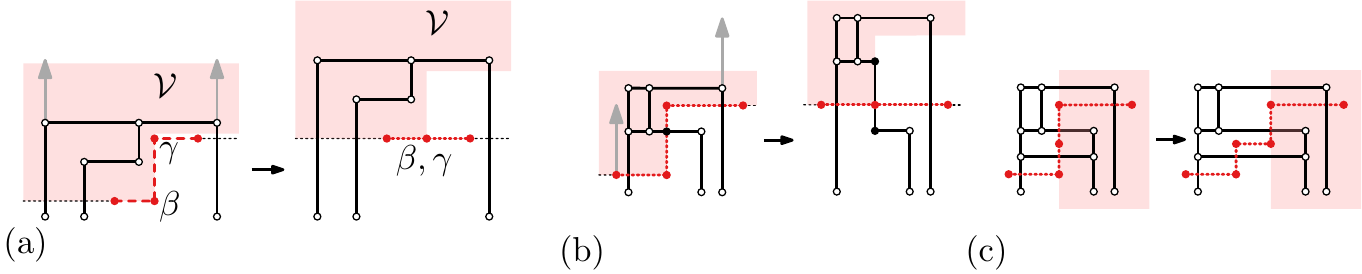}
\caption{A drawing (black) with vertices (open marks) and bends (closed marks). \subcap{a} A \emph{zigzag-eliminating slide} with center link $\overline{\beta\gamma}$. \subcap{b} Introducing two additional bends in a crossing segment ensures orthogonality. \subcap{c} A \emph{bend-introducing slide}.}
\label{fig:equivalentWires}
\end{figure}

\mypar{Slides}
Biedl~\etal~\cite{Biedl2013} introduced \emph{slides} as a particular type of linear morph that operates on the segments of the drawing.
Van Goethem and Verbeek~\cite{vanGoethemVerbeek2018} extended this concept to wires.
Slides on wires may be accompanied by the insertion or deletion of bends in the drawing.
In the following we exclusively consider slides on wires.
A \emph{zigzag} consists of three consecutive links of a wire and two bends $\beta$ and $\gamma$ that form a left turn followed by a right turn or vice versa.
Consider the horizontal zigzag with bends $\beta$ and $\gamma$ in Figure~\ref{fig:equivalentWires}(a).
Let $\mathcal{V}$ be the set of vertices and bends of both the drawing and the wires that are (1) above or at the same height as $\beta$ and strictly to the left of $\beta$, (2) that are strictly above $\gamma$, and (3) $\beta$. The corresponding region is shaded in Figure~\ref{fig:equivalentWires}. A \emph{zigzag-eliminating slide} is a linear morph that straightens a zigzag on a wire by moving all vertices and bends in $\mathcal{V}$ up by the initial distance between $\beta$ and $\gamma$.

By definition, wires do not contain any vertices or bends of the drawing or other wires. However, the center link $\overline{\beta\gamma}$ might be crossed by a segment of the drawing or a link of a wire in the other set (see Fig.~\ref{fig:equivalentWires}(b) for a crossing with a segment of the drawing). In this case we introduce two virtual bends in the segment or the link on the crossing and symbolically offset one to the right and one to the left.
The left bend is thus included in $\mathcal{V}$ while the right bend is not. We can prevent that multiple segments or links cross $\overline{\beta\gamma}$ using so-called \emph{bend-introducing slides} as discussed in~\cite{vanGoethemVerbeek2018} (see Fig.~\ref{fig:equivalentWires}(c)).

\section{Linear morphs for disconnected graphs}\label{sec:disconnected}

Let $\Gamma_I$ and $\Gamma_O$ be two equivalent planar orthogonal drawings of a disconnected graph $G$.
%We consider morphing $\Gamma_I$ to $\Gamma_O$ using $O(n)$ linear morphs while maintaining planarity, orthogonality, and $O(n)$ complexity.
For a connected graph there is a unique homotopy class in $\Gamma_I$ that contains all possible wires that match a given wire $w$ from $\Gamma_O$. This statement does not hold for disconnected graphs: there might be more than one homotopy class in $\Gamma_I$ that matches $w$ (see Fig.~\ref{fig:consensus}(a)). If we choose homotopy classes independently for the wires in $\Gamma_I$ then their union might not be equivalent to the set of wires in $\Gamma_O$, for example, wires might cross more than once (see Fig.~\ref{fig:consensus}(c)).

Below we show that we can choose homotopy classes for the wires in $\Gamma_I$ incrementally, first for the lr-wires and then for the tb-wires, while maintaining the correct intersection pattern and hence equivalence with $\Gamma_O$. For each of the resulting equivalence classes we add the shortest wire to the set of wires. It remains to argue that the resulting set of wires has spirality $O(n)$ despite the interdependence of the homotopy classes and the fact that the arrangement of drawing and wires can have super-linear complexity (which invalidates the proofs from~\cite{vanGoethemVerbeek2018}). Below we consider only $W_\rightarrow$, analogous results hold for $W_\downarrow$.

\begin{figure}[b]
\centering
\includegraphics[scale=0.85]{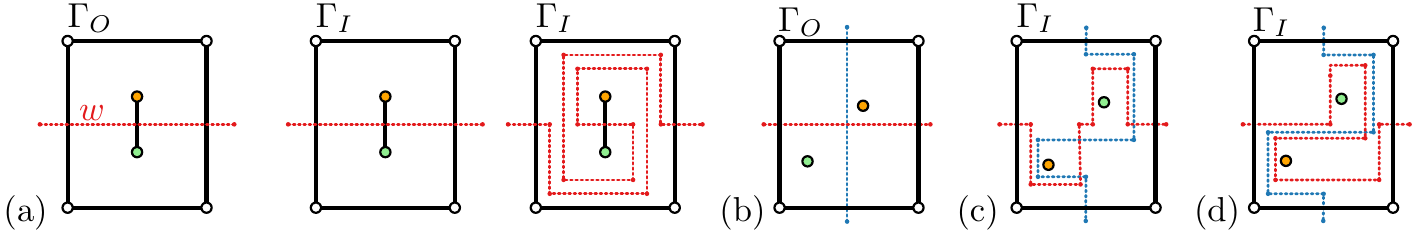}
\caption{\subcap{a} A (straight-line) wire $w$ in $\Gamma_O$ (red) and two possible wires in $\Gamma_I$ from different homotopy classes that both match $w$. \subcap{b} A graph with three connected components. \subcap{c} Wires in $\Gamma_I$ that cross three times. \subcap{d} Set of wires equivalent to $\Gamma_O$.}
\label{fig:consensus}
\end{figure}

\begin{lemma}\label{lem:spiral}
For each right-oriented link $\ell_\rightarrow$ of a wire $w\in W_\rightarrow$ with positive (negative) spirality $s$ there exists a vertical line $L$ and a subsequence of $\Omega(|s|)$ links of $w$ crossing $L$, such that the absolute spiralities of the links in sequence are $[0,2,4,\ldots,|s|-2,|s|]$, and when ordered top-to-bottom (bottom-to-top) along $L$ form the sequence $[2,6,10,\ldots, |s|-2,|s|,|s|-4,\ldots,4,0]$.
\end{lemma}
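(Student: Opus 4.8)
The plan is to prove the statement for $s = s(\ell_\rightarrow) > 0$; the case $s < 0$ follows by reflecting the whole configuration across a horizontal line, which negates every turn and hence every spirality, swaps ``top-to-bottom'' with ``bottom-to-top'', and leaves horizontal links and their orientations (in particular: right-oriented links) untouched. I use that the orientation of a link of an lr-wire is determined by its spirality modulo $4$ (the first link is right-oriented with spirality $0$, and a left/right turn simultaneously changes the spirality by $\pm 1$ and rotates the direction by $\pm 90^{\circ}$), so that right-oriented links are exactly those of spirality $\equiv 0 \pmod 4$ and in particular $s \equiv 0 \pmod 4$. I also use that wires are simple curves, being equivalent to the straight lines of $\Gamma_O$.

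First I would pass to a well-confined sub-arc and extract a spiral skeleton. Let $\ell^{\ast}$ be the \emph{first} link of $w$ of spirality $s$ (it exists since the spirality is $0$ at the first link of $w$, equals $s$ at $\ell_\rightarrow$, and changes by at most one per bend), and let $m$ be the last link of spirality $0$ before $\ell^{\ast}$. On the sub-arc of $w$ from $m$ to $\ell^{\ast}$ the spirality stays in $[0,s]$, attaining $0$ only at $m$ and $s$ only at $\ell^{\ast}$. Walking back from $\ell^{\ast}$ towards $m$, set $h_s := \ell^{\ast}$ and let $h_{j-1}$ be the last link before $h_j$ (inside this sub-arc) of spirality $j-1$. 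Then $h_0 = m$, the links $h_0,\dots,h_s$ occur in this order along $w$, $s(h_j) = j$, and --- since $h_j$ is the last link of spirality $j$ before $h_{j+1}$ --- the sub-arc $A_j$ of $w$ from $h_j$ to $\ell^{\ast}$ has spirality in $[j,s]$, attaining $j$ only at $h_j$ and $s$ only at $\ell^{\ast}$. I keep only the links $h_0, h_2, \dots, h_s$: these are all horizontal, alternately right-oriented (spirality $\equiv 0$) and left-oriented (spirality $\equiv 2 \pmod 4$); there are $s/2 + 1 = \Omega(|s|)$ of them; and they have spiralities $0, 2, \dots, s$ in this order along $w$.

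Next I would set up the geometry. The confinement of $A_j$ to $[j,s]$, together with the fact that the spirality jumps above $j$ right after $h_j$, pins down the local direction of $w$ at each $h_j$ and forces $A_0 \supset A_2 \supset \dots \supset A_s$ to wind around $\ell^{\ast}$ in a nested, inward-spiralling way, so that the plane decomposes into nested regions $R_0 \supset R_2 \supset \dots$ with $h_j$ on the boundary of $R_j$ and $h_{j+2}, \dots, h_s$ in its interior. (Near $\ell^{\ast}$ this is transparent: $A_{s-2}$, confined to $[s-2,s]$, is just a three-link staircase --- the left-oriented link $h_{s-2}$, one vertical link, and the right-oriented link $\ell^{\ast}$.) Take $L$ to be a vertical line through the interior of $\ell^{\ast}$. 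Since $L$ passes through $R_{s-2}$ and each $R_{2k}$ is nested inside $R_{2k-2}$, travelling up and down from $\ell^{\ast}$ the line $L$ must leave every region $R_{2k}$, and the pinned-down direction at $h_{2k}$ shows it does so across the link $h_{2k}$ itself; hence $L$ crosses all of $h_0, h_2, \dots, h_s$.

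Finally I would read off the vertical order. Going up from $\ell^{\ast}$, $L$ leaves the nested regions through their upper sides, which are the left-oriented links, met in the order $h_{s-2}, h_{s-6}, \dots$ ending with $h_2$ at the very top; going down it leaves through their lower sides, the right-oriented links $h_{s-4}, h_{s-8}, \dots, h_0$ with $h_0$ at the very bottom; and $\ell^{\ast} = h_s$ itself lies between the two groups. Hence the top-to-bottom order of spiralities along $L$ is $[\,2, 6, 10, \dots, s-2,\; s,\; s-4, \dots, 4, 0\,]$, exactly as claimed. The hard part is the geometric step for the outer regions: far from $\ell^{\ast}$ the spirality of $w$ between $h_{2k}$ and $h_{2k+2}$ may still oscillate within $[2k,s]$, so the regions $R_{2k}$ are not clean rectangles and their nesting must be argued carefully. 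I expect to do this by induction from $\ell^{\ast}$ outward, maintaining as an invariant the current nested region and the single arc of $w$ along which it can be left; the upper bound $s$ on the spirality of $A_{2k}$ is exactly what prevents a newly added loop from re-entering a region already left behind and what fixes both the identity and the height of the corresponding crossing of $L$.
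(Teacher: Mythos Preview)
Your overall picture is right --- the wire must spiral around $\ell_\rightarrow$ and a vertical line through the spiral picks up horizontal links of spiralities $0,2,\dots,s$ in the interleaved pattern --- but your execution diverges from the paper's and leaves a real gap at exactly the point you flag as ``the hard part.''

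The key difference is the direction of the search. The paper fixes $L$ through $\ell_\rightarrow$ and then constructs the subsequence by walking along $L$: starting from $\ell_\rightarrow$, it takes the link $\ell_m$ of the prefix $w[i]$ that crosses $L$ \emph{immediately below} (resp.\ above) the current link $\ell_i$, observes that because the sub-wire between them cannot cross $L$ nor enclose the origin of $\ell_m$, the spirality of $\ell_m$ is forced to be $t-2$ or $t+4$, and if it is $t+4$ simply continues the search downward. Termination comes from the fact that the lowest crossing of $L$ has spirality $\le 0$. In this approach, every link in the output crosses $L$ by construction; the work is in controlling spirality, and the $(-2\text{ or }+4)$ dichotomy does that cleanly.

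You instead fix the links $h_{2k}$ first (by spirality, walking along $w$) and then try to show they all cross some $L$ through $\ell^\ast$. This is where the argument breaks down. Your nested regions $R_{2k}$ are never defined, and even granting that $L$ must exit each $R_{2k}$, it need not do so through the particular link $h_{2k}$: the arc $A_{2k}$ between $h_{2k}$ and $h_{2k+2}$ can oscillate in spirality within $[2k+1,s]$ and hence contain many horizontal links, any of which could be where $L$ meets the boundary. The ``pinned-down direction at $h_{2k}$'' only tells you the turn at the \emph{end} of $h_{2k}$, not that $h_{2k}$ lies above or below $\ell^\ast$ in the vertical strip of $L$. Already for $s=4$, if $\ell^\ast$ extends far to the right of the down-link preceding it, an $L$ chosen near the right end of $\ell^\ast$ misses $h_2$ entirely; so ``a vertical line through the interior of $\ell^\ast$'' is not specific enough, and for larger $s$ the freedom in $A_{2k}$ makes it unclear that any single $L$ hits all your $h_{2k}$.

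The paper's trick --- searching crossings of $L$ rather than links of a given spirality, and exploiting that consecutive crossings differ by $-2$ or $+4$ --- is precisely what lets it absorb the oscillation you are worried about: when a ``wrong'' (spirality $t+4$) link is encountered, the search just continues. If you want to rescue your approach, you would need to replace the choice of $h_{2k}$ by ``the crossing of $L$ with the appropriate arc that has spirality $2k$,'' at which point you are essentially reconstructing the paper's argument.
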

Figure~\ref{fig:layerShortcut}(a) illustrates Lemma~\ref{lem:spiral}.
Let $\ell_\rightarrow$ be a right-oriented link on a wire $w$ and w.l.o.g. let $s > 0$ be the spirality of $w$.
Further, let $L$ be a vertical line through $\ell_\rightarrow$ and $S$ a subsequence from $w$ with the properties guaranteed by Lemma~\ref{lem:spiral}.
Finally, let $\ell^i\in S$ be the unique link with spirality $0\le i\le s$ in $S$.
We define the \emph{$i$-core} for $S$ (for $4\le i \le s$ and $i\pmod4=0$) as the region enclosed by the wire $w$ from the intersection between $\ell^{i-4}$ and $L$ to the intersection between $\ell^i$ and $L$ and the straight line segment along $L$ connecting them (see Fig~\ref{fig:layerShortcut}(b)).
We define the \emph{$i$-layer} for $S$ (for $4\le i \le s-4$ and $i\pmod4=0$) as the difference of the $i$-core and the $(i+4)$-core (see Fig~\ref{fig:layerShortcut}(c)).
%The boundary of an $i$-layer is formed by $w$ and two straight-line segments along $L$.
%Let the straight-line segment shared by the $i$-layer and the $(i-4)$-layer be the \emph{outer-gate} of the $i$-layer, and the other straight-line segment the \emph{inner gate}.
%We say link $\ell_\rightarrow$ is surrounded by a topological \emph{spiral}.
%Similar concepts can be defined for tb-wires.

\begin{figure}[tb]
\centering
\includegraphics[scale=0.85]{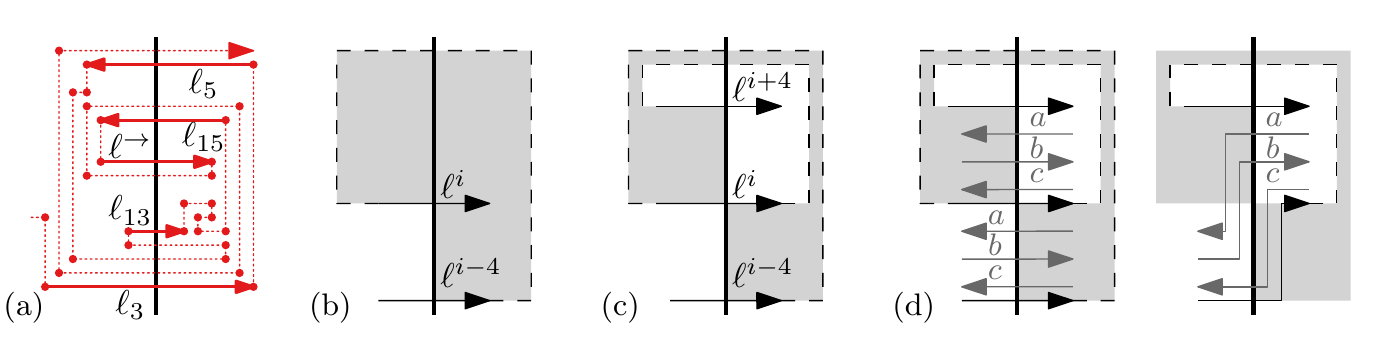}
\caption{\subcap{a} Lemma~\ref{lem:spiral} for a link $\ell_\rightarrow$ and sequence $S=(\ell_3,\ell_5,\ell_{13},\ell{15},\ell_\rightarrow)$. \subcap{b} The $i$-core of a spiral for a link $\ell^i\in S$ (gray). \subcap{c} The $i$-layer of the spiral (gray). \subcap{d} A layer cannot only contain wires as then we can shorten all wires.}
\label{fig:layerShortcut}
\end{figure}

\begin{lemma}\label{lem:horizontalWires}
An equivalent set of lr-wires with spirality $O(n)$ exists.
\end{lemma}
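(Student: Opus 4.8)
The plan is to bound the spirality of each lr-wire by a counting argument based on the geometric structure exposed by Lemma~\ref{lem:spiral}. Suppose some lr-wire $w\in W_\rightarrow$ has spirality $s$; pick a right-oriented maximum-spirality link $\ell_\rightarrow$ and the vertical line $L$ and subsequence $S$ guaranteed by Lemma~\ref{lem:spiral}. The $i$-layers for $i\pmod 4=0$ partition (most of) the nested spiral region traced by $w$ into $\Theta(s)$ disjoint annular pieces. Since we always add the \emph{shortest} wire of each homotopy class, $w$ is shortest in its class; the key leverage is that if a layer were ``empty'' in a suitable sense, $w$ could be shortened, contradicting minimality. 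So each layer must contain an obstruction that forces $w$ to wind around it.

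\textbf{Key steps.} First I would formalize what must sit inside each $i$-layer. The layer is bounded by two portions of $w$ together with a segment of $L$; if the layer contained no vertex or bend of the drawing $G^*$, then the two arcs of $w$ bounding it would be homotopic rel endpoints in the complement of $G^*$, and we could replace the outer arc by (a curve close to) the inner one, strictly shortening $w$ while staying in the same homotopy class and keeping the intersection pattern with the other wires intact. This uses the fact (Fig.~\ref{fig:layerShortcut}(d)) that a layer cannot contain \emph{only} wires: other wires are themselves homotopic to short curves, so they present no real obstruction and can be pushed out of the way simultaneously, or handled by the incremental construction order (lr-wires are fixed before tb-wires, and two lr-wires never cross). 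Hence every $i$-layer contains at least one vertex of $G^*$. Second, I would argue these vertices are distinct across layers: the $i$-layers for distinct $i\equiv 0\pmod 4$ are pairwise disjoint by construction (each is a set difference of nested cores), so a vertex lies in at most one layer. Therefore the number of layers, which is $\Theta(s)$, is at most $|V(G^*)| = O(n)$, giving $s = O(n)$ for $w$. Third, since this holds for every lr-wire, the set $W_\rightarrow$ has spirality $O(n)$, and by the symmetric argument stated in the text the same bound will hold once we also add $W_\downarrow$; the incremental choice of homotopy classes (lr first, then tb) guarantees the union is still equivalent to the wires of $\Gamma_O$, so the construction delivers an equivalent set of lr-wires of spirality $O(n)$.

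\textbf{Main obstacle.} The delicate point is the shortening argument in an arrangement of super-linear complexity: I must show that ``no drawing vertex in the layer'' really does let me shorten $w$ \emph{without} destroying equivalence — i.e.\ without changing which side of $w$ each vertex lies on, without creating an extra crossing with some tb-wire, and without leaving the prescribed homotopy class. Because there can be $\Omega(n^{1.5})$ links total, one cannot afford to reason globally; the argument has to be local to the layer and must account for the virtual bends introduced when segments cross center links. I expect to resolve this by working in the universal cover / with homotopy rel the vertex set of $G^*$ only (treating wires as deformable), showing that the relevant sub-arcs of $w$ are freely homotopic across an empty layer and that a canonical representative of $w$'s homotopy class therefore cannot wind around an empty region — which is exactly the statement that each of the $\Theta(s)$ layers is charged to a distinct vertex. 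Making this ``canonical shortest representative'' notion precise, and checking it is consistent with the fixed intersection pattern, is where the real work lies.
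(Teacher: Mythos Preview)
Your overall architecture matches the paper's: use Lemma~\ref{lem:spiral} to exhibit $\Theta(s)$ nested layers around a maximum-spirality link, argue that an ``empty'' layer would let you shorten $w$, and use the incremental (shortest-relative-to-predecessors) insertion order to dispose of layers that contain only other lr-wires. That part is fine and is essentially what the paper does for its drawing-free layers.

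The gap is your central counting claim, that every $i$-layer contains a vertex of $G^*$. This is false. A layer can be traversed by an edge of $G^*$ (a straight segment) whose two endpoints both lie \emph{outside} the layer. Such a layer contains no vertex you can charge to, yet it cannot be shortcut: replacing the outer $w$-arc by a curve near the inner one removes (or relocates) the crossing of $w$ with that segment, so the new wire no longer crosses the segments of the drawing in the prescribed order and is not equivalent to the target wire in $\Gamma_O$. Both of your homotopy formulations fail here. If you work ``in the complement of $G^*$'', the two $w$-arcs are simply not homotopic across the layer, because the edge separates them; if instead you work ``rel the vertex set of $G^*$ only'', the deformation is allowed but the resulting curve is in the wrong equivalence class, since equivalence requires matching the sequence of edge crossings, not merely the vertex partition. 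Either way, these edge-only layers are unaccounted for in your bound.

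The paper closes this with a separate, elementary count forming a three-way case split: layers containing a vertex ($O(n)$ of them), layers containing part of an edge but no vertex, and layers containing no part of the drawing (impossible by the shortcut argument). For the middle case, observe that the boundary of a layer consists of arcs of $w$ and two ``gates'' on the vertical line $L$; since $w$ crosses each edge at most once, an edge passing through a vertex-free layer must use a gate, hence must cross $L$. A straight segment meets $L$ in a single point, so each edge is responsible for at most two such layers (one on each side of its crossing with $L$), giving $O(n)$ edge-only layers. Add this case and your argument goes through.
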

\begin{proof}
\emph{(Sketch)} We prove by induction that we can add a new lr-wire with spirality $O(n)$. If a wire $w$ has $\omega(n)$ layers, then we can argue via shortcuts (see Fig.~\ref{fig:layerShortcut}(d)) that $w$ was not shortest with respect to previously inserted wires.
\end{proof}

\begin{lemma}\label{lem:completeWires}
An equivalent set of wires with spirality $O(n)$ exists.
\end{lemma}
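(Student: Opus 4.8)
\emph{(Sketch)}
The plan is to fix the two sets of wires one after another. First, by Lemma~\ref{lem:horizontalWires}, I fix an equivalent set $W_\rightarrow$ of lr-wires of spirality $O(n)$, each taken to be a shortest representative of its homotopy class. Then I add the tb-wires one at a time: for the tb-wire currently considered I argue, as in the incremental construction at the start of this section, that there is a homotopy class in $\Gamma_I$ whose representatives cross every lr-wire exactly once, cross every previously added tb-wire zero times, and cross the edges of the drawing in the order prescribed by the matching tb-wire of $\Gamma_O$, and I add a shortest such representative $w$. Since the lr-wires already have spirality $O(n)$ and the tb-wires are treated symmetrically, it remains only to show that $w$ has spirality $O(n)$.

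So let $s$ be the spirality of $w$. Applying Lemma~\ref{lem:spiral} (after a suitable rotation) to a maximum-spirality link of $w$ yields a line $L$ and a subsequence of links of $w$ winding around $L$, and hence $\Omega(s)$ pairwise interior-disjoint layers (the $i$-layers for the multiples $i$ of $4$ with $4\le i\le s-4$). I would show that every layer contains a vertex of $G^*$, a crossing of $w$ with an edge of $G^*$, or a crossing of $w$ with an lr-wire. Since $w$ is a straight line in $\Gamma_O$, it crosses each edge of $G^*$ and each lr-wire at most once, so there are $O(n)$ features of these three kinds, and each of them lies in $O(1)$ of the interior-disjoint layers; hence there are $O(n)$ layers and $s=O(n)$, which completes the induction.

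The crux, and the main obstacle, is to show that a layer $R$ that contains no vertex of $G^*$ and no such crossing on its boundary can be removed, contradicting the choice of $w$ as a shortest representative. This is precisely the difficulty flagged in Section~\ref{sec:disconnected}: unlike in the proof of Lemma~\ref{lem:horizontalWires}, the region $R$ need not be empty, because the fixed lr-wires and the previously added tb-wires may pass through it and the arrangement of drawing and wires can have super-linear complexity. In this situation $\overline{R}$ is a topological disk bounded by arcs of $w$ and a segment $\sigma$ of $L$, its interior meets no vertex of $G^*$, and no edge of $G^*$ reaches into it (such an edge would have to cross $w$ or $L$, which we excluded). Every curve piece inside $R$ therefore belongs to a wire; a previously added tb-wire does not cross $w$, and an lr-wire meeting $R$ does not cross the $w$-arcs of $\partial R$, so in either case every such piece enters and leaves $R$ through $\sigma$. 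The plan is then to straighten $w$ across the coil bounding $R$ while, in lockstep, pushing all of these wire pieces out of $R$ through $\sigma$: since $\overline{R}$ carries no vertex of $G^*$, every one of these reroutings stays in its homotopy class; since each moved piece meets $\partial R$ only in $\sigma$, the reroutings can be confined to a neighbourhood of $\overline{R}$ without changing any crossing between two wires or between a wire and the drawing; and one verifies that $w$ is strictly shortened while the auxiliary moves are length-non-increasing, contradicting minimality. The delicate point, on which I expect to spend the most effort, is to make this simultaneous reroute precise---checking that the full crossing pattern is preserved and that the total wire length strictly decreases---despite the super-linear size of the arrangement; the leverage is exactly that a vertex-free layer makes every local reroute in it homotopically free.
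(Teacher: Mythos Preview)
Your route differs substantially from the paper's, and the step you flag as the crux is where it does not go through.

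The paper avoids a global layer count for the tb-wires by using one extra fact you do not invoke: whenever an lr-wire and a tb-wire cross, the two crossing links have equal spirality (Lemma~2 of~\cite{vanGoethemVerbeek2018}). Since every lr-wire already has spirality $O(n)$, every tb-wire has spirality $O(n)$ at each of its lr-crossings. Between two consecutive lr-crossings the tb-wire lies in a strip bounded by two lr-wires; inside that strip the tb-wires neither cross each other nor the bounding lr-wires, so the incremental shortest-wire argument of Lemma~\ref{lem:horizontalWires} applies verbatim and bounds the spirality change within the strip by $O(n)$. The spirality is therefore $O(n)$ at the endpoints of every strip and deviates by at most $O(n)$ in between, giving $O(n)$ overall. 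This anchoring makes the lr-wires work \emph{for} you rather than being obstacles to route around.

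In your argument the contradiction at the crux is not well-founded. You choose $w$ as a shortest representative \emph{relative to the already-fixed lr-wires} and the earlier tb-wires. If a feature-free layer $R$ of $w$ contains lr-wire arcs running between the gates (a layer has two gates on $L$, not a single segment~$\sigma$), then replacing $w$'s coil by a shortcut generally changes the number of $w$--lr crossings, so the shortcut is not in the admissible class and no contradiction arises. Your proposed remedy---reroute the offending lr-wires out of $R$ in lockstep---does not help either: those lr-wires were minimised only against previously inserted \emph{lr}-wires, so shortening them is not contradictory by itself; and once you move them you have altered the constraints against which $w$ was declared shortest, so a shorter $w$ in the new configuration says nothing about the old one. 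To rescue this you would need to exhibit an lr-wire that is strictly shortened while all earlier lr-wires remain untouched, but pushing an arc that enters at one gate and exits at the other need not decrease its length at all. The paper's anchoring argument sidesteps this entire difficulty.
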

\begin{proof}
\emph{(Sketch)}
By Lemma~\ref{lem:horizontalWires} we can insert all lr-wires with spirality $O(n)$.
By Lemma~2 from~\cite{vanGoethemVerbeek2018} the spirality of intersecting links is the same.
Apply Lemma~\ref{lem:horizontalWires} for the tb-wires in the regions between the intersections with lr-wires.
%\arthur{Note that it does not really matter where we cross the lr-wire as long as we do it at the right time.}
\end{proof}

\begin{theorem} \label{thm:linear}
Let $\Gamma_I$ and $\Gamma_O$ be two unified planar orthogonal drawings of a (disconnected) graph $G$. We can morph $\Gamma_I$ into $\Gamma_O$ using $\Theta(n)$ linear morphs while maintaining planarity and orthogonality.
\end{theorem}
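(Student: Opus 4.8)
The plan is to combine the wire-construction results of this section with the batching machinery announced in Sections~\ref{sec:noConstant} and~\ref{sec:linearComplexity}. First I would invoke Lemma~\ref{lem:completeWires} to obtain an equivalent set of lr- and tb-wires in $\Gamma_I$ whose spirality is $s = O(n)$. Since Van Goethem and Verbeek~\cite{vanGoethemVerbeek2018} are agnostic of connectivity once the wires are fixed, their morph applies verbatim to these wires and produces a valid planarity- and orthogonality-preserving morph from $\Gamma_I$ to $\Gamma_O$. The point of the new wire construction is precisely that it yields linear rather than $O(n^{1.5})$ spirality for disconnected $G$, which is what drives the improved bound.

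Next I would apply the batching of Section~\ref{sec:noConstant}: all intermediate morphs acting on links of a fixed spirality $s$ can be merged into a single linear morph, so that morphing from spirality $s$ down to spirality $0$ (i.e.\ to the straight-line wires of $\Gamma_O$, and hence to $\Gamma_O$ itself) takes only $s$ linear morphs. Then I would invoke the refined rerouting and simplification operations of Section~\ref{sec:linearComplexity}, which keep every intermediate drawing of linear complexity while remaining compatible with batched morphs; their one-time setup costs a single additional morph. This gives an upper bound of $s+1 = O(n)$ linear morphs, all planar and orthogonal. For the matching lower bound, I would recall the $\Omega(n)$ lower bound of Alamdari~\etal~\cite{Alamdari2016} for planar straight-line drawings; unifying a pair of hard straight-line instances into orthogonal drawings (or simply noting that straight-line instances are a special case after unification) shows that $\Omega(n)$ linear morphs are sometimes necessary even for connected graphs, hence a fortiori for the disconnected case. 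Together the two bounds give $\Theta(n)$.

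The main obstacle is entirely contained in Lemma~\ref{lem:completeWires}, and through it in Lemmas~\ref{lem:spiral} and~\ref{lem:horizontalWires}: one must show that homotopy classes for the wires in $\Gamma_I$ can be chosen so that the union is equivalent to the wire set of $\Gamma_O$ (no spurious extra crossings between wires of the same family, exactly one crossing between an lr- and a tb-wire) \emph{and} that the shortest representative in each class has only linear spirality, despite the fact that for disconnected $G$ the arrangement of drawing plus wires may have superlinear complexity. This invalidates the counting arguments of~\cite{vanGoethemVerbeek2018}; the replacement argument is the layer/core shortcut of Fig.~\ref{fig:layerShortcut}, which shows that a wire with $\omega(n)$ spirality (hence $\omega(n)$ layers) must contain a layer free of drawing vertices, allowing a shortcut that contradicts minimality with respect to the previously inserted wires. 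Once these lemmas are in hand, the theorem itself is just an assembly of the pieces above, and I would keep its proof to a short paragraph citing them in order.

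\begin{proof}
\emph{(Sketch)}
By Lemma~\ref{lem:completeWires} there is an equivalent set of wires in $\Gamma_I$ with spirality $s=O(n)$. Van Goethem and Verbeek~\cite{vanGoethemVerbeek2018} are agnostic to the connectivity of $G$ once the wires are fixed, so their morph applies and uses $O(s)$ linear morphs. Batching all morphs on links of a common spirality into one linear morph (Section~\ref{sec:noConstant}) reduces this to $s$ linear morphs, and the refined rerouting and simplification operations (Section~\ref{sec:linearComplexity}) keep all intermediate drawings of linear complexity at the cost of one additional setup morph, for a total of $s+1=O(n)$. The $\Omega(n)$ lower bound follows from the lower bound of Alamdari~\etal~\cite{Alamdari2016} for planar straight-line drawings, which are a special case after unification. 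Hence $\Theta(n)$ linear morphs suffice and are sometimes necessary.
\end{proof}
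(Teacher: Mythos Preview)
Your proof is correct in substance, but it takes an unnecessarily elaborate route compared to the paper, and in doing so creates a forward-reference problem. The paper's proof of Theorem~\ref{thm:linear} is three lines: Lemma~\ref{lem:completeWires} gives an equivalent set of wires with spirality $s=O(n)$; Theorem~8 of~\cite{vanGoethemVerbeek2018} then directly yields a morph using $O(s)=O(n)$ linear morphs (that theorem is already connectivity-agnostic once wires are fixed); and the $\Omega(n)$ lower bound is~\cite{Alamdari2016}. That is all.

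You instead invoke the batching of Section~\ref{sec:noConstant} and the rerouting/simplification of Section~\ref{sec:linearComplexity}. This is overkill for two reasons. First, Theorem~\ref{thm:linear} only claims $\Theta(n)$ morphs with planarity and orthogonality; it does \emph{not} claim linear complexity of intermediate drawings, so Section~\ref{sec:linearComplexity} is irrelevant here. Second, and more importantly, Theorem~\ref{thm:linear} sits in Section~\ref{sec:disconnected}, \emph{before} Sections~\ref{sec:noConstant} and~\ref{sec:linearComplexity} are developed; citing them in its proof would make the paper's logical structure circular. The sharper bounds $s$ and $s+1$ are the subject of the later Theorems~\ref{thm:sMorphs} and~\ref{thm:spirOneStraight}, which build on Theorem~\ref{thm:linear} rather than the other way around. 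Your first sentence (Lemma~\ref{lem:completeWires} plus the observation that~\cite{vanGoethemVerbeek2018} applies once wires exist) together with the lower bound is already the full proof; drop the rest.
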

\begin{proof}
By Lemma~\ref{lem:completeWires} an equivalent set of wires with spirality $s=O(n)$ exists. By Theorem~8~from~\cite{vanGoethemVerbeek2018} we can thus morph the drawings into each other using $O(s) = O(n)$ linear morphs.
The lower bound of $\Omega(n)$ follows from~\cite{Alamdari2016}.
\end{proof}

\section{Combining intermediate linear morphs}\label{sec:noConstant}

The proof of Theorem~\ref{thm:linear} implies a morph between two unified planar orthogonal drawings $\Gamma_I$ and $\Gamma_O$ exists using $O(s)$ linear morphs, where $s$ is the spirality of $\Gamma_I$. In this section we show how to combine consecutive linear morphs into a total number of only $s$ linear morphs, while maintaining planarity and orthogonality. 

The morphs we describe can be encoded by a sequence of drawings, starting with $\Gamma_I$ and ending with $\Gamma_O$, such that every consecutive pair of drawings can be linearly interpolated while maintaining planarity and orthogonality.
For notational convenience let $\Gamma_i \morph \Gamma_j$ indicate that $\Gamma_i$ occurs before $\Gamma_j$ during the morph and $\Gamma_i \morpheq \Gamma_j$ that $\Gamma_i \morph \Gamma_j$ or $\Gamma_i = \Gamma_j$.

Let an \emph{iteration} of the original morph consist of all linear slides that jointly reduce spirality by one.
Let the first drawing of iteration $s$ be the first drawing in the original morph with spirality $s$ and the last drawing be the first drawing with spirality $s-1$.
Consecutive iterations overlap in exactly one drawing. These drawings in the overlap of iterations are the intermediate steps of the final morph.
Within this section let $\Gamma_I\morpheq \Gamma_a \morph \Gamma_b \morpheq \Gamma_O$, where $\Gamma_a$ is the first drawing with spirality $s$ and $\Gamma_b$ is the first drawing with spirality $s-1$. 

%We show that we can combine all horizontal and vertical slides that jointly reduce the spirality by one.
%We disregard the intermediate simplification and rerouting extensions described in~\cite{vanGoethemVerbeek2018}, thus the morph described in this section may increase the intermediate complexity of the drawing to $O(n^3)$.
%In Section~\ref{sec:linearComplexity} we introduce a refined version of rerouting and simplification to ensure $O(n)$ complexity during the morph at the cost of a ingle additional linear morph.

\subsection{Staircases}
Consider two distinct vertices $v$ and $w$ of the drawing.
Define an \emph{$x$-inversion} (\emph{$y$-inversion}) of $v$ and $w$ between $\Gamma_a$ and $\Gamma_b$ when the sign ($+$,$-$,$0$) of $v.x - w.x$ ($v.y - x.y$) differs in $\Gamma_a$ and $\Gamma_b$.
We say two vertices are \emph{$x$-inverted} (\emph{$y$-inverted}), or simply \emph{inverted}.
Two vertices $v$ and $w$ are \emph{separated} in a drawing by a link $\ell$ when they are both in the vertical (horizontal) strip spanned by $\ell$, and $v$ and $w$ are on opposite sides of $\ell$.

\begin{lemma}\label{lem:order}
Two vertices $v$ and $w$ can be inverted by a zigzag-removing slide along link $\ell$, if and only if $v$ and $w$ are separated by $\ell$.
\end{lemma}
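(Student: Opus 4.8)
The statement is an ``if and only if'', so the plan is to treat the two directions separately, in both cases working directly from the definition of a zigzag-eliminating slide given in Section~\ref{sec:prelim}.

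\textbf{(Necessity.)} First I would show that if a zigzag-eliminating slide along $\ell$ inverts $v$ and $w$, then $v$ and $w$ must be separated by $\ell$. Recall that a zigzag-eliminating slide with center link $\ell=\overline{\beta\gamma}$ (say horizontal, so we consider a $y$-inversion) translates exactly the vertices in the region $\mathcal{V}$ straight up, by the initial distance between $\beta$ and $\gamma$, while fixing all other vertices; it is a single rigid translation on $\mathcal{V}$ and the identity elsewhere. Hence the relative $y$-order of two vertices can only change if exactly one of them lies in $\mathcal{V}$. Unpacking the definition of $\mathcal{V}$ (points above or at the height of $\beta$ and strictly left of $\beta$, or strictly above $\gamma$, plus $\beta$ itself), a vertex in $\mathcal{V}$ is moved up and a vertex not in $\mathcal{V}$ stays put; a $y$-inversion requires the $\mathcal{V}$-vertex to start below or level with the non-$\mathcal{V}$ vertex and end strictly above it. I would argue this forces both vertices into the horizontal strip spanned by $\ell$ (i.e. their $x$-coordinates lie between $\beta.x$ and $\gamma.x$) and on opposite sides of $\ell$: a vertex strictly above $\gamma$ that is not level-or-below some fixed vertex cannot have caused an inversion with a vertex that is also above the link, and similarly on the bottom side, so the only way to get an order reversal is one vertex just below $\ell$ inside the strip (moved up across $\ell$) and one just above $\ell$ inside the strip (not moved, because it is in the ``strictly left of $\beta$'' part that is excluded, or more carefully, it is in the part of the plane that $\mathcal{V}$ does not cover within the strip). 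This is exactly the definition of ``separated by $\ell$''.

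\textbf{(Sufficiency.)} Conversely, suppose $v$ and $w$ are separated by the center link $\ell$ of a zigzag; w.l.o.g.\ $\ell$ is horizontal, both $v,w$ have $x$-coordinates strictly between those of $\beta$ and $\gamma$, and $v$ is below $\ell$ while $w$ is above $\ell$. Then I would observe that the slide moves $v$ (which lies in $\mathcal{V}$, being below-or-at the height of one of the bends and appropriately positioned) upward by the distance $d$ between $\beta$ and $\gamma$, while leaving $w$ fixed; choosing the magnitude of the zigzag (equivalently, performing the slide far enough, which a zigzag-eliminating slide does by construction — it straightens the whole zigzag) moves $v$ from strictly below $w$ to strictly above $w$, which is a $y$-inversion. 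The only subtlety is handling the boundary cases where $v$ or $w$ sits exactly on a bend or exactly at the height of $\ell$; these are dispatched by the tie-breaking already built into the definition of $\mathcal{V}$ (it contains $\beta$ but uses strict inequalities relative to $\gamma$) and, if a drawing segment crosses $\ell$, by the virtual-bend convention (left bend in $\mathcal{V}$, right bend not) described after Figure~\ref{fig:equivalentWires}.

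\textbf{Main obstacle.} The routine part is the translation bookkeeping; the delicate part is the case analysis at the boundary of $\mathcal{V}$ — precisely pinning down, for a horizontal zigzag, which vertices inside the vertical strip of the center link get translated and which do not, so that ``separated by $\ell$'' is matched exactly (not off by the measure-zero boundary cases). I expect to lean on Figure~\ref{fig:equivalentWires}(a) and the explicit three-part description of $\mathcal{V}$ to make this airtight, and to note that by symmetry (reflecting the zigzag) the ``right turn then left turn'' orientation and the downward-moving variant are handled identically, as is the vertical-zigzag case for $x$-inversions.
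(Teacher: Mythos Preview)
Your necessity argument is the paper's approach: the slide rigidly translates $\mathcal{V}$ and fixes its complement, so an inversion requires exactly one of $v,w$ to lie in $\mathcal{V}$; then the shape of $\mathcal{V}$ forces both into the strip spanned by $\ell$ on opposite sides. This is in fact the only direction the paper actually writes out (and the only one invoked later, in Lemmata~\ref{lem:staircase} and~\ref{lem:xORy}).

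Two corrections. First, your orientations are swapped. In the paper's convention (Section~\ref{sec:prelim}, Fig.~\ref{fig:equivalentWires}(a)) a ``horizontal zigzag'' has a \emph{vertical} center link $\overline{\beta\gamma}$, and the slide moves $\mathcal{V}$ upward; so it is a vertical $\ell$ that can cause a $y$-inversion, and the relevant strip is the \emph{horizontal} strip spanned by $\ell$ with $v,w$ on opposite (left/right) sides. Your write-up pairs a horizontal $\ell$ with upward motion, which is internally inconsistent and makes the later case analysis hard to follow. Second, your sufficiency paragraph quietly assumes the $\mathcal{V}$-vertex starts strictly below the other one. But for vertical $\ell$, ``separated by $\ell$'' only places the two vertices in the horizontal strip on opposite \emph{sides} of $\ell$; their relative $y$-order within that strip is unconstrained, and if the $\mathcal{V}$-vertex already starts higher the slide does not invert them. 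The word ``can'' in the lemma is doing real work: separation is exactly the condition under which an inversion is \emph{possible}, not one under which it necessarily occurs. The paper simply proves the (only-if) direction it needs and stops.
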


\begin{wrapfigure}[6]{r}{2.2cm}
\raggedleft
\vspace{-.8cm}
\includegraphics[scale=0.85]{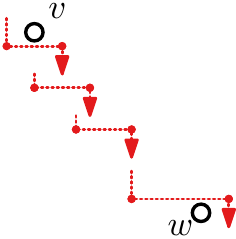}
%\caption{Downward staircase separating vertices $v,w$.}
\label{fig:staircase}
%\vspace{-1cm}
\end{wrapfigure}
\noindent
A \emph{downward staircase} is a sequence of horizontal links where: $(1)$ the left-endpoints are $x$-monotone increasing and $y$-monotone decreasing, $(2)$ the projection on the $x$-axis is overlapping or touching for a pair if and only if they are consecutive in the sequence, and $(3)$ all links have positive spirality.
Two vertices $v$ and $w$ are \emph{separated} by a downward staircase if $v$ is in the vertical strip spanned by the first link of the staircase and above it and $w$ is in the vertical strip spanned by the last link and below it.
Similar concepts can be defined for upwards staircases and for vertical links.

\begin{figure}[b]
\begin{minipage}[b]{0.45\linewidth}
\centering
\includegraphics[scale=0.85]{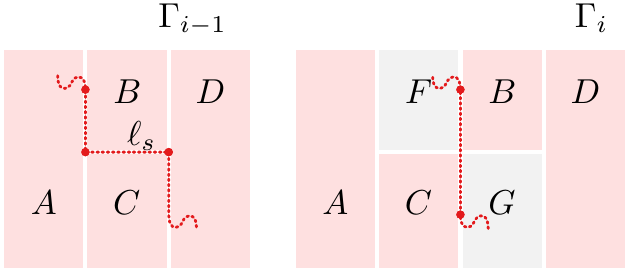}
\caption{Regions surrounding $\ell_s$ in $\Gamma_{i-1}$ and the matching regions in $\Gamma_i$.}
\label{fig:negative_small}
\end{minipage}
\hspace{0.5cm}
\begin{minipage}[b]{0.45\linewidth}
\centering
\includegraphics[scale=0.85]{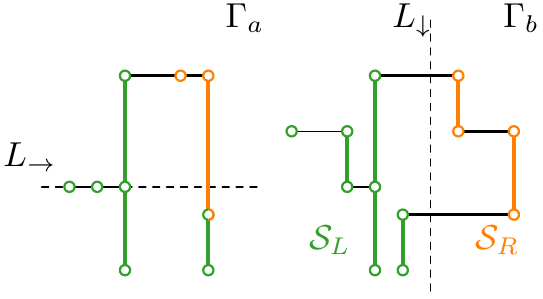}
\caption{Sets $\mathcal{S}_L$ and $\mathcal{S}_R$ in $\Gamma_a$ and $\Gamma_b$.}
\label{fig:intervals2}
\end{minipage}
\end{figure}

\begin{lemma}\label{lem:staircase}
%Let $\Gamma_I\morpheq \Gamma_a \morph \Gamma_b\morpheq \Gamma_O$, $\Gamma_a$ has spirality $s$, and $\Gamma_b$ is the first drawing with spirality $s-1$.
Two vertices $v$ and $w$ that are $x$-inverted ($y$-inverted) first during a morph from $\Gamma_a$ to $\Gamma_b$, are separated by a horizontal (vertical) staircase of maximum spirality links in $\Gamma_a$.
\end{lemma}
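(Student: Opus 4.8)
\textbf{Proof plan for Lemma~\ref{lem:staircase}.}
The plan is to track the two inverted vertices $v$ and $w$ through the morph from $\Gamma_a$ to $\Gamma_b$ and identify, at the moment of their first inversion, a concrete collection of links witnessing the staircase. Assume w.l.o.g.\ that $v$ and $w$ are $x$-inverted; by Lemma~\ref{lem:order} the slide that first inverts them runs along some horizontal link $\ell$ that separates $v$ and $w$ at that instant, and since a single iteration consists only of zigzag-removing slides on maximum-spirality links (by construction of an iteration in Section~\ref{sec:noConstant}, via Theorem~8 of~\cite{vanGoethemVerbeek2018}), $\ell$ is a maximum-spirality link of the wire it belongs to. The subtlety is that $\ell$ lives in the intermediate drawing just before the inversion, not in $\Gamma_a$; the first step is therefore to pull this link back to $\Gamma_a$. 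Because no $x$-inversion of $v$ and $w$ has happened yet on $[\Gamma_a,\text{just before }\ell\text{'s slide}]$, the $x$-order of $v$, $w$, and the relevant wire is unchanged, and I would argue that the preimage of $\ell$ under the composed linear morphs is a single link (or, after the unifying virtual-bend bookkeeping of Section~\ref{sec:prelim}, a bounded polyline with the same spirality profile) of the \emph{same} wire in $\Gamma_a$, still separating $v$ and $w$ and still of maximum spirality. This uses that the morphs in one iteration are zigzag slides, which only translate blocks of vertices/bends vertically and never create or destroy links or change their spiralities.

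Next I would grow this single separating link into a staircase. The key geometric fact is Lemma~\ref{lem:spiral}: any maximum-spirality right-oriented link $\ell^{s}$ of a wire $w$ sits at the ``tip'' of a spiral, and there is a vertical line $L$ meeting a nested sequence of links of $w$ with spiralities $0,2,\dots,s$, ordered along $L$ as $2,6,10,\dots,s,\dots,4,0$. The links of that sequence with, say, even index from the tip outward, restricted to where they project over consecutive $x$-intervals, form exactly a downward staircase in the sense defined just above: their left endpoints are $x$-increasing and $y$-decreasing, consecutive ones overlap in $x$-projection (they hand off across $L$), and all have positive spirality. So from the one separating maximum-spirality link I obtain, for free, a whole staircase of maximum-spirality links in $\Gamma_a$. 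It then remains to check that this staircase separates $v$ and $w$ in the staircase sense: $v$ lies in the vertical strip of the first link and above it, and $w$ in the strip of the last link and below it (or vice versa). This follows because $v$ and $w$ were separated by the single link $\ell$ (one above, one below), and the staircase is a ``thickening'' of $\ell$ sweeping monotonically across the $x$-interval; I would verify that $v$ and $w$ cannot slip out of the union of vertical strips by using the $x$-inversion hypothesis together with the monotonicity in property~(2) of the staircase definition.

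The main obstacle I anticipate is the pullback step: going from a separating link in an intermediate drawing to a separating maximum-spirality link in $\Gamma_a$, cleanly, despite the virtual-bend machinery from the zigzag-slide definition (Fig.~\ref{fig:equivalentWires}(b)) and despite the fact that several zigzag slides in the same iteration may act on the same wire. I would handle this by induction on the slides within the iteration: a single zigzag-eliminating slide along a maximum-spirality link changes neither the set of maximum-spirality links of any wire nor the $x$-coordinates of any vertex or bend (it is a purely vertical motion), so separation of $v$ and $w$ by maximum-spirality links is an invariant of the iteration up until their first inversion; at the step that inverts them, $\ell$ must already have been present, with its full spiral, in $\Gamma_a$. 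A secondary, more routine point is to make sure the ``maximum spirality'' in the statement is the global spirality $s$ of the current iteration, which is immediate since an iteration by definition only slides links realizing that maximum; and to note that the $y$-inverted case is symmetric under exchanging the roles of $W_{\rightarrow}$ and $W_{\downarrow}$ and of horizontal and vertical staircases.
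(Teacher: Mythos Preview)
Your proposal has two genuine gaps, one in each of its main steps.

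\textbf{The pullback step is based on a false premise.} You assert that a zigzag-eliminating slide on a maximum-spirality link ``changes neither the set of maximum-spirality links of any wire nor the $x$-coordinates of any vertex or bend (it is a purely vertical motion).'' Both claims are false. A slide along a horizontal center link moves a block of vertices \emph{horizontally}; indeed, the very slide that $x$-inverts $v$ and $w$ is of this type, and many such slides may occur earlier in the iteration. As a consequence, the single link $\ell$ that separates $v$ and $w$ in $\Gamma_{b-1}$ need not separate them in~$\Gamma_a$: its $x$-extent may have shifted so that $v$ or $w$ lies outside its vertical strip. This is precisely why a \emph{staircase}, rather than a single link, is needed in~$\Gamma_a$. (Also, a zigzag-eliminating slide by definition straightens a zigzag and thus removes a maximum-spirality link, so the set of such links certainly changes.)

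\textbf{The use of Lemma~\ref{lem:spiral} to manufacture the staircase is a mismatch.} Lemma~\ref{lem:spiral} produces, on a single wire, links of spiralities $0,2,4,\ldots,s$, all crossing one common vertical line~$L$. These are not all of maximum spirality~$s$, whereas the statement requires a staircase of \emph{maximum-spirality} links. Moreover, since they all cross~$L$, every pair overlaps in $x$-projection, violating condition~(2) of the staircase definition (``overlapping or touching \emph{if and only if} consecutive''). So the spiral of Lemma~\ref{lem:spiral} is neither of the right spirality nor of the right shape.

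The paper's argument is quite different: it runs a backward induction from $\Gamma_{b-1}$ to $\Gamma_a$, carrying a staircase $S$ (initially the single link~$\ell$) and repairing it after undoing each slide. Undoing a slide with center link $\ell_s$ partitions the plane into regions $A,B,C,D$ (with the post-slide drawing additionally containing empty regions $F,G$); a case analysis on whether $\ell_s$ is vertical or horizontal, and on the sign of its spirality, shows that $S$ either survives, or is repaired by inserting $\ell_s$ between the two sub-staircases it splits, or is pruned to a subsequence. The staircase can thus accumulate links from \emph{different} wires, each of maximum absolute spirality~$s$; this is the mechanism your single-wire approach via Lemma~\ref{lem:spiral} cannot replicate.
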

\begin{proof}
\emph{(Sketch)}
Assume w.l.o.g. that only one inversion occurs and it occurs from $\Gamma_{b-1}$ to $\Gamma_b$.
By Lemma~\ref{lem:order}, $v$ and $w$ are separated by a link $\ell$ in $\Gamma_{b-1}$. Link $\ell$ must have maximum absolute spirality as it was selected for the morph. We now prove inductively that a staircase exists in all drawings from $\Gamma_a$ to $\Gamma_{b-1}$ by ``moving backwards'' through the morph.
To this end we define four rectangular regions $A,B,C,D$ surrounding $\ell_s$ in $\Gamma_{i-1}$ (see Fig.~\ref{fig:negative_small}). During the linear slide from $\Gamma_{i-1}$ to $\Gamma_i$ two new regions $F$ and $G$ are created, which cannot contain vertices. Using these rectangular regions and a case distinction on the type of linear slide, we can argue inductively that a staircase separating $v$ and $w$ must also exist in $\Gamma_{i-1}$.
\end{proof}

\subsection{Inversions}
We show that every pair of vertices is inverted along at most one axis during the morph from $\Gamma_a$ to $\Gamma_b$.
We then prove that $\Gamma_a$ has spirality one relative to $\Gamma_b$.

\begin{lemma}\label{lem:xORy}
%Let $\Gamma_I\morpheq \Gamma_a \morph \Gamma_b\morpheq \Gamma_O$, $\Gamma_a$ has spirality $s$, and $\Gamma_b$ is the first drawing with spirality $s-1$.
Two vertices $v$ and $w$ can be inverted along only one axis during the morph from $\Gamma_a$ to $\Gamma_b$.
\end{lemma}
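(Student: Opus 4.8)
\textbf{Proof plan for Lemma~\ref{lem:xORy}.}

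The plan is to argue by contradiction: suppose $v$ and $w$ are both $x$-inverted and $y$-inverted during the morph from $\Gamma_a$ to $\Gamma_b$. By Lemma~\ref{lem:staircase}, the first $x$-inversion forces a horizontal staircase $H$ of maximum-spirality links separating $v$ and $w$ in $\Gamma_a$, and the first $y$-inversion forces a vertical staircase $V$ of maximum-spirality links separating $v$ and $w$ in $\Gamma_a$. Note these staircases live in the \emph{same} drawing $\Gamma_a$, so we get strong geometric constraints. First I would unpack what "separated by a horizontal staircase" means: $v$ lies in the vertical strip of the first (leftmost, topmost) link of $H$ and above $H$, while $w$ lies in the vertical strip of the last link and below $H$ — and symmetrically $v$ is in the horizontal strip of the first link of $V$ and (say) left of it, $w$ in the horizontal strip of the last link and right of it (or the mirror configuration). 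In $\Gamma_a$ all wires are equivalent to the straight-line wires of $\Gamma_O$, so any two links of the same wire set do not cross and links of opposite sets cross exactly once; this is the structural fact I would lean on.

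The key step is to derive a contradiction from the coexistence of $H$ and $V$. Both staircases consist of maximum-spirality links, hence (by Lemma~2 of~\cite{vanGoethemVerbeek2018}, invoked earlier) intersecting links among them share the same spirality value; but $H$ is horizontal and $V$ is vertical, so if a link of $H$ and a link of $V$ belong to wires of opposite sets they cross exactly once and must have equal spirality. I would trace the "spiral" structure guaranteed by Lemma~\ref{lem:spiral}: the horizontal staircase, being a monotone run of same-sign-spirality horizontal links, is part of a spiral whose links cross a vertical line $L_H$ in the prescribed top-to-bottom order; similarly the vertical staircase gives a spiral crossing a horizontal line $L_V$. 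The geometric point is that $v$ and $w$ are pinned into opposite "ends" of both spirals simultaneously, and following the wire from $v$'s side to $w$'s side along $H$ versus along $V$ yields two routings that would have to cross each other more than once (or cross a same-set wire), contradicting equivalence with $\Gamma_O$. Concretely I expect to show that the vertical strip of $H$'s first link and the horizontal strip of $V$'s last link, together with the positions of $v$ and $w$, force a link of $V$ to pass through the bounding box of the staircase $H$ in a way that creates a double crossing of two wires.

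The main obstacle I anticipate is the case analysis on the four orientations: $H$ can be a downward or upward staircase, $V$ a leftward or rightward one, and within each combination the relative placement of $v$ (above/below $H$, left/right of $V$) and $w$ must be pinned down; only some combinations are geometrically realizable and the contradiction is extracted differently in each. A clean way to reduce cases is to observe that an $x$-inversion of $v,w$ means their $x$-order flips, and a $y$-inversion means their $y$-order flips — so between $\Gamma_a$ and $\Gamma_b$ the pair $(v,w)$ goes from one open quadrant (relative to each other) to the diagonally opposite one; meanwhile in $\Gamma_O$ the order is fixed, so one of the two inversions is actually an inversion \emph{back towards} $\Gamma_O$'s order and the other is an inversion \emph{away from} it, which constrains the signs of the spiralities of $H$ and $V$ (one positive, one negative, say). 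That sign mismatch is, I believe, the lever: a positive-spirality horizontal staircase and a negative-spirality vertical staircase separating the same pair cannot coexist in a drawing whose wires are equivalent to a straight-line arrangement, because the induced spirals wind in incompatible directions and would force an extra crossing between an lr-wire and a tb-wire. I would finish by making that winding argument precise using the ordered crossing sequences from Lemma~\ref{lem:spiral} applied to both $L_H$ and $L_V$, and concluding that at most one axis can be inverted.
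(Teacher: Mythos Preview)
Your plan has a genuine gap and diverges from the paper's argument in a way that matters. You want to invoke Lemma~\ref{lem:staircase} twice on the full interval $[\Gamma_a,\Gamma_b]$ to obtain both a horizontal staircase $H$ and a vertical staircase $V$ in $\Gamma_a$. But look at how Lemma~\ref{lem:staircase} is actually proved: the backward induction treats $v$ and $w$ as zero-length horizontal links of the staircase and explicitly assumes the $y$-order of $v,w$ is fixed throughout the submorph. If a $y$-inversion also occurs in that interval, the induction does not go through as written, so invoking the lemma for the $x$-inversion on the full interval quietly presupposes what you are trying to prove. The paper sidesteps this circularity: it arranges (w.l.o.g.) that the $y$-inversion happens at the very first step $\Gamma_a\to\Gamma_{a+1}$, and then applies Lemma~\ref{lem:staircase} only on $[\Gamma_{a+1},\Gamma_b]$, where the $y$-order \emph{is} constant. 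This yields a downward horizontal staircase in $\Gamma_{a+1}$, not in $\Gamma_a$.

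The contradiction is then extracted not via spirals or crossing counts but by re-using the region analysis from the proof of Lemma~\ref{lem:staircase}: the single vertical slide $\Gamma_a\to\Gamma_{a+1}$ has regions $A,B,C,D$ in $\Gamma_a$ and creates empty regions $F,G$ in $\Gamma_{a+1}$; since $F,G$ contain no vertices, the staircase in $\Gamma_{a+1}$ must have a link ending on the right boundary of $B$ and one starting on the left boundary of $C$, with the former at height at least that of the latter. Because a vertical slide preserves the $y$-order along any vertical line, this forces $v.y>w.y$ already in $\Gamma_a$, contradicting the assumed $v.y<w.y$. Your proposed machinery is off-target here: Lemma~\ref{lem:spiral} concerns the spiral structure of a \emph{single} wire and says nothing about staircases, whose links may come from many different wires; and your sign-mismatch claim (one staircase positive, the other negative spirality) is neither justified nor needed.
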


\begin{lemma}\label{lem:equivalent}
%Let $\Gamma_I\morpheq \Gamma_a \morph \Gamma_b\morpheq \Gamma_O$, $\Gamma_a$ has spirality $s$, and $\Gamma_b$ is the first drawing with spirality $s-1$.
Each vertical (horizontal) line in $\Gamma_b$ not crossing a vertex, can be matched to a $y$- ($x$-)monotone wire in $\Gamma_a$.
\end{lemma}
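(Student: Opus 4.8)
My plan is to reduce the lemma to a single ordering statement about the vertices of $\Gamma_a$, prove that statement using Lemmas~\ref{lem:xORy} and~\ref{lem:staircase}, and then build the wire by transporting $L$ back through the morph and straightening it. Fix a vertical line $L$ in $\Gamma_b$ that avoids every vertex, and let $V_L$ and $V_R$ be the vertices of the drawing lying to the left and to the right of $L$, so that $v.x<u.x$ in $\Gamma_b$ for all $v\in V_L$, $u\in V_R$. Note that a $y$-monotone wire has spirality at most $1$ (its vertical links all have spirality $0$ and its horizontal links spirality $\pm 1$), so this lemma is exactly the ``vertical half'' of the statement that $\Gamma_a$ has spirality one relative to $\Gamma_b$.

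The core of the argument is the following \emph{no-conflict} property: in $\Gamma_a$, if $v\in V_L$, $u\in V_R$ and $v.y=u.y$, then $v.x<u.x$. I would prove it by contradiction. If $v.x\ge u.x$ in $\Gamma_a$, then, since $v.x<u.x$ in $\Gamma_b$, the sign of $v.x-u.x$ changes during the morph and the pair $(v,u)$ is $x$-inverted. By Lemma~\ref{lem:xORy} the pair is then never $y$-inverted, so the sign of $v.y-u.y$ is the same in every drawing of the morph, and moreover, its first inversion being an $x$-inversion, Lemma~\ref{lem:staircase} applies: $v$ and $u$ are separated by a horizontal staircase in $\Gamma_a$. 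By definition, being separated by such a staircase places one of the two vertices strictly above the top link and the other strictly below the bottom link of a $y$-monotone chain of horizontal links, so $v.y\ne u.y$ in $\Gamma_a$ --- a contradiction. This settles the no-conflict property, and it is the clean part of the proof.

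To obtain the wire I would use that the morph $\Gamma_a\to\Gamma_b$ induces an ambient isotopy of the plane carrying $\Gamma_a$ to $\Gamma_b$; pulling $L$ back along it yields a simple curve $W$ in $\Gamma_a$ that matches $L$ --- same vertex partition $V_L\mid V_R$, same cyclic sequence of crossed segments, and in particular at most one crossing per segment, since a straight line meets a straight segment at most once. It remains to homotope $W$, within its equivalence class, into a $y$-monotone orthogonal polyline, and this is the step I expect to be the main obstacle. The idea is to remove ``caps'' of $W$ one at a time: a cap (a maximal subarc bulging above some height with a single highest point) bounds, together with a horizontal chord, a region $R$; since $W$ crosses each edge at most once, $R$ cannot trap a nontrivial piece of the drawing, only vertices. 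If $R$ has no vertex the cap flattens by an innermost-bigon move; if $R$ traps vertices, the no-conflict property is exactly what is needed, because at every height the $V_L$-vertices precede the $V_R$-vertices, so a $y$-monotone routing separating $V_L$ from $V_R$ past the topmost trapped vertex exists and replacing the cap by it does not alter the partition. Each such step lowers the heights of the local $y$-maxima, and a potential argument on the (finitely many) peak heights makes the process terminate; the precise surgery and the bookkeeping of segment crossings are left to the appendix, and the horizontal case is symmetric.
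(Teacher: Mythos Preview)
Your no-conflict property is correct and the derivation via Lemma~\ref{lem:staircase} is clean, but the property is too weak to drive the construction. It only constrains pairs of \emph{vertices} that happen to share a $y$-coordinate in $\Gamma_a$; it says nothing about vertical edges, which occupy whole $y$-intervals, nor about the order in which horizontal edges must be met. The paper proves the strictly stronger statement that along \emph{every} horizontal line $L_\rightarrow$ in $\Gamma_a$ the vertices \emph{and the vertical edges} of $\mathcal{S}_L$ all precede those of $\mathcal{S}_R$. This is obtained from Lemma~\ref{lem:xORy} together with the planarity of the morph (an endpoint caught in the horizontal strip of a wrong-side vertical edge would have to pass through that edge), and a separate argument then fixes the order of the horizontal-edge crossings.

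The gap surfaces in the cap-removal. Your assertion that ``$R$ cannot trap a nontrivial piece of the drawing, only vertices'' is false: a vertical segment can enter $R$ through the horizontal chord without ever being crossed by $W$, so $R$ can and generally will contain edge pieces. Replacing the cap by a $y$-monotone arc must preserve the full segment-crossing sequence, not just the vertex partition, and your no-conflict property gives no control over that. For instance it does not by itself exclude a vertex of $V_L$ lying to the right of a vertical edge whose endpoints are both in $V_R$, at a height strictly between them; any $y$-monotone separator of the vertex partition is then forced to cross that edge twice. This configuration is in fact impossible, but ruling it out is precisely the paper's planarity argument, not a consequence of your property. What you call ``bookkeeping of segment crossings'' and defer to the appendix is therefore not bookkeeping---it is the substance of the lemma, and carrying it out amounts to proving the paper's interval ordering in some form.
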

\begin{proof}
\emph{(Sketch)}
Consider a vertical line $L_\downarrow$ in $\Gamma_b$ not intersecting any vertex.
Line $L_\downarrow$ partitions the set of vertices and vertical edges in $\Gamma_b$ into two subsets $\mathcal{S}_L$ and $\mathcal{S}_R$.
Consider a horizontal line $L_\rightarrow$ in $\Gamma_a$ and consider the maximal intervals formed along it by elements from the same set $\mathcal{S}_L$ or $\mathcal{S}_R$ (see Fig.~\ref{fig:intervals2}).
Set $\mathcal{S}_L$ and $\mathcal{S}_R$ form exactly two maximal intervals along $L_\rightarrow$.
Thus a $y$-monotone line exists correctly splitting $\mathcal{S}_L$ and $\mathcal{S}_R$.
We can show that this $y$-monotone line must intersect horizontal edges in the correct order as well.
\end{proof}

\begin{lemma}\label{lem:spirOne}
%Let $\Gamma_I\morpheq \Gamma_a \morph \Gamma_b\morpheq \Gamma_O$, $\Gamma_a$ has spirality $s$, and $\Gamma_b$ is the first drawing with spirality $s-1$. 
Drawing $\Gamma_a$ has spirality one relative to $\Gamma_b$.
\end{lemma}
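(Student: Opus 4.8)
The plan is to exhibit an equivalent set of wires for $\Gamma_a$, taken relative to $\Gamma_b$, in which every wire is monotone, and then to observe that a monotone wire necessarily has spirality at most one. The wires that define spirality relative to $\Gamma_b$ are the horizontal and vertical lines of $\Gamma_b$ separating consecutive vertices in the $x$- and $y$-order. By Lemma~\ref{lem:equivalent}, together with its mirror image for horizontal lines, each such vertical line is matched in $\Gamma_a$ by a $y$-monotone tb-wire and each such horizontal line by an $x$-monotone lr-wire, so these are the candidate wires.

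Next I would bound the spirality of a monotone wire. Take an $x$-monotone lr-wire $w$ with links $\ell_1,\dots,\ell_k$. Since $w$ crosses the drawing from left to right we may assume it begins and ends with a rightward horizontal link, so $s(\ell_1)=0$, and $w$ has no leftward link; hence the orientations of $\ell_1,\dots,\ell_k$ alternate between ``rightward'' and ``up or down''. Each vertical link of $w$ is flanked by two rightward links, and the turn into it and the turn out of it are a left turn and a right turn in some order, so their contributions to the spirality cancel. Consequently every horizontal link of $w$ has spirality $0$ and every vertical link has spirality $\pm1$, so $w$ has spirality at most one; the symmetric argument gives the same bound for every $y$-monotone tb-wire.

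It remains to choose these monotone wires simultaneously so that they form an equivalent set, i.e.\ matched wires of the same family do not cross and matched wires of different families cross exactly once. Here I would follow the incremental scheme from the proofs of Lemmas~\ref{lem:horizontalWires} and~\ref{lem:completeWires}: first fix all $x$-monotone lr-wires, which can be realised in $\Gamma_a$ pairwise non-crossing because the corresponding horizontal lines of $\Gamma_b$ are pairwise non-crossing and linearly ordered by height; then, in each region of $\Gamma_a$ between consecutive lr-wires, route the $y$-monotone tb-wires so that each crosses each lr-wire exactly once, which is possible because every wire already splits the vertices of $\Gamma_a$ as the corresponding line splits them in $\Gamma_b$ (Lemma~\ref{lem:equivalent}). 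By Lemma~2 from~\cite{vanGoethemVerbeek2018} intersecting links of matched wires carry the same spirality, consistently with the bound of the previous paragraph. The resulting equivalent set has spirality at most one, hence $\Gamma_a$ has spirality at most one relative to $\Gamma_b$; and since iteration $s$ genuinely lowers the spirality relative to $\Gamma_O$ we have $\Gamma_a\neq\Gamma_b$, so the set contains a bent wire and its spirality is exactly one.

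The main obstacle is the combining step: an $x$-monotone curve and a $y$-monotone curve need not cross exactly once, so the correct crossing pattern cannot be read off from monotonicity alone. One has to use, in addition, that every candidate wire already separates the vertices of $\Gamma_a$ correctly and that the drawing together with all the wires is planar --- precisely the reasoning already carried out for $\Gamma_O$ in Section~\ref{sec:disconnected}, which is why I would reuse Lemmas~\ref{lem:horizontalWires}--\ref{lem:completeWires} rather than redo it. The remaining steps are short by comparison.
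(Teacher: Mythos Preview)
Your overall strategy matches the paper's: invoke Lemma~\ref{lem:equivalent} to get individually monotone wires in $\Gamma_a$, observe that a monotone wire has spirality at most one, and then argue that these monotone wires can be chosen simultaneously to form an equivalent set. The gap is in this last step, and your proposed fix --- reusing Lemmas~\ref{lem:horizontalWires}--\ref{lem:completeWires} --- does not close it.

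Lemmas~\ref{lem:horizontalWires} and~\ref{lem:completeWires} do something different: they take \emph{shortest} wires within homotopy classes that are already equivalent by construction, and bound their spirality by $O(n)$ via a layer-counting argument. They never prove monotonicity, and their output bound is $O(n)$, not $1$. Applied to $\Gamma_a$ relative to $\Gamma_b$ they would give you an equivalent set of spirality $O(n)$, which is useless here. Conversely, your sentence ``the lr-wires can be realised pairwise non-crossing because the corresponding horizontal lines of $\Gamma_b$ are non-crossing and ordered by height'' is a statement about $\Gamma_b$, not about the monotone curves you chose in $\Gamma_a$; two $x$-monotone curves that induce the same vertex partition can still cross.

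The paper handles the combining step with two dedicated arguments that you are missing. For two wires of the same family that cross twice, the enclosed region is vertex-free (both wires induce the same partition), so one reroutes locally and stays monotone. For an $x$-monotone lr-wire and a $y$-monotone tb-wire crossing three times, the paper shows the two enclosed pockets each contain a vertex, and these two vertices would then differ in both $x$- and $y$-order between $\Gamma_a$ and $\Gamma_b$ --- contradicting Lemma~\ref{lem:xORy}. That invocation of Lemma~\ref{lem:xORy} is the crux, and nothing in Section~\ref{sec:disconnected} substitutes for it.
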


\subsection{Single linear morph}
%An \emph{exiting segment} is a segment the crosses on of the boundaries of the cell.
%We define the orientation of the segment as the orientation in $\Gamma_O$.
%\arthur{improve and/or move somewhere else}
We now show that any two planar orthogonal drawings $\Gamma_i$ and $\Gamma_j$, where $\Gamma_i$ has spirality one relative to $\Gamma_j$, can be morphed into each other using a single linear morph while maintaining planarity.
% The resulting linear morphs are not based on linear slides anymore.
% Consequently we may use the properties of the wire computation, but not any properties of the morph induced by linear slides along wires.
Two drawings are \emph{shape-equivalent} 
%(\cite{DiBattista1998}, also called parallel in~\cite{Biedl2013}) 
if for each edge the sequence of left and right turns is identical and the orientation of the initial segment is identical in both drawings.
We say two drawings are \emph{degenerate shape-equivalent} if edges may contain zero-length segments but an assignment of orientations to the segments exists that is consistent with both drawings.
Two (degenerate) shape-equivalent drawings are per definition also unified.
% As additional bends are introduced only when a maximum absolute spirality link crossing an edge is reduced, 
We can make $\Gamma_a$ degenerate shape-equivalent to $\Gamma_b$ by adding zero-length edges whenever maximum absolute spirality links in $\Gamma_a$ cross an edge.
We say two \emph{points} $p$ and $q$ on the drawing are \emph{split} by a wire when $p$ and $q$ lie on different sides of the wire.

\begin{lemma}\label{lem:singleMorph}
Let $\Gamma_I$ and $\Gamma_O$ be two degenerate shape-equivalent drawings, where $\Gamma_I$ has spirality one.
There exists a single linear morph from $\Gamma_I$ to $\Gamma_O$ that maintains planarity and orthogonality.
\end{lemma}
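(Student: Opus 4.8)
The plan is to exhibit the linear morph explicitly as the straight-line interpolation of all vertices and bends from $\Gamma_I$ to $\Gamma_O$, and then argue that no two edges cross during this interpolation. Since $\Gamma_I$ and $\Gamma_O$ are degenerate shape-equivalent, every segment keeps its orientation throughout the interpolation (a horizontal segment stays horizontal, a vertical one stays vertical, and a zero-length segment may appear or disappear but never flips), so orthogonality is automatic. The real content is planarity: I need to show that the unique moment at which two non-adjacent segments could start to overlap does not occur, or rather that the interpolation is set up so that all such potential crossings are resolved simultaneously at $t=1$.

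The key structural fact I would lean on is that $\Gamma_I$ has spirality one relative to $\Gamma_O$. Concretely, I would first invoke Lemma~\ref{lem:spirOne} / the setup preceding this lemma: because the maximum absolute spirality is one, the wires of $\Gamma_I$ are ``almost straight,'' and the only obstructions to a direct morph are segments crossed by a maximum-spirality link. By adding zero-length edges at exactly those crossings (as the paragraph before the lemma describes), I ensure $\Gamma_I$ is degenerate shape-equivalent to $\Gamma_O$. The plan is then to show that for each wire $w$, the set of vertices and bends on one side of $w$ in $\Gamma_O$ is exactly the set on the corresponding side in $\Gamma_I$ — this is where Lemma~\ref{lem:equivalent} (each line in $\Gamma_b$ matches a monotone wire in $\Gamma_a$) and the single-axis inversion property (Lemma~\ref{lem:xORy}) do the work. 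Consequently, during the linear interpolation, the relative $x$-order (resp. $y$-order) of any two vertices changes at most once, and all such changes happen ``in the same direction'' as dictated by the wire structure.

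With that in hand, I would argue planarity by a standard sweep/continuity argument: a crossing between two segments during the morph would correspond to two vertices or bends becoming equal in one coordinate at some time $t_0 < 1$ while the other coordinate interleaves, i.e., an inversion strictly before $t=1$. By Lemma~\ref{lem:order} and Lemma~\ref{lem:staircase}, any such inversion is witnessed by a staircase of maximum-spirality links separating the two points in $\Gamma_I$; but since spirality is exactly one, such a ``staircase'' is a single link, and the zero-length edges we inserted at precisely these links mean the two points are already at equal coordinate in $\Gamma_I$ — so the inversion, if it happens, is already completed at $t=0$ or only completes at $t=1$, never strictly in between. Hence no intermediate drawing has a crossing, and the single linear interpolation maintains planarity and orthogonality.

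The main obstacle I expect is the bookkeeping in the last step: ruling out \emph{all} potential crossings uniformly, including crossings between a horizontal and a vertical segment (not just two parallel ones) and crossings involving the newly inserted zero-length edges, which can momentarily ``open up.'' The subtlety is that a zero-length segment in $\Gamma_I$ that grows to positive length in $\Gamma_O$ sweeps across a region, and I must check that region is empty of other drawing elements throughout — this is exactly what the region-based analysis ($F$ and $G$ cannot contain vertices, cf. the sketch of Lemma~\ref{lem:staircase}) is designed to guarantee, but transporting it to the degenerate shape-equivalent setting and to crossings of perpendicular segments requires care. I would handle this by a case analysis on the orientations of the two interacting segments, in each case reducing to a one-dimensional statement about coordinate orders that follows from shape-equivalence plus spirality one.
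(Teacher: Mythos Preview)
Your setup is right: the morph is the straight-line interpolation, and orthogonality follows from degenerate shape-equivalence. The gap is in the planarity argument.

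You invoke Lemmata~\ref{lem:order}, \ref{lem:staircase}, and \ref{lem:xORy} to control inversions during the interpolation, but those lemmata are proved for the \emph{sequence of slides} from $\Gamma_a$ to $\Gamma_b$, not for the direct linear interpolation. The paper's full proof flags exactly this point: ``Note that this is not excluded as we do not base our analysis on linear slides.'' So you cannot conclude, from those lemmata, that two points $p,q$ are inverted along at most one axis under the interpolation; a priori both-axis inversion of points (not vertices) is possible, and that is precisely what you must rule out. Your further claim that the inserted zero-length edges force any inversion to occur at $t\in\{0,1\}$ is also incorrect: a zero-length segment is inserted only where a maximum-spirality link crosses an edge, which says nothing about the coordinates of an arbitrary pair of points $p,q$ on two unrelated segments.

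The paper's argument instead works directly from the spirality-one hypothesis. Suppose $p$ and $q$ coincide at some $0<t<1$; then they are both $x$- and $y$-inverted between $\Gamma_I$ and $\Gamma_O$. Pass to suitable segment endpoints $r,s$ (chosen according to the orientations of the segments through $p$ and $q$; this is the case analysis you allude to at the end). Then $r$ and $s$ have distinct $x$- and $y$-coordinates in $\Gamma_O$, hence are separated there by both a tb-wire and an lr-wire. The equivalent wires in $\Gamma_I$ are $y$- and $x$-monotone respectively (this is what spirality one means), and they separate $r,s$ the same way. But the inversion forces these two monotone wires to cross at least three times in $\Gamma_I$, contradicting equivalence with $\Gamma_O$ where they cross once. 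That three-crossing contradiction is the missing idea in your proposal.
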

\begin{proof}
\emph{(Sketch)} The partition of the drawing by all wires defines \emph{cells}: regions of the plane not split by any wire. For each cell containing at least one bend or vertex, we can linearly interpolate all vertices and bends in $\Gamma_I$ to the unique vertex or bend location in $\Gamma_O$.
This directly defines a linear morph between $\Gamma_I$ and $\Gamma_O$.
To argue planarity of this morph, we assume for contradiction that there exist two points $p$ and $q$ on an edge or vertex of the drawing that coincide during the morph (excluding $\Gamma_I$ and $\Gamma_O$).
Then $p$ and $q$ must be $x$- and $y$-inverted in $\Gamma_O$ compared to $\Gamma_I$ and there must be two vertices $r$ and $s$ that are $x$- and $y$-inverted and split by at least a tb-wire and a lr-wire.
As the lr-wire and the tb-wire are monotone they cross at least three times (see Fig.~\ref{fig:monotone}). Contradiction.
\end{proof}

\begin{figure}[t]
\centering
\includegraphics[scale=0.85]{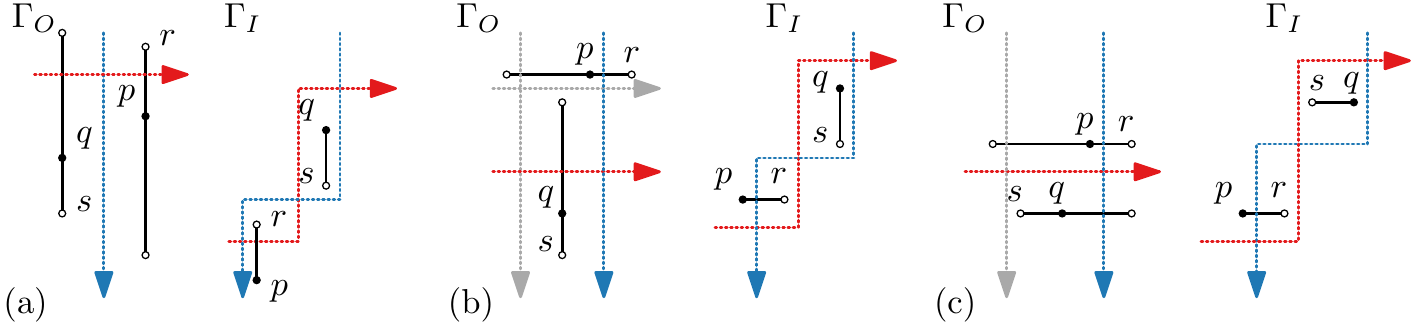}
\caption{\subcap{a} Two points $p$ and $q$ on vertical segments of the drawing that are inverted along both axes imply wires in $\Gamma_I$ that are not equivalent to $\Gamma_O$.  \subcap{b} Points $p$ and $q$ on a horizontal and vertical segment. \subcap{c} Points $p$ and $q$ on horizontal segments.}
\label{fig:monotone}
\end{figure}

% An iteration is planar when we can linearly interpolate the first and last drawing while maintaining planarity.
% As proven every iteration of the morph described in Section~\ref{sec:disconnected} is planar.

\begin{theorem}\label{thm:sMorphs}
Let $\Gamma_I$ and $\Gamma_O$ be two unified planar orthogonal drawings of a (disconnected) graph $G$, where $\Gamma_I$ has spirality $s$. We can morph $\Gamma_I$ into $\Gamma_O$ using exactly $s$ linear morphs while maintaining planarity and orthogonality.
\end{theorem}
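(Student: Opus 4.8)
The plan is to take the morph from Theorem~\ref{thm:linear} and collapse each of its iterations into one linear morph via Lemma~\ref{lem:singleMorph}. Fix the morph of Theorem~\ref{thm:linear} from $\Gamma_I$ to $\Gamma_O$ and, for $0\le k\le s$, let $\Gamma_k$ be the first drawing in this morph with spirality $k$ relative to $\Gamma_O$. Then $\Gamma_s=\Gamma_I$ (as $\Gamma_I$ has spirality $s$) and $\Gamma_0=\Gamma_O$, and the drawings $\Gamma_s,\Gamma_{s-1},\dots,\Gamma_0$ are precisely the drawings shared by consecutive iterations. It suffices to produce, for each $k$ with $1\le k\le s$, a single linear morph from $\Gamma_k$ to $\Gamma_{k-1}$ that maintains planarity and orthogonality; concatenating them yields $\Gamma_I=\Gamma_s\morph\Gamma_{s-1}\morph\cdots\morph\Gamma_1\morph\Gamma_0=\Gamma_O$ with exactly $s$ linear morphs, and a concatenation of planarity- and orthogonality-preserving morphs is again planarity- and orthogonality-preserving.

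For a fixed $k$, first note that $\Gamma_k$ has spirality one relative to $\Gamma_{k-1}$: for $k\ge2$ this is Lemma~\ref{lem:spirOne} with $\Gamma_a=\Gamma_k$ and $\Gamma_b=\Gamma_{k-1}$, while for $k=1$ it holds by the choice of $\Gamma_1$ since $\Gamma_0=\Gamma_O$. Now make $\Gamma_k$ degenerate shape-equivalent to $\Gamma_{k-1}$ using the augmentation described before Lemma~\ref{lem:singleMorph}, i.e.\ inserting a zero-length edge wherever a maximum-absolute-spirality link (of the wire set relative to $\Gamma_{k-1}$) crosses an edge of $\Gamma_k$. This leaves both drawings geometrically unchanged and orthogonal, so Lemma~\ref{lem:singleMorph} applies and gives a single planarity- and orthogonality-preserving linear morph from $\Gamma_k$ to $\Gamma_{k-1}$ (with $\Gamma_O$ as target when $k=1$). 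Since the inserted zero-length edges lie on edges of the drawing, removing them is a relabeling that does not change the drawing; undoing iteration $k$'s insertions before treating iteration $k-1$ guarantees that the augmentations of different iterations do not interfere, and that the $s$ linear morphs compose into a single morph of the original unified drawings.

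I expect two points to require the most care. The first is ensuring the count is exactly $s$ and not $s+1$: the argument must use that $\Gamma_1$ already has spirality one relative to $\Gamma_O$ and apply Lemma~\ref{lem:singleMorph} with $\Gamma_O$ itself as the target, instead of stopping at a spirality-zero drawing that might still differ from $\Gamma_O$. The second, and the main obstacle, is reconciling the iteration-local degenerate-shape-equivalence augmentations with the global requirement that the intermediate drawings be genuine drawings of $G$ — one must check that inserting and deleting the zero-length edges between consecutive iterations really is a morph-free relabeling so that the pieces glue together into one valid morph. The remaining work is routine bookkeeping on the iteration decomposition supplied by Theorem~\ref{thm:linear}.
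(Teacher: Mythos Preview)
Your proposal is correct and follows essentially the same approach as the paper: for each iteration, invoke Lemma~\ref{lem:spirOne} to get spirality one between consecutive drawings, augment to degenerate shape-equivalence via zero-length segments at the maximum-spirality crossings, and apply Lemma~\ref{lem:singleMorph}. Your extra care about the $k=1$ case and about undoing the zero-length insertions between iterations is sound bookkeeping that the paper leaves implicit; note that Lemma~\ref{lem:spirOne} already covers $k=1$ as well, so the case split is not strictly necessary.
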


%In the next section we reintroduce advanced versions of rerouting and simplification to prevent this and to reduce intermediate complexity to $O(s\cdot n)$ and finally $O(n)$.

\section{Linear complexity of intermediate drawings}\label{sec:linearComplexity}

Van Goethem and Verbeek~\cite{vanGoethemVerbeek2018} describe rerouting and a simplification operations that reduce the complexity of intermediate drawings to $O(n)$. These operations are not compatible with the batched linear morphs we described in Section~\ref{sec:noConstant}. Below we show how to adapt these operations to the batched setting. These adaptations come at the cost of a single additional linear morph.

\subsection{Rerouting}\label{sub:rerouting}

To avoid that the linear morphs introduce too many bends in a single iteration of the morph, we show how to route the wires such that only $O(n)$ complexity is added to the drawing in each iteration. The initial rerouting of the wires in $\Gamma_I$ increases the maximum spirality by one, but it prevents any increase of spirality during the morph.
%eliminates the need for further rerouting. 
Thus, using Theorem~\ref{thm:sMorphs}, $s+1$ morphs are sufficient to morph two equivalent drawings into each other while maintaining planarity and keeping complexity of the intermediate drawings to $O(n^2)$. 

We reroute the wires in $W_\downarrow$ and $W_\rightarrow$ as follows.
Consider an edge $e$ that is crossed by at least two wires in $\Gamma_I$.
By Lemma~9 from~\cite{vanGoethemVerbeek2018} all crossing links have the same spirality.
Assume w.l.o.g. that this spirality is positive, otherwise mirror the rotations and replace right by left. 
Let $\varepsilon$ be a small distance such that the $\varepsilon$-band above $e$ is empty except for the links crossing $e$ and that there is more than a $2\varepsilon$ distance between the right-most crossing link and the right-endpoint of $e$ (see Fig.~\ref{fig:rerouting_new2}(a)).

We insert an \emph{$s$-windmill} of all crossing wires within the $\varepsilon$-band above $e$ by rerouting the wires as follows. First disconnect all crossing links within the $\varepsilon$-band above $e$. Then reroute all wires in a parallel bundle to the right, beyond the right-most wire $w_r$ crossing $e$. Now we spiral the bundle using right turns until the spirality of the links reaches zero. Next we unwind the bundle again within the spiral. Finally we reconnect the wires by routing back parallel to $e$ to maintain the original crossing points (see Fig.~\ref{fig:rerouting_new2}(b)). This rerouting can be executed without introducing crossings between the wires. It does increase the spirality of the drawing by one.

\begin{figure}[tb]
\centering
\includegraphics[scale=0.85]{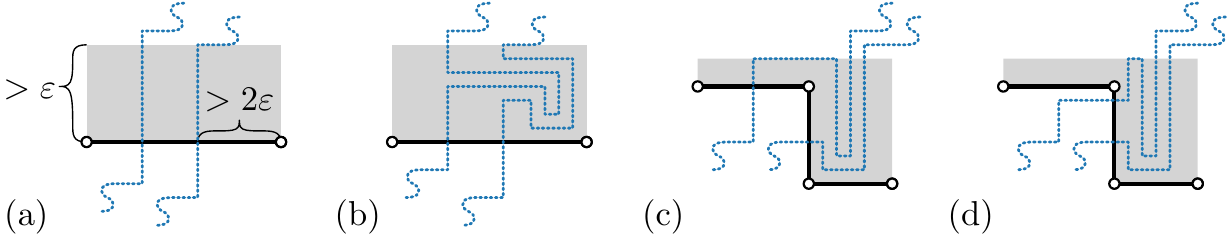}
\caption{\subcap{a} An $\varepsilon$-band adjacent to the edge. \subcap{b} Inserting an $s$-windmill. \subcap{c-d} Reroute wires after linear slide without introducing new crossings.}
\label{fig:rerouting_new2}
\end{figure}

We now change each iteration as follows. Consider a horizontal edge $e$ crossed by $k>1$ links of maximum absolute spirality $s$ (assuming $s > 0$) at the start of the iteration. Instead of performing a linear slide on all crossing links, we perform a single linear slide only on the rightmost crossing link. This slide creates a new vertical segment (see Fig.~\ref{fig:rerouting_new2}(c)). Thanks to the introduction of the $s$-windmill, we can easily reroute the other crossing wires to intersect the new vertical segment instead of the horizontal segment without introducing other crossings (see Fig.~\ref{fig:rerouting_new2}(d)). The newly created crossing links must have spirality $s-1$ as all links crossing the same segment have the same spirality (Lemma~9~from~\cite{vanGoethemVerbeek2018}). We can reduce all remaining spirality $s$ links without introducing additional complexity in the drawing.
\begin{lemma}\label{lem:bandInvariant}
At the start of iteration $i$ of the morph, all wires crossing an edge $e$ with links of spirality $i$ form an $i$-windmill in an empty $\varepsilon$-band next to $e$.
\end{lemma}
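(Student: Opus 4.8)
\textbf{Proof proposal for Lemma~\ref{lem:bandInvariant}.}
The plan is to prove the statement by induction on the iteration index $i$, showing that the invariant is established by the initial rerouting and then preserved by each iteration. The base case is the start of the first nontrivial iteration (iteration $s$, the maximum spirality): after the initial rerouting described above, every edge $e$ crossed by more than one link of maximum absolute spirality carries an $s$-windmill in an empty $\varepsilon$-band adjacent to $e$, by construction. Edges crossed by only a single maximum-spirality link are handled directly and need no windmill. So the invariant holds when iteration $s$ begins.

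For the inductive step, assume that at the start of iteration $i$ the invariant holds: every edge $e$ whose crossing links have spirality $i$ (and there are at least two such links) has them arranged as an $i$-windmill inside an empty $\varepsilon$-band next to $e$. I would trace through exactly what iteration $i$ does. By Theorem~\ref{thm:sMorphs} and the rerouting modification, the iteration performs, for each such edge $e$, a single linear slide on the rightmost crossing link, creating a new vertical segment, and then reroutes the remaining crossing wires to meet this new vertical segment instead (Fig.~\ref{fig:rerouting_new2}(c--d)). The key points to verify are: (1) after this step the links that used to cross $e$ now have spirality $i-1$ (this is exactly the spirality-reduction guarantee, combined with Lemma~9 of~\cite{vanGoethemVerbeek2018}, which forces all links crossing a common segment to share spirality); (2) these spirality-$(i-1)$ links now cross the newly created vertical segment, and the windmill structure on them is again a windmill — now an $(i-1)$-windmill — because the rerouting was designed to preserve the nested spiral shape while decrementing the turn count; and (3) the $\varepsilon$-band adjacent to the new vertical segment is empty. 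Emptiness follows because the original $\varepsilon$-band next to $e$ was empty, the linear slide of the rightmost link sweeps out a region that (by the staircase/region arguments of Lemma~\ref{lem:staircase}, in particular the ``new regions cannot contain vertices'' observation) introduces no vertices or bends into the relevant band, and $\varepsilon$ can be chosen small enough at the start of the iteration to stay clear of all other features.

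One subtlety I would address explicitly is that an edge $e'$ may \emph{acquire} crossing links of spirality $i-1$ during iteration $i$ even though it had none of spirality $i$ at the start — for instance the vertical segment created by the slide is itself such a ``new'' edge, and previously a link of spirality $i-1$ crossing some edge might now interact differently. For each edge that carries spirality-$(i-1)$ links at the \emph{end} of iteration $i$ (equivalently, the start of iteration $i-1$), I need to exhibit the $(i-1)$-windmill in an empty band. For the newly created vertical segments this is precisely what the rerouting construction produces. For pre-existing edges whose crossing links already had spirality $i-1$ before iteration $i$, those links were untouched by iteration $i$ (the iteration only acts on spirality-$i$ links), so if they satisfied the invariant relative to iteration $i-1$'s precondition we must check iteration $i$ did not disturb their band; again this reduces to choosing $\varepsilon$ small and invoking the emptiness-of-swept-regions property. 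I would also note that an edge crossed by exactly one spirality-$i$ link contributes, after its single slide, at most one spirality-$(i-1)$ link to the new configuration, so it trivially satisfies the ``at least two links $\Rightarrow$ windmill'' phrasing (vacuously or with a one-link windmill).

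The main obstacle I expect is point (3) together with the bookkeeping in the preceding paragraph: carefully arguing that the bands remain empty and non-overlapping across all edges simultaneously through a single batched linear slide. The linear slide of iteration $i$ moves many links at once (all rightmost crossing links of all relevant edges, plus whatever else the iteration's slide touches), so I must ensure the swept regions of these simultaneous motions do not intrude into each other's $\varepsilon$-bands. The cleanest way to handle this is to re-derive $\varepsilon$ at the \emph{start} of iteration $i$ as a function of the minimum feature distance in the current drawing (as is already done in the rerouting construction for the base case), rather than fixing one global $\varepsilon$; then emptiness of each band at the start of iteration $i-1$ follows from emptiness at the start of iteration $i$ plus the structural region lemmas. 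I would present this as a short invariant-maintenance argument and defer the pixel-level geometry to the appendix, consistent with the paper's proof-sketch style.
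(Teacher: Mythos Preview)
Your inductive structure is the same as the paper's, and your argument would ultimately work, but you are making it harder than necessary at the key step. The paper dispatches your ``main obstacle'' (point (3) and the band bookkeeping) in one line: as a corollary of Lemma~\ref{lem:order}, any slide on a link lying \emph{outside} the $\varepsilon$-band of $e$ cannot invert any pair of points inside that band, so the $k$-windmill next to $e$ is simply unaffected by all the other slides in the iteration. With that locality observation in hand, the paper never needs to re-derive $\varepsilon$ per iteration, never needs the staircase machinery of Lemma~\ref{lem:staircase}, and never worries about bands for different edges interfering---the argument is purely local to one windmill at a time. Your invocation of the ``new regions cannot contain vertices'' part of Lemma~\ref{lem:staircase} is pointing in the right direction but is heavier than what is actually needed.

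Two smaller points. First, the paper's base case is cleaner than yours: it takes $i=s+1$ (after the initial rerouting has raised the spirality by one), where the statement is vacuous because no edge is crossed by links of spirality $s+1$; your ``by construction'' base case at $i=s$ is fine but requires saying explicitly that the initial rerouting inserts windmills at \emph{all} multiply-crossed edges, not only those of maximum spirality. Second, your paragraph on edges that already carried spirality-$(i-1)$ links before iteration $i$ is exactly the place where the Lemma~\ref{lem:order} corollary does the work: those windmills were installed at the very beginning and survive all later iterations untouched until it is their turn, so no separate case analysis is needed.
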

\begin{lemma}\label{lem:spirOneRerouting}
Let $\Gamma_s$ be the first drawing of an iteration and $\Gamma^r_{s-1}$ the rerouted last drawing.
The spirality of $\Gamma_s$ relative to $\Gamma^r_{s-1}$ is one.
\end{lemma}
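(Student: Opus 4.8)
The statement to prove is that after rerouting the last drawing $\Gamma_{s-1}$ of an iteration into $\Gamma^r_{s-1}$ (by inserting $(s-1)$-windmills next to every edge crossed by multiple spirality-$(s-1)$ links), the first drawing $\Gamma_s$ of that iteration has spirality exactly one relative to $\Gamma^r_{s-1}$. The plan is to argue this in the same spirit as Lemma~\ref{lem:spirOne}: we already know from Lemma~\ref{lem:spirOne} that $\Gamma_s$ has spirality one relative to the \emph{un}rerouted $\Gamma_{s-1}$, so the heart of the matter is to show that replacing $\Gamma_{s-1}$ by $\Gamma^r_{s-1}$ does not increase this relative spirality. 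I would set up the same machinery used for Lemma~\ref{lem:spirOne}: relative spirality one is equivalent to saying that every axis-parallel line of $\Gamma^r_{s-1}$ (not through a vertex) can be matched to a monotone wire in $\Gamma_s$, and equivalently (by Lemma~\ref{lem:order} and Lemma~\ref{lem:xORy}) that no pair of vertices is inverted along both axes between the two drawings and that along any one axis the inverting staircase in $\Gamma_s$ has a monotone matching partner in $\Gamma^r_{s-1}$.

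\textbf{Step 1.} Observe that rerouting only changes the \emph{wires}, not the vertices or edges of the drawing: the sets $\mathcal{S}_L,\mathcal{S}_R$ of vertices and edge-segments induced by any line of $\Gamma^r_{s-1}$ coincide with those induced by the ``same'' line of $\Gamma_{s-1}$ (the windmill lives inside an empty $\varepsilon$-band and touches the drawing only at the original crossing points, so it reconnects edges at exactly the same places). Hence the vertex $x$- and $y$-orders of $\Gamma^r_{s-1}$ and $\Gamma_{s-1}$ are identical, and by Lemma~\ref{lem:spirOne} no pair of vertices is inverted along both axes between $\Gamma_s$ and $\Gamma^r_{s-1}$, and along any single axis the inverting staircase of Lemma~\ref{lem:staircase} still exists in $\Gamma_s$.

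\textbf{Step 2.} It remains to check that the monotone matching wire guaranteed inside $\Gamma_s$ (by the proof of Lemma~\ref{lem:spirOne}, built from the staircase) crosses the \emph{horizontal and vertical edges of $\Gamma^r_{s-1}$ in the correct cyclic order}, now with respect to the rerouted wire layout. Since the windmill is confined to an $\varepsilon$-band that contains no vertices and no other drawing edges, the only subtlety is the local ordering of crossings on the edge $e$ itself and on the short parallel ``reconnection'' segments of the windmill; these are combinatorially the same crossing points as before the rerouting, just reached via a detour that does not separate any vertex from any other vertex (the $\varepsilon$-band is empty). Therefore a line/wire that was monotone and correctly ordered relative to $\Gamma_{s-1}$ is still monotone and correctly ordered relative to $\Gamma^r_{s-1}$: the windmill detour can be absorbed into an $\varepsilon$-perturbation of the matching wire. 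I would make this precise with a picture analogous to Fig.~\ref{fig:monotone}, noting that a double inversion would again force an lr-wire and a tb-wire in $\Gamma_s$ to cross three times, contradicting equivalence with $\Gamma_O$ (equivalently with $\Gamma^r_{s-1}$, which is equivalent to it).

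\textbf{Main obstacle.} The delicate point is Step 2: one must rule out that the windmill, by spiralling down to spirality zero and back, reorders the crossings along $e$ in a way that breaks monotonicity of the matching wire in $\Gamma_s$. The key facts that make this go through are (i) all links crossing $e$ have the \emph{same} spirality $s-1$ (Lemma~9 of~\cite{vanGoethemVerbeek2018}), so the windmill treats them symmetrically and preserves their relative order along $e$, and (ii) the $\varepsilon$-band is empty of vertices, so any rearrangement of wires inside it is invisible to the separation/inversion structure of the vertex set. Combining Steps 1 and 2 with Lemma~\ref{lem:equivalent} (applied with $\Gamma^r_{s-1}$ in the role of $\Gamma_b$) yields that every axis-parallel line of $\Gamma^r_{s-1}$ matches a monotone wire in $\Gamma_s$, which by the characterization used in Lemma~\ref{lem:spirOne} means $\Gamma_s$ has spirality one relative to $\Gamma^r_{s-1}$.
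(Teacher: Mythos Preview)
Your proposal rests on a misreading of what $\Gamma^r_{s-1}$ is. You treat it as the ordinary last drawing $\Gamma_{s-1}$ of the iteration with $(s-1)$-windmills subsequently inserted into its wires, so that ``rerouting only changes the wires, not the vertices or edges of the drawing.'' But in the paper $\Gamma^r_{s-1}$ is the outcome of a \emph{different} iteration: instead of sliding on every maximum-spirality link that crosses an edge $e$, the rerouted iteration slides only on the rightmost such link and then reroutes the remaining wires onto the newly created vertical segment (Fig.~\ref{fig:rerouting_new2}(c)--(d)), after which the residual spirality-$s$ links are reduced. Hence $\Gamma^r_{s-1}$ carries only two extra bends per crossed edge, whereas the un-rerouted $\Gamma_{s-1}$ carries $2k$; the two are genuinely different drawings produced by different sequences of slides. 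Your Step~1 claim that their vertex $x$- and $y$-orders coincide is therefore neither obvious nor argued, and since Step~2 builds on Step~1, the reduction to Lemma~\ref{lem:spirOne} is unsupported.

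The paper does not compare $\Gamma^r_{s-1}$ to any un-rerouted drawing at all. It re-runs the chain Lemma~\ref{lem:staircase} $\Rightarrow$ Lemma~\ref{lem:xORy} $\Rightarrow$ Lemma~\ref{lem:equivalent} $\Rightarrow$ Lemma~\ref{lem:spirOne} directly for the rerouted morph from $\Gamma_s$ to $\Gamma^r_{s-1}$, and the only new ingredient needed is that the mid-iteration rerouting step preserves staircases: the link crossing $e$ and its successor are replaced by four links spanning the same total width and height, so the two horizontal replacement links can be substituted into any downward staircase containing the original link (passing to a subsequence if non-adjacent projections now overlap). Once Lemma~\ref{lem:staircase} survives rerouting, Lemmata~\ref{lem:xORy}--\ref{lem:spirOne} go through verbatim. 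If you wanted to rescue your reduction you would have to prove that the rerouted and un-rerouted iterations yield identical vertex orders, which is at least as much work as the paper's direct staircase argument.
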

\begin{proof}
\emph{(Sketch)} We can argue that rerouting wires does not eliminate staircases.
A link that is rerouted may have been part of a staircase, but the new links replacing it do not break any staircase properties.
As rerouting links maintains staircases, Lemmata~\ref{lem:staircase}-\ref{lem:spirOne} still apply.
\end{proof}
Drawing $\Gamma^r_{s-1}$ compared to $\Gamma_s$ contains two additional bends in each edge crossed by maximum absolute spirality links in $\Gamma_s$.
We can make $\Gamma_s$ and $\Gamma^r_{s-1}$ degenerate shape-equivalent by inserting an additional zero-length segment at the right-most (left-most for negative spirality) crossing link for each edge crossed by maximum absolute spirality links.
By Lemmata~\ref{lem:singleMorph} and \ref{lem:spirOneRerouting} we can morph the resulting $\Gamma_s$ into $\Gamma^r_{s-1}$ in a single linear morph while maintaining planarity.

As, independently of how many wires are crossing it, each edge only introduces two new bends, complexity increases by $O(n)$ during each iteration.
Thus the overall complexity is $O(s\cdot n)$.
We conclude that we can morph two drawings $\Gamma_I$ and $\Gamma_O$, where $\Gamma_I$ has spirality $s$, into each other using $s+1$ linear morphs while maintaining planarity and $O(s\cdot n)$ complexity of the drawing. 

\subsection{Simplification}\label{sub:simplification}
By using rerouting we can ensure that the complexity of the drawing increases by at most $O(n)$ in every iteration, but its complexity may still grow to $O(n^2)$ over $O(n)$ iterations. In this section we show how to simplify the intermediate drawings to ensure that the complexity after each iteration is $O(n)$. 

We again consider a single iteration starting with $\Gamma_s$ and ending with $\Gamma_{s-1}$.
Using rerouting we can find an alternative final drawing $\Gamma^r_{s-1}$ that also maintains planarity. We now introduce a \emph{redraw} step that further simplifies $\Gamma^r_{s-1}$ into a straight-line drawing $\Gamma'_{s-1}$ such that a linear morph from $\Gamma_s$ to $\Gamma'_{s-1}$ still maintains planarity. The redraw step works as follows.

For each vertex $v$ in $\Gamma^r_{s-1}$, consider a $6\varepsilon$-sized square box surrounding $v$ that contains only $v$ and a $3\varepsilon$-part of each outgoing edge from $v$. If an incident edge $e$ is crossed by a maximum absolute spirality link in $\Gamma_s$, then we reroute $e$ inside the $6\varepsilon$-box around $v$.
Specifically, for an edge $e$ leaving $v$ rightwards, we reroute $e$ within the $6\varepsilon$-box using the coordinates ($v, v+(0,-\varepsilon), v+(2\varepsilon,-\varepsilon), v+(2\varepsilon,0), v+(3\varepsilon,0)$) (see Fig.~\ref{fig:rerouteEdgesLocally}(a)).
Analogous rerouting can be done for edges leaving $v$ in other directions.
For an edge crossed by a negative spirality link invert the left and right turns.

\begin{figure}[tb]
\centering
\includegraphics[scale=0.85]{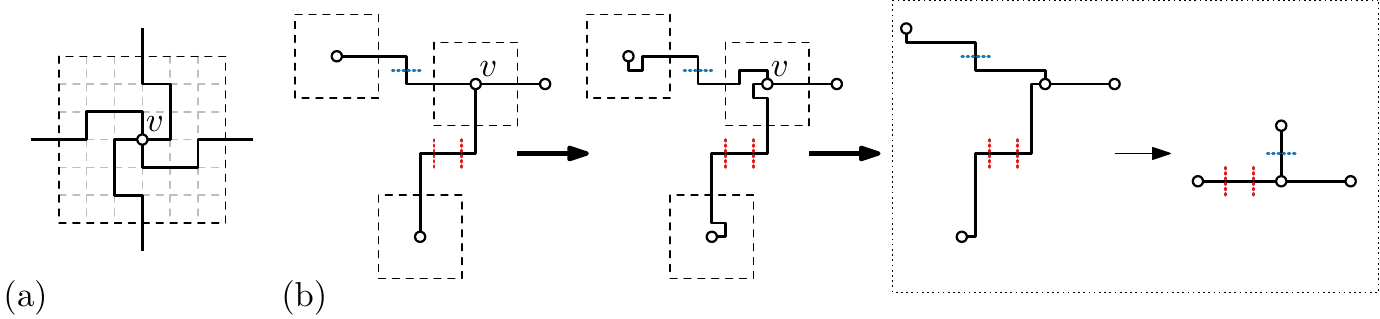}
\caption{\subcap{a} A $6\varepsilon$-box surrounding a vertex $v$ (dashed) with four redrawn edges. \subcap{b} Original drawing, rerouted drawing, and straightening the drawing.}
\label{fig:rerouteEdgesLocally}
\end{figure}

\begin{lemma}\label{lem:SafeRedraw}
We can redraw all edges in $\Gamma_{s-1}$ that were crossed by a maximum-spirality link in $\Gamma_s$ within $6\varepsilon$-boxes while maintaining planarity of the drawing.
\end{lemma}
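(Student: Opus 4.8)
The plan is to show that the local rerouting operation described just before the lemma statement can be applied \emph{simultaneously} to all relevant edges without creating any crossings, neither inside the $6\varepsilon$-boxes nor outside them. First I would observe that the $6\varepsilon$-boxes around distinct vertices are pairwise disjoint: the $\varepsilon$-bands used in the rerouting step of Section~\ref{sub:rerouting} were chosen small enough that no two vertices of $\Gamma^r_{s-1}$ are within distance $O(\varepsilon)$ of each other, so we may assume the same $\varepsilon$ works here and the boxes do not overlap. Consequently the redraw of an edge $e=(u,v)$ is carried out independently near $u$ and near $v$, and the ``middle'' portion of $e$ (outside both boxes) is left untouched. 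So it suffices to verify planarity \emph{within a single box} and to check that the rerouting does not move any point of an edge across the box boundary in a way that changes how the drawing meets the rest of the plane.

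The key steps, in order, are: (1) Fix a vertex $v$ and list its incident edges $e_1,\dots,e_d$ ($d\le 4$), each leaving $v$ in one of the four axis directions. For each $e_j$, either it is crossed by a maximum-spirality link of $\Gamma_s$ (then we apply the prescribed $5$-point polyline reroute inside the box) or it is not (then inside the box it stays a straight stub of length $3\varepsilon$). (2) Argue that the rerouted stub for $e_j$ stays strictly inside the $6\varepsilon$-box and starts at $v$ with the correct initial orientation and ends at the point $v + 3\varepsilon\cdot(\text{unit direction of }e_j)$, i.e.\ exactly where the untouched outer part of $e_j$ begins; this is immediate from the explicit coordinates $(v,\,v+(0,-\varepsilon),\,v+(2\varepsilon,-\varepsilon),\,v+(2\varepsilon,0),\,v+(3\varepsilon,0))$ and its rotations, all of which have both coordinates bounded by $3\varepsilon$ in absolute value. (3) Show the (at most four) rerouted stubs are pairwise non-crossing inside the box. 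Here I would use that each stub occupies essentially a thin ``L'' hugging the axis of its own edge, offset by $\varepsilon$ toward a fixed rotational side (right for positive spirality, left for negative), and that the four axis directions are separated; a short case check on which of the four directions are present, together with the fact that all offsets are on the same rotational side, shows the stubs are nested/disjoint and meet only at $v$. (4) Conclude that replacing, for every vertex and every qualifying edge, the original in-box portion by the rerouted stub yields a drawing $\Gamma'_{s-1}$ that is still planar: crossings outside all boxes are unchanged because the outer parts of edges and the non-qualifying stubs are untouched, and crossings inside a box are excluded by step~(3) (a box contains only the stubs of edges incident to its center). Finally note that the redraw removes the two extra bends that the windmill rerouting had introduced on each such edge, replacing them by the bends of the stub, so the resulting drawing is the straight-line-plus-stub drawing claimed in the surrounding text.

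The main obstacle I anticipate is step~(3), checking that the rerouted stubs around a single vertex of degree up to four genuinely do not interfere. Two subtleties need care: first, the stub for the edge leaving in direction $d_j$ bulges by $\varepsilon$ into the quadrant on one specific side of $d_j$, and one must confirm that this bulge does not enter the corridor reserved for the \emph{adjacent} edge direction — this is where the uniform choice of rotational side (and hence a consistent ``winding'' of all stubs around $v$) is essential, and where a negative-spirality edge incident to the same vertex as a positive-spirality edge would in principle be the worrying case; I would argue that spiralities of links crossing distinct edges at $v$ are governed by the global orientation so that in fact a single rotational convention applies, or otherwise that even mixed conventions keep the bulges in disjoint quadrants because they are only $\varepsilon$ wide while the box is $6\varepsilon$. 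Second, one must make sure the rerouted stub does not re-cross the box boundary prematurely or touch $v$ again; the explicit coordinates make this routine, but it must be stated. Everything else — disjointness of boxes, invariance of the exterior, bend count — follows directly from the setup in Section~\ref{sub:rerouting} and the explicit construction.
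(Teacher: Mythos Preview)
Your proposal has a genuine gap precisely at the point you flag as the main obstacle. The paper's proof rests on a technical lemma relating the spiralities of links crossing two edges incident at the same vertex: if $e,f$ meet at $v$ and $c\in\{-1,0,1\}$ encodes the turn from $f$ to $e$, then $s(\ell_e)-c+1 \ge s(\ell_f) \ge s(\ell_e)-c-1$. This inequality is what actually dispatches the two collision cases inside a box: (i) a redrawn right-going edge $e$ whose first segment $v\to v+(0,-\varepsilon)$ lands on a perpendicular \emph{non}-redrawn down-going edge $f$ is impossible, because with $c=-1$ the relation gives $s(\ell_f)\ge s(\ell_e)=s$, so $f$ is crossed by a maximum-spirality link too and is also redrawn; (ii) opposite edges with spiralities $+s$ and $-s$, whose redraws would share the same initial segment, are impossible because with $c=0$ one gets $s(\ell_f)\ge s-1>-s$.

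Both of your proposed escapes fail. The assertion that ``a single rotational convention applies'' is exactly the content of the spirality relation above, but you neither state nor prove it; it is not a global fact about the drawing, it is a local inequality that must be derived (the paper does this via a cycle-turn-count argument on the region bounded by two wires and the two edges). Your fallback, that ``mixed conventions keep the bulges in disjoint quadrants because they are only $\varepsilon$ wide while the box is $6\varepsilon$'', is simply false: with the explicit coordinates you quote, a positive-spirality right-going stub begins with the segment from $v$ to $v+(0,-\varepsilon)$, which coincides with an un-redrawn down-going stub and also with the first segment of a negative-spirality left-going stub. The box size is irrelevant; the overlap occurs at $v$. Likewise, your step~(3) case check only considers redrawn stubs against redrawn stubs with the same sign, whereas the actual dangerous pairs are redrawn-vs-unredrawn perpendicular edges and redrawn-vs-redrawn opposite edges of opposite sign.
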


\begin{proof}
\emph{(Sketch)}
We can establish a relation between the spiralities of two segments incident at the same vertex. Using this relation we can argue that, after redrawing, no two edges leave a vertex in the same direction. As a result, there are no planarity violations within the $6\varepsilon$-boxes around vertices.
\end{proof}

\begin{lemma}\label{lem:straightLineDrawing}
If $\Gamma_s$ is a straight-line drawing with spirality $s>0$ then there exists a straight-line drawing $\Gamma'_{s-1}$ with spirality $s-1$.
\end{lemma}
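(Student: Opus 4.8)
\textbf{Proof proposal for Lemma~\ref{lem:straightLineDrawing}.}

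The plan is to combine the $s$-windmill rerouting of Section~\ref{sub:rerouting} with the local $6\varepsilon$-box redraw of Lemma~\ref{lem:SafeRedraw} and then argue that the resulting drawing is simultaneously (i) straight-line and (ii) of spirality $s-1$ relative to $\Gamma_O$. Start from $\Gamma_s$, a straight-line drawing of spirality $s$. Apply one iteration of the batched morph to reach the last drawing $\Gamma_{s-1}$ of that iteration, which has spirality $s-1$ by Theorem~\ref{thm:sMorphs}; reroute it to $\Gamma^r_{s-1}$ so that every edge crossed by a maximum-spirality link of $\Gamma_s$ carries only the two extra bends of an $(s-1)$-windmill (Lemma~\ref{lem:bandInvariant}, Lemma~\ref{lem:spirOneRerouting}). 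Now invoke Lemma~\ref{lem:SafeRedraw} to push those extra bends into the $6\varepsilon$-boxes around vertices, obtaining $\Gamma'_{s-1}$; the lemma guarantees planarity and that no two edges leave a vertex in the same direction after redrawing.

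The key observation is that every bend in $\Gamma^r_{s-1}$ that is not already a bend of $\Gamma_s$ was introduced either by the windmill rerouting or by the linear slide on the rightmost crossing link, and both kinds lie on edges crossed by maximum-spirality links of $\Gamma_s$; hence Lemma~\ref{lem:SafeRedraw} handles exactly the set of edges that picked up new bends. After the redraw, each such edge is a polyline entirely contained in the union of two $6\varepsilon$-boxes plus a straight middle portion, and by the ``no two edges leave a vertex in the same direction'' conclusion we may straighten each of these short polylines to a single segment without creating crossings (the straightening of Fig.~\ref{fig:rerouteEdgesLocally}(b)): locally each edge is now monotone in one coordinate, its endpoints are fixed, and the $6\varepsilon$-boxes are pairwise disjoint and otherwise empty, so straight segments stay inside the boxes and cannot meet any other feature. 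Edges not crossed by a maximum-spirality link of $\Gamma_s$ were never subdivided and remain straight. Thus $\Gamma'_{s-1}$ is a straight-line drawing.

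It remains to check that $\Gamma'_{s-1}$ still has spirality $s-1$ relative to $\Gamma_O$, i.e.\ that an equivalent set of wires of spirality $s-1$ exists. The rerouting step only lowers the windmill spirality from $s$ to $s-1$ and does not change the homotopy class of any wire relative to the drawing, so $\Gamma^r_{s-1}$ has spirality $s-1$ by Lemma~\ref{lem:spirOneRerouting} together with Theorem~\ref{thm:sMorphs}. The redraw/straightening step moves edge bends only inside the empty $6\varepsilon$-boxes, so we can simultaneously perturb each wire that crossed such an edge to follow the straightened edge through the box; this changes a wire only within a box that contains no other vertex, bend, or wire link, hence preserves all crossing orders with the drawing and with the other wires, so the wire set stays equivalent to that of $\Gamma_O$ and its spirality is unchanged. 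Therefore $\Gamma'_{s-1}$ is a straight-line drawing of spirality $s-1$.

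\textbf{Main obstacle.} The delicate point is verifying that the local straightening inside the $6\varepsilon$-boxes genuinely introduces no crossing and genuinely leaves the wire set equivalent: one must confirm that the box around $v$ really contains nothing but $v$, a $3\varepsilon$-stub of each incident edge, and the wire stubs crossing those edges (which follows from choosing $\varepsilon$ small, but interacts with the earlier $\varepsilon$-band choice of the rerouting step, so the two $\varepsilon$'s must be reconciled), and that the turn-pattern bookkeeping for edges leaving $v$ in the four directions — handled by the spirality relation in the proof of Lemma~\ref{lem:SafeRedraw} — indeed rules out two straightened stubs overlapping. The rest is routine.
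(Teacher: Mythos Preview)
Your high-level plan (reroute, redraw, straighten) matches the paper, but the mechanics are misread and the resulting argument has a real gap.

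The redraw of Lemma~\ref{lem:SafeRedraw} does \emph{not} ``push the extra bends into the $6\varepsilon$-boxes.'' It \emph{adds} new bends near each endpoint of a crossed edge; the two bends created by the rerouting step (the right-turn/left-turn pair at the rightmost crossing link) remain in the interior of the edge, far from either vertex. Consequently your claim that ``each such edge is a polyline entirely contained in the union of two $6\varepsilon$-boxes plus a straight middle portion'' is false: after the redraw the bend sequence along the edge is $LLR\;RL\;LRR$, with the middle $RL$ sitting wherever the rightmost crossing link was. Your local straightening inside the boxes therefore cannot produce a straight-line drawing, and the appeal to ``edges are monotone in one coordinate, boxes are disjoint and empty'' does not reach the interior $RL$ pair at all.

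The paper's actual device is combinatorial: regroup the bend sequence as $LLRR\;LLRR$ and observe that the single wire-crossing on the edge lies on the segment \emph{between} the two $LLRR$ blocks. Every segment internal to an $LLRR$ block is wire-free, so one can remove each $LR$ pair by a (global) zigzag-eliminating slide on a wire-free segment; such a slide preserves planarity and, crucially, introduces no new bends in the wires, so spirality stays at $s-1$. Your spirality argument also breaks for the same reason: the wires cross the edge in its interior, not inside the $6\varepsilon$-boxes, so ``perturb each wire only within a box'' is not what happens. You need the wire-free-segment observation to justify that the straightening does not raise spirality.
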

\begin{proof} \emph{(Sketch)}
Let $\Gamma^r_{s-1}$ be the drawing obtained by applying rerouting to the last drawing of iteration $s$. 
Consider an edge $e$ crossed by maximum absolute spirality links in $\Gamma_s$.
Edge $e$ has three segments in $\Gamma^r_{s-1}$ due to the two introduced bends.
The first and last segment do not cross any wires.
We can apply the redraw step to $e$, resulting in three more segments at the start and end of $e$.
Finally we eliminate all additional segments of $e$ by performing zigzag-eliminating slides on these segments (see Fig.~\ref{fig:rerouteEdgesLocally}(b)). 
\end{proof}

\begin{lemma}\label{lem:straightSpirOne}
The spirality of $\Gamma'_{s-1}$ relative to $\Gamma_{s}$ is one.
\end{lemma}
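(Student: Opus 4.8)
The plan is to obtain the wires that witness spirality one for the pair $(\Gamma_{s},\Gamma'_{s-1})$ by transferring those that witness spirality one for the pair $(\Gamma_{s},\Gamma^r_{s-1})$, which exist by Lemma~\ref{lem:spirOneRerouting}. Recall that $\Gamma'_{s-1}$ arises from $\Gamma^r_{s-1}$ through the redraw step of Lemma~\ref{lem:SafeRedraw} and, where needed, the subsequent zigzag-eliminating slides of Lemma~\ref{lem:straightLineDrawing}; both only touch the drawing inside the $6\varepsilon$-boxes around vertices. First I would fix $\varepsilon$ smaller than every gap between consecutive $x$- or $y$-coordinates occurring in $\Gamma_{s}$, $\Gamma_{s-1}$, and $\Gamma^r_{s-1}$. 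With this choice, passing from $\Gamma^r_{s-1}$ to $\Gamma'_{s-1}$ inverts no pair of vertices, leaves the cyclic order of edges around every vertex intact (Lemma~\ref{lem:SafeRedraw}), and makes each redrawn sub-edge inherit exactly the wire- and edge-crossings of the portion it replaces. In particular $\Gamma_{s}$ still induces the same set of straight-line wires, and the arrangement of these wires with the drawing changes only locally.

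The second step is to argue that a set of wires realizing spirality one for $(\Gamma_{s},\Gamma^r_{s-1})$ deforms continuously -- following the local rerouting inside the $6\varepsilon$-boxes -- into an equivalent set of wires for $(\Gamma_{s},\Gamma'_{s-1})$ with no link changing its spirality: no bend is created or destroyed on a wire, and each wire still partitions the vertices identically and still crosses the drawing and the other wires in the same order. Hence this set has spirality at most one in $\Gamma'_{s-1}$, so the spirality of $\Gamma'_{s-1}$ relative to $\Gamma_{s}$ is at most one. For the matching lower bound, an iteration of the morph by definition lowers the spirality relative to $\Gamma_O$ and therefore inverts at least one pair of vertices; since neither rerouting nor the redraw step re-inverts any pair, $\Gamma'_{s-1}$ and $\Gamma_{s}$ genuinely differ in the $x$- or $y$-order of some pair, so not all matching wires can be straight and the spirality is at least one. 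Together this gives exactly one.

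Equivalently -- and this is the route I would actually write up -- one can skip the transfer and rerun the staircase machinery directly: just as in the proof sketch of Lemma~\ref{lem:spirOneRerouting}, one checks that the redraw step and the $O(\varepsilon)$ zigzag-eliminating slides preserve every staircase of maximum-spirality links (a replaced segment hands its staircase membership to its local detour, and the tiny slides are too small to break the monotonicity or overlap conditions of a staircase), so that Lemmata~\ref{lem:staircase}--\ref{lem:spirOne} apply to $(\Gamma_{s},\Gamma'_{s-1})$ verbatim.

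I expect the main obstacle to be verifying that the redraw step -- which deliberately inserts bends into exactly the edges that carry maximum-spirality link crossings, right where the winding concentrates -- creates no additional winding, i.e.\ that the new local segments contribute nothing to the spirality of any wire and take over the old crossings in the correct order on both sides of the relevant links. A secondary point to settle is the direction of the relation: since a single linear morph equals its own reverse, Lemma~\ref{lem:singleMorph} can be applied with either drawing in the role of the spirality-one drawing, so it suffices to establish the statement in whichever direction is convenient and invoke the symmetry of the spirality-one relation for equivalent drawings to obtain the stated form.
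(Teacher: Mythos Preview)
Your proposal has two substantive gaps that keep it from going through as written.

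\textbf{Direction of the spirality relation.} Lemma~\ref{lem:spirOneRerouting} gives the spirality of $\Gamma_s$ relative to $\Gamma^r_{s-1}$: straight-line wires live in $\Gamma^r_{s-1}$ and the matching monotone wires live in $\Gamma_s$. What Lemma~\ref{lem:straightSpirOne} needs is the opposite: straight-line wires in $\Gamma_s$ and monotone wires in $\Gamma'_{s-1}$. Your ``transfer'' step therefore starts from wires that sit in the wrong drawing, and your appeal to symmetry (``it suffices to establish the statement in whichever direction is convenient'') is not justified anywhere in the paper --- spirality-one is not shown to be a symmetric relation. The paper deals with this by explicitly re-running Lemmata~\ref{lem:equivalent}, \ref{lem:spirOne}, and \ref{lem:spirOneRerouting} with the roles of $\Gamma_a$ and $\Gamma_b$ swapped, obtaining monotone \emph{reference} wires in $\Gamma^r_{s-1}$ that match a straight-line grid in $\Gamma_s$. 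Your second route (apply Lemmata~\ref{lem:staircase}--\ref{lem:spirOne} verbatim) has the same defect: those lemmata conclude spirality of the \emph{earlier} drawing relative to the \emph{later} one.

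\textbf{The claim that no bend is created on a wire is false.} The passage from $\Gamma^r_{s-1}$ to $\Gamma'_{s-1}$ uses zigzag-eliminating slides (Lemma~\ref{lem:straightLineDrawing}); such slides are not local to the $6\varepsilon$-boxes --- they move every vertex in the set $\mathcal{V}$ --- and they \emph{do} insert bends in any wire crossing the centre link. The paper's actual argument is precisely about controlling these new bends: the segments being straightened are \emph{free} (not crossed by the main wire set), hence they retain their orientation from $\Gamma_s$, hence any reference tb-wire crossing such a horizontal segment acquires only a new \emph{horizontal} link, which cannot break $y$-monotonicity (and symmetrically for lr-wires). That is the crux you are missing; without it, your deformation step has no reason to preserve spirality one.

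A minor point: the lower-bound paragraph is unnecessary. Throughout the paper ``spirality one'' is used to mean ``at most one'' (monotone wires exist); the proofs of Lemmata~\ref{lem:spirOne} and~\ref{lem:spirOneRerouting} never argue a lower bound, and neither should you here.
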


\begin{proof} \emph{(Sketch)}
Let the \emph{main wire set} be the set of wires used to compute the morph including rerouting from $\Gamma_s$ to $\Gamma^r_{s-1}$.
Consider a \emph{reference wire grid} that is a straight-line wire grid in $\Gamma_s$.
Using Lemmata~\ref{lem:equivalent},~\ref{lem:spirOne}, and~\ref{lem:spirOneRerouting} but swapping the roles of $\Gamma_a$ and $\Gamma_b$, we obtain the result that there is an equivalent monotone set of wires in $\Gamma^r_{s-1}$ matching the reference grid in $\Gamma_s$.
Thus the spirality of $\Gamma_s$ relative to $\Gamma^r_{s-1}$ is one.

When straightening $\Gamma^r_{s-1}$ to $\Gamma'_{s-1}$ only zigzag-removing slides are performed on segments not crossed by a wire from the main wire set.
As such a segment was not crossed by a wire from the main wire set, the orientation of the segment is unchanged in $\Gamma^r_{s-1}$.
Specifically, any link of a wire from the reference wire grid that crosses such a segment must have spirality zero.
When straightening $\Gamma^r_{s-1}$ to $\Gamma'_{s-1}$ the zigzag-removing slides may insert additional bends in these reference wires, but the wires will remain monotone.
\end{proof}

\noindent
We can make $\Gamma_s$ degenerate shape-equivalent to $\Gamma'_{s-1}$ as follows.
For each edge $e$ crossed by maximum absolute spirality links, we split $e$ at the crossing with the right-most (or left-most if the links have negative spirality) crossing link and insert a zero-length segment.
Furthermore, we add three zero-length segments at the endpoint of each such edge $e$ coincident with the respective endpoint.

\begin{theorem}\label{thm:spirOneStraight}
Let $\Gamma_I$ and $\Gamma_O$ be two equivalent drawings of a (disconnected) graph $G$, where $\Gamma_I$ has spirality $s$. We can morph $\Gamma_I$ into $\Gamma_O$ using $s+1$ linear morphs while maintaining planarity, orthogonality, and linear complexity of the drawing during the morph.
\end{theorem}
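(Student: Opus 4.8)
The plan is to realize the morph as a chain of single linear morphs — one per unit of spirality, plus one for the initial rerouting — where each step produces a straight-line drawing of complexity $O(n)$, and to close the argument by induction on the spirality.

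First I would reduce to the straight-line setting: unify $\Gamma_I$ and $\Gamma_O$ to obtain straight-line drawings $\Gamma_I^\ast,\Gamma_O^\ast$ of a graph $G^\ast$ of complexity $O(n)$. Unification only subdivides segments and turns bends into collinear vertices, so it moves no point of the drawing and leaves the wire set untouched; in particular an equivalent wire set of spirality $s$ relative to $\Gamma_O$ (which exists, of size $O(n)$, by Lemma~\ref{lem:completeWires}) carries over. Every drawing produced below is a straight-line drawing of $G^\ast$ together with at most $O(n)$ auxiliary zero-length segments, hence has complexity $O(n)$; forgetting the virtual vertices at the end turns the whole construction into a morph of orthogonal drawings of $G$ of complexity $O(n)$ ending at $\Gamma_O$. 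Since the number of interpolated points is fixed during a linear morph, maintaining linear complexity \emph{during} each morph is automatic once both of its endpoints have linear complexity; the real content is keeping the $O(n)$ bound on the $s+2$ checkpoint drawings.

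Next I would install the rerouting of Section~\ref{sub:rerouting}: reroute the wire set so that every edge crossed by two or more maximum-spirality links carries an $s$-windmill inside an empty $\varepsilon$-band beside it. This does not move the drawing but, by the windmill construction, raises the spirality by one — this is the single extra morph in the $s+1$ count (equivalently, we now run the iteration machinery from spirality $s+1$) — and by Lemma~\ref{lem:bandInvariant} the windmill invariant is automatically re-established at the start of every subsequent iteration. The inductive step, applied $s+1$ times, is then: given a straight-line drawing $\Gamma_k$ of $G^\ast$ of spirality $k>0$ whose wires satisfy the windmill invariant, apply rerouting and the redraw step of Section~\ref{sub:simplification}; by Lemma~\ref{lem:straightLineDrawing} this yields a straight-line drawing $\Gamma'_{k-1}$ of spirality $k-1$, obtained by local $6\varepsilon$-box redrawings and zigzag-eliminating slides, and of complexity $O(n)$ because each edge gains only $O(1)$ segments regardless of how many wires cross it. By Lemma~\ref{lem:straightSpirOne} (built on Lemma~\ref{lem:spirOneRerouting}) the spirality of $\Gamma'_{k-1}$ relative to $\Gamma_k$ is one, so after adding the $O(n)$ zero-length segments described just before the theorem to make $\Gamma_k$ and $\Gamma'_{k-1}$ degenerate shape-equivalent, Lemma~\ref{lem:singleMorph} provides a single linear morph between them (reversed, if need be) that maintains planarity and orthogonality; deleting the gadgets leaves $\Gamma'_{k-1}$, to which the step applies again. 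Starting from $\Gamma_{s+1}:=\Gamma_I^\ast$ with the rerouted wires and iterating down to spirality $0$ gives $\Gamma'_0=\Gamma_O^\ast$, hence $\Gamma_O$ after un-unifying; together with the initial rerouting this is $s+1$ linear morphs, and $s=O(n)$ by Lemma~\ref{lem:completeWires}.

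The genuinely hard work is quarantined in Lemmata~\ref{lem:singleMorph}, \ref{lem:straightLineDrawing}, and \ref{lem:straightSpirOne}, so the theorem should follow by this short induction; the points needing care in the assembly are (i) that the spirality-one precondition of Lemma~\ref{lem:singleMorph} is met at every step (exactly Lemma~\ref{lem:straightSpirOne}), (ii) the off-by-one accounting of the extra rerouting morph, and (iii) that no step secretly re-inflates the complexity — in particular that redraw touches only the $O(n)$ edges crossed by maximum-spirality links and that the shape-equivalence gadgets number $O(n)$. I expect (iii) to be the main obstacle to state cleanly, but it is essentially bookkeeping on top of the cited lemmas rather than new mathematics.
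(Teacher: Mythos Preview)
Your proposal is correct and follows essentially the same route as the paper: induction on spirality, with the base case obtained by the initial windmill rerouting (which raises spirality to $s+1$ without moving the drawing), and each inductive step combining Lemma~\ref{lem:straightLineDrawing} to produce the straight-line $\Gamma'_{k-1}$, Lemma~\ref{lem:straightSpirOne} to certify spirality one between consecutive checkpoints, the degenerate shape-equivalence gadgets to enable Lemma~\ref{lem:singleMorph}, and a cleanup of virtual bends to keep the complexity at $O(n)$. The only cosmetic point is your phrasing ``together with the initial rerouting this is $s+1$ linear morphs'': the rerouting itself is not a morph, it merely bumps the spirality so that the $s+1$ iterations each contribute one morph; and in the final iteration you can (and should) target $\Gamma_O^\ast$ directly rather than an abstract $\Gamma'_0$, since $\Gamma_1$ already has spirality one relative to $\Gamma_O^\ast$.
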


\mypar{Acknowledgements}
Bettina Speckmann and Kevin Verbeek are supported by the Netherlands Organisation for Scientific Research (NWO) under project no.~639.023.208 (B.S.) and no.~639.021.541 (K.V.). We want to thank the anonymous reviewers for their extensive feedback.

% ---- Bibliography ----
% BibTeX users should specify bibliography style 'splncs04'.
% References will then be sorted and formatted in the correct style.
%
\newpage
\bibliographystyle{splncs04}
\bibliography{bibliographyMorphing}

\newpage
\appendix
\section{Omitted proofs}
\bigskip\noindent {\bf Theorem~\ref{thm:sMorphs}.\ }
\emph{Let $\Gamma_I$ and $\Gamma_O$ be two unified planar orthogonal drawings of a (disconnected) graph $G$, where $\Gamma_I$ has spirality $s$. We can morph $\Gamma_I$ into $\Gamma_O$ using exactly $s$ linear morphs while maintaining planarity and orthogonality.
}

\begin{proof}
By Lemma~\ref{lem:spirOne} for every iteration the initial drawing $\Gamma_s$ has spirality one relative to the first drawing $\Gamma_{s-1}$ of the next iteration.
We can make $\Gamma_s$ and $\Gamma_{s-1}$ degenerate shape-equivalent by adding a zero-length edge at the crossing of each edge with each maximum absolute spirality link in $\Gamma_s$.
By Lemma~\ref{lem:singleMorph} we can reduce the spirality of the drawing by one with a single linear morph.
\end{proof}

\bigskip\noindent {\bf Theorem~\ref{thm:spirOneStraight}.\ }
\emph{Let $\Gamma_I$ and $\Gamma_O$ be two equivalent drawings of a (disconnected) graph $G$, where $\Gamma_I$ has spirality $s$. We can morph $\Gamma_I$ into $\Gamma_O$ using $s+1$ linear morphs while maintaining planarity, orthogonality, and linear-complexity of the drawing during the morph.}

\begin{proof}
Use induction to prove there exists a straight-line drawing $\Gamma_k$ of spirality $0\le k \le s+1$ such that there exists a planar morph from $\Gamma_I$ to $\Gamma_k$ that is comprised of $(s+1)-k$ linear morphs.

For the base case take a set of wires in $\Gamma_I$ with spirality $s$.
Reroute the wires to insert windmills next to all crossed edges, thereby increasing spirality to $s+1$.
By assumption the input was a straight-line drawing and trivially $\Gamma_I=\Gamma_{s+1}$ so no linear morph is required.

For the step let $0<i\le s+1$ and let $\Gamma_i$ have spirality $i$.
By hypothesis $\Gamma_i$ is a straight-line drawing and a morph comprised of $(s+1)-i$ linear morphs exists starting at $\Gamma_I$ and ending at $\Gamma_i$.
Compute a morph from $\Gamma_i$ to $\Gamma_O$ as in~\cite{vanGoethemVerbeek2018}.
Moreover, apply the rerouting as discussed in Section~\ref{sec:noConstant} to avoid introducing excess complexity.
Let $\Gamma^r_{i-1}$ be the first drawing with spirality $i-1$ in this computed morph.
Redraw $\Gamma^r_{i-1}$ by inserting additional intersection-free segments and then straighten the resulting drawing to a straight-line drawing $\Gamma'_{i-1}$ as described.
Also morph the set of wires from $\Gamma^r_{i-1}$ along in this process.

By Lemma~\ref{lem:straightSpirOne} drawing $\Gamma'_{i-1}$ has spirality one relative to $\Gamma_{i}$.
Make $\Gamma_i$ and $\Gamma'_{i-1}$ degenerate shape-equivalent by inserting additional vertices in $\Gamma_i$ as discussed in Section~\ref{sub:simplification}.
By Lemma~\ref{lem:singleMorph} we can linearly morph $\Gamma'_{i-1}$ into $\Gamma_{i}$ without violating planarity.
Specifically we can perform this morph in reverse to obtain a morph from $\Gamma_i$ to $\Gamma'_{i-1}$.

Drawing $\Gamma'_{i-1}$ has spirality $i-1$, is straight-line, and concatenating the morph from $\Gamma_I$ to $\Gamma_i$ with the single linear morph from $\Gamma_i$ to $\Gamma_{i-1}$ results in a morph comprised of $(s+1)-(i-1)$ linear morphs.

We can remove any coincident or virtual bends from $\Gamma'_{i-1}$ to maintain $O(n)$ complexity.
\end{proof}

\newpage
\bigskip\noindent {\bf Lemma~\ref{lem:spiral}.\ }
\emph{For each right-oriented link $\ell_\rightarrow$ of a wire $w\in W_\rightarrow$ with positive (negative) spirality $s$ there exists a vertical line $L$ and a subsequence of $\Omega(|s|)$ links of $w$ crossing $L$, such that the absolute spiralities of the links in sequence are $[0,2,4,\ldots,|s|-2,|s|]$, and when ordered top-to-bottom (bottom-to-top) along $L$ form the sequence $[2,6,10,\ldots, |s|-2,|s|,|s|-4,\ldots,4,0]$.}

\begin{proof}
Let $w[i]$ be partial wire consisting of links $\ell_1,\ldots,\ell_i$ of $w$.
Consider a vertical line $L$ through $\ell_\rightarrow$.
We find the desired subsequence $S$ of $w$ by constructing it starting at the back.
Assume for induction that the subsequence $S=(\ell_i, \ell_j, \ldots, \ell_\rightarrow)$ with $1\le k \le \frac{s}{2}$ links has been constructed.
In the base case this is simply $S=(\ell_\rightarrow)$.

Link $\ell_i$ has spirality $t=s-2(k-1)$.
Make a distinction on the orientation of $\ell_i$.
Assume $\ell_i$ is left-oriented, a similar argument holds when $\ell_i$ is right-oriented.
As $\ell_i$ is left-oriented, $t\pmod 4=2$.
By the hypothesis $\ell_i$ occurs before $\ell_\rightarrow$ (with spirality $s$) when ordered top-to-bottom along $L$.
Moreover as $\ell_i$ has the smallest spirality from all selected links in $S$ so far, $\ell_i$ must be the highest link in $S$ that crosses $L$.

Let $\ell_m$ be the link from $w[i]$ crossing $L$ directly below $\ell_i$.
We show $\ell_m$ crosses $L$ lower than any link from $S$.
If $\ell_m$ is the second link selected this is vacuously true, otherwise let $\ell_j$ be the link found in the step before $\ell_i$.
By the same argumentation link $\ell_j$, with spirality $t+2\pmod 4 = 0$, is the lowest link from $S$ crossing $L$.
As $i<j$, $w[i] \subset w[j]$ and thus link $\ell_m \in w[j]$.
As $\ell_i$ was the link from $w[j]$ crossing $L$ directly above $\ell_j$, $\ell_m$ must cross $L$ below $\ell_j$ (see Fig.~\ref{fig:spiralityChange}(a)).

Link $\ell_m$ is the first link from $w[i]$ crossing $L$ below $\ell_i$, thus $w[i]$ (and therefore $w[m]$) cannot cross through $L$ between $\ell_i$ and $\ell_m$.
Consequently, the sub-wire from $\ell_m$ to $\ell_i$ cannot enclose the origin of $\ell_m$.
This implies there is unique topological way to connect $\ell_m$ to $\ell_i$ and $\ell_m$ must have spirality $t-2$ or $t+4$, which is uniquely defined by the orientation of $\ell_m$ (see Fig.~\ref{fig:spiralityChange}(b)).
If the spirality of $\ell_m$ is $t-2$ then we add $\ell_m$ to $S$, otherwise we repeat the downwards search.
As spirality decreases in steps of two and the lowest link crossing $L$ has spirality at most $0$ (from Lemma~3~in~\cite{vanGoethemVerbeek2018}) a suitable link with spirality $t-2$ will be found.
% If $\ell_i$ is right-oriented then $t\pmod4=0$, so $\ell_i$ is the lowest link from $S$.
% Let $\ell_m$ be the link crossing $L$ directly above $\ell_i$ with $m<i$.
% Once again there are two options (see Fig.~\ref{fig:spiralityChange}(c)) and $\ell_m$ must have spirality either $t+4$ or $t-2$.
% A similar argument shows a link with spirality $t-2$ can be found above all previously selected links.
\end{proof}

\begin{figure}[h]
\centering
\includegraphics[scale=0.85]{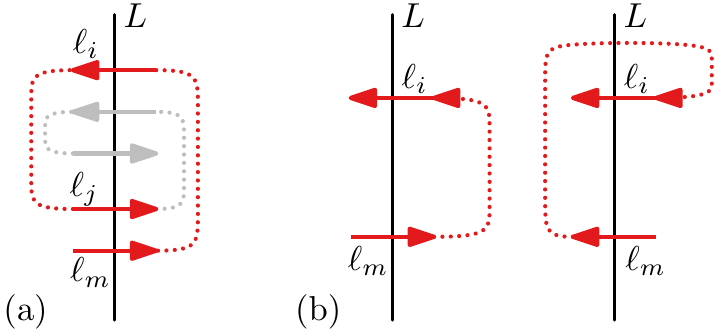}
\caption{\subcap{a} The first link $\ell_m$ from $w[i-1]$ crossing $L$ below $\ell_i$ must cross $L$ lower than the lowest link $\ell_j$ from $S$ crossing $L$. \subcap{b} There are exactly two configurations for link $\ell_m$ ($m<i$) crossing directly below left-oriented link $\ell_i$ (for partial wire $w[i-1]$). Either the spirality of $\ell_m$ is two smaller or four larger than that of $\ell_i$.}
%\subcap{c} Similarly if $\ell_i$ is right-oriented the link $\ell_m$ ($m<i$) directly above $\ell_i$ must have a spirality two smaller or four larger. \arthur{(c) not referenced}}
\label{fig:spiralityChange}
\end{figure}

\newpage

\bigskip\noindent {\bf Lemma~\ref{lem:horizontalWires}.\ }
\emph{An equivalent set of lr-wires with spirality $O(n)$ exists.}

\begin{proof}
We prove the statement constructively by induction on the size of the constructed set.
Assume an equivalent set $\mathcal{S}$ of $k$ lr-wires with spirality $O(n)$ exists, where each wire is shortest with respect to the previously inserted wires.
In the base case $\mathcal{S}=\emptyset$.

For the step consider a wire $w_O\in W_\rightarrow$ that does not have a matching counterpart in $\mathcal{S}$.
Find the shortest lr-wire $w$ in $\Gamma_I$ that matches $w_O$ such that $\mathcal{S}\cup w$ is an equivalent set of size $k+1$.
Consider the right-oriented link $\ell\in w$ with maximum absolute spirality $s$.
By Lemma~\ref{lem:spiral} there exist a (topological) spiral around $\ell$ that has $\Omega(s)$ layers.
We bound the number of layers surrounding $\ell$ to $O(n)$.
Thus, it must also be that $s=O(n)$ and the spirality of $w$ is $O(n)$.
To achieve this we classify the layers of the spiral by their containment of the drawing, a layer (1) contains a vertex, or (2) is crossed by an edge, or (3) contains no part of the drawing (but may contain wires).

Clearly there are at most $O(n)$ layers that contain a vertex of the drawing.
If a layer contains no vertex, but does contain an edge then that edge must cross through the layer. 
As each edge is crossed at most once by $w$, such an edge must cross $L$ exactly once.
Each edge can only be involved in the layers left and right of the crossing with $L$ in this way.
Thus only $O(n)$ layers contain an edge but no vertex.
The remaining layers contain subsections of wires or are empty.
Trivially an empty layer cannot exists, as otherwise $w$ is not shortest.

We show that no layer exists that contains only wires (including parts of $w$ itself).
The boundary of a layer is formed by $w$ and two straight-line segments along $L$.
We refer to two parts of the boundary along $L$ as the \emph{gates} of the layer.

Assume for contradiction there is a layer $R$ that only contains subsections of wires (possibly $w$ itself).
Lr-wires do not cross and hence must enter and leave $R$ through the gates.
As $R$ contains no part of the drawing and the lr-wires are each shortest with respect to the previously inserted wires they cannot consecutively enter and leaves $R$ through the same gate.
%Lr-wires do not cross or self-intersect and hence the order of the remaining intersections at both gates is identical.
Moreover, the order of the wires at both gates is identical.

Disconnect all lr-wires at the gates of $R$.
Also disconnect $w$ at the lower link adjacent to each gate.
Remove all disconnected components.
Reconnect the remaining parts locally along $L$ to ensure the remaining links of all wires are visited in the same order and no crossings occur (see Fig.~\ref{fig:layerShortcut}(d)).
All wires crossing $R$ have been shortened by this.
Contradiction, as all wires in the existing equivalent set as well as $w$ itself, were shortest with respect to the previously inserted wires.

We conclude that there can be at most $O(n)$ layers.
Therefore, the maximum spirality of any link and thus the newly introduced lr-wire is $O(n)$.
\end{proof}

\newpage

\bigskip\noindent {\bf Lemma~\ref{lem:completeWires}.\ }
\emph{An equivalent set of wires with spirality $O(n)$ exists.}

\begin{proof}
By Lemma~\ref{lem:horizontalWires} we can insert all lr-wires with spirality $O(n)$.
By Lemma~2 from~\cite{vanGoethemVerbeek2018} the spirality of intersecting links is the same.
Thus when a tb-wire intersects a lr-wire it has spirality $O(n)$ and we can consider the regions between these intersections individually.
Between two intersections no pair of tb-wires intersect.
Furthermore, there are no crossings with the pair of lr-wires at the border of the region.
We can consider the same proof as Lemma~\ref{lem:horizontalWires} where a wire may now either be a lr-wire or a tb-wire.
Thus, in the region between two lr-wires the spirality of each tb-wire increases (decreases) by at most $O(n)$ and is $O(n)$ again when crossing the second lr-wire.
It follows that the overall spirality of the tb-wires is also $O(n)$.
%\arthur{Note that it does not really matter where we cross the lr-wire as long as we do it at the right time.}
\end{proof}

\bigskip\noindent {\bf Lemma~\ref{lem:order}.\ }
\emph{Two vertices $v$ and $w$ can be inverted by a zigzag-removing slide along link $\ell$, if and only if $v$ and $w$ are separated by $\ell$.}
\begin{proof}
W.l.o.g. assume $\ell$ is vertical and the spirality is positive (see Fig.~\ref{fig:strip}).
Let $\mathcal{V}$ be the set of vertices moved by a zigzag-removing slide on $\ell$.
If $v,w\in \mathcal{V}$ or $v,w\not\in\mathcal{V}$ then $v,w$ are moved equally in the same direction and cannot be inverted.
Hence either $v \in \mathcal{V}$ or $w\in \mathcal{V}$; assume $v\in\mathcal{V}$.
All vertices in $\mathcal{V}$ move up by the length of $\ell$.
To be inverted we need that initially $w.y > v.y$, but also that $w \not\in\mathcal{V}$.
But then $v$ and $w$ must both be in the strip spanned by $\ell$ and they must be separated by $\ell$.
\end{proof}

\begin{figure}[h]
\centering
\includegraphics[scale=0.85]{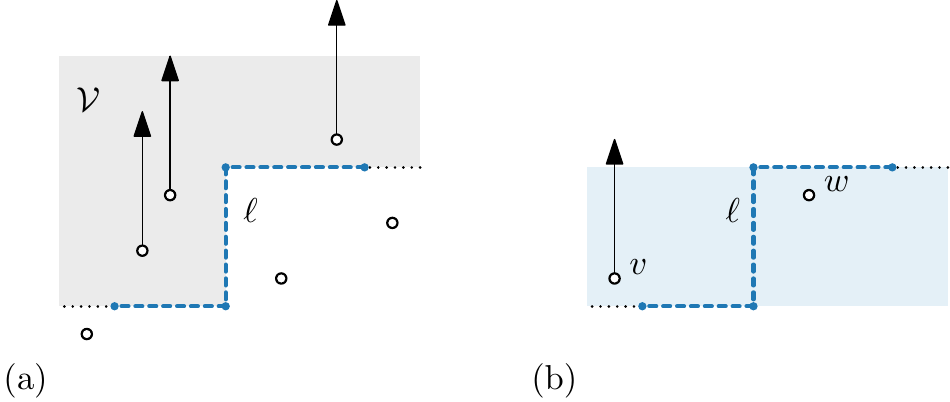}
\caption{\subcap{a} Motion of the vertices in $\mathcal{V}$ (gray) defined by the (horizontally extended) zigzag containing $\ell$. \subcap{b} To change the order of $v$ and $w$ along the $y$-axis, both must be in the horizontal strip defined by $\ell$ (blue) and separated by $\ell$.}
\label{fig:strip}
\end{figure}

\newpage
Let $\Gamma_I\morpheq \Gamma_a \morph \Gamma_b \morpheq \Gamma_O$, where $\Gamma_a,\Gamma_b$ are the first and last drawing of an iteration.
Specifically $\Gamma_a$ is the first drawing with spirality $s$ and $\Gamma_b$ is the first drawing with spirality $s-1$.

\bigskip\noindent {\bf Lemma~\ref{lem:staircase}.\ }
\emph{Two vertices $v$ and $w$ that are $x$-inverted ($y$-inverted) first during a morph from $\Gamma_a$ to $\Gamma_b$, are separated by a horizontal (vertical) staircase of maximum spirality links in $\Gamma_a$.}

\begin{figure}[t]
\centering
\includegraphics[scale=0.85]{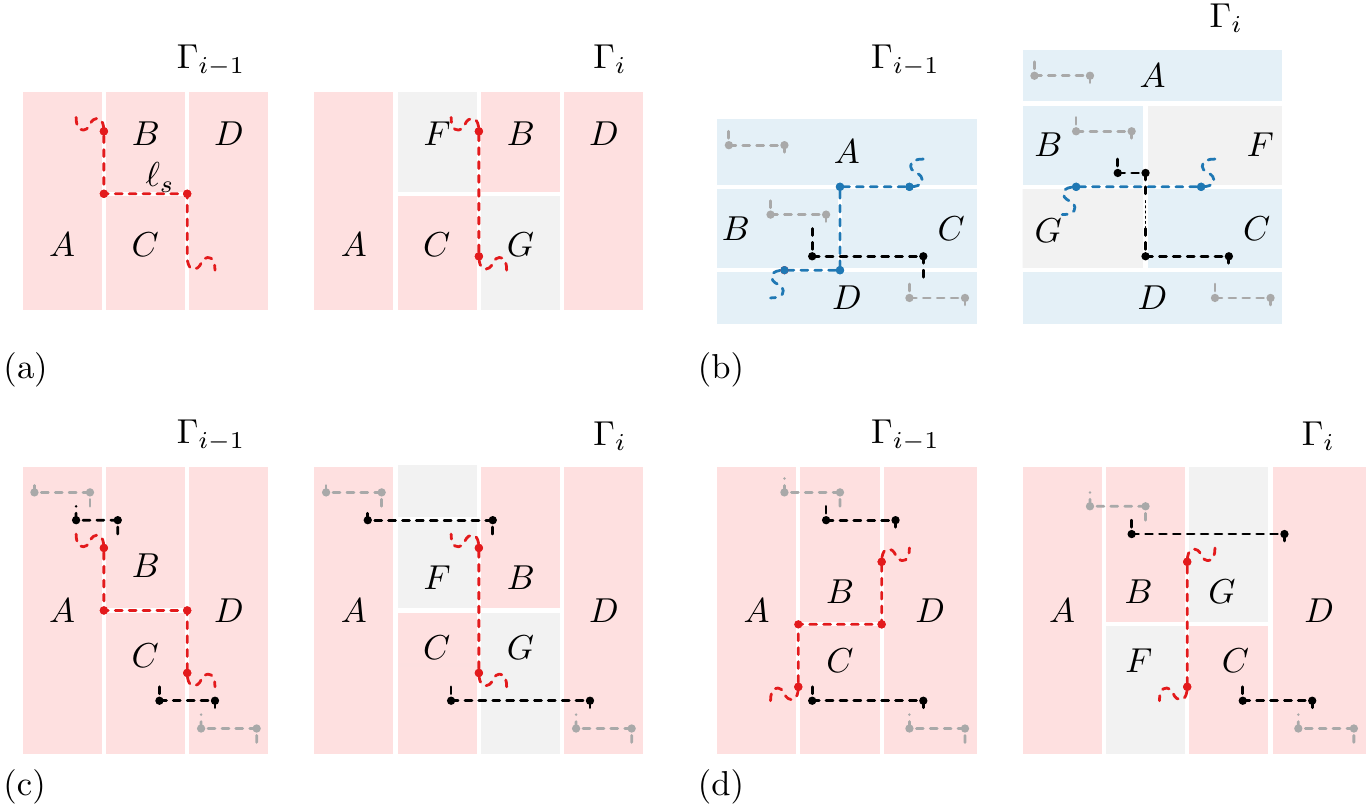}
\caption{\subcap{a} Regions surrounding $\ell_s$ in $\Gamma_{i-1}$ and the matching regions in $\Gamma_i$. Regions $A,B,C,D$ are maintained between the two drawings. \subcap{b} A vertical slide may merge two links from staircase $S$ in $\Gamma_{i-1}$ but the result is also a valid staircase. \subcap{c} If $S$ is split in $\Gamma_{i-1}$ then it can be extended by adding $\ell_i$. \subcap{d} If non-adjacent links from $S$ overlap in $\Gamma_{i-1}$ then we can select a subsequence from $S$ forming a staircase in $\Gamma_{i-1}$.}
\label{fig:negative}
\end{figure}

\begin{proof}
Assume w.l.o.g. that only one inversion occurs and it occurs from $\Gamma_{b-1}$ to $\Gamma_b$, otherwise consider the initial part of the morph.
Assume that $v.x<w.x$, $v.y > w.y$ in all drawings from $\Gamma_a$ to $\Gamma_{b-1}$ and $v.x>w.x$, $v.y>w.y$ in $\Gamma_b$.
We prove the claim for the intermediate drawings in backwards direction starting from $\Gamma_{b-1}$.
By Lemma~\ref{lem:order} $v$ and $w$ are separated by a single maximum absolute spirality link $\ell$ in $\Gamma_{b-1}$.
This trivially satisfies all constraints for a downwards staircase.
As $v.y > w.y$ and $v.x<w.x$ in $\Gamma_{b-1}$ and $v.x>w.x$ in $\Gamma_b$, link $\ell$ has positive spirality.

Assume the sequence $S$ composes a downwards staircase in $\Gamma_i$, where $\Gamma_{a} \morph \Gamma_i \morpheq \Gamma_{b-1}$.
For convenience of argument we consider $v$ and $w$ as zero-length horizontal links that are part of $S$.
We show a downwards staircase separating $v$ and $w$ also exists in $\Gamma_{i-1}$.
Let link $\ell_s$ be the link causing the linear slide from $\Gamma_{i-1}$ to $\Gamma_i$.
Define four rectangular regions $A,B,C,D$ surrounding $\ell_s$ that partition the plane in $\Gamma_{i-1}$ (see Fig.~\ref{fig:negative}(a)). 
During the linear slide from $\Gamma_{i-1}$ to $\Gamma_i$ all four regions are maintained intact.
Moreover, two new regions $F,G$ are present in $\Gamma_i$.
However, as regions $A,B,C,D$ are maintained intact and together contain all vertices, regions $F,G$ do not contain any vertices.

Assume $S$ is not also a downwards staircase in $\Gamma_{i-1}$ otherwise we are done.
To break any of the staircase properties the $x$- ($y$-)order of two endpoints of links in $S$ must change between $\Gamma_{i-1}$ and $\Gamma_i$
This can only occur if the respective endpoints are separated by $\ell_s$ (Lemma~\ref{lem:order}) and hence at least one link from $S$ must be overlap with region $B$ and one link must overlap with region $C$.
Let $S_1$ be the sub-staircase consisting of the links of $S$ upto the last link that intersects, or is contained in, region $B$.
Staircase $S_2$ consists of the remaining links and, specifically, the first link of $S_2$ must intersect or be contained in region $C$.
As the endpoints of the links in $S$ are monotone decreasing in $y$, region $C$ contains no links from $S_1$, and region $B$ contains no links from $S_2$.
Hence, as no pair of vertices in a sub-staircase is separated by $\ell_s$, all staircase properties are maintained for $S_1$ and $S_2$ separately between $\Gamma_{i-1}$ and $\Gamma_i$.
Let $\ell_1$ be the last link from $S_1$ and $\ell_2$ the first link from $S_2$.
Make a case distinction on the orientation of $\ell_s$.

Assume $\ell_s$ is vertical (see Fig.~\ref{fig:negative}(b)).
A vertical slide can falsify only the $y$-monotonicity of $S$.
%Links $\ell_1,\ell_2$ have strictly decreasing $y$-coordinates in $\Gamma_i$.
As the $x$-projection of $\ell_1$ and $\ell_2$ touches (overlaps) in $\Gamma_i$ and regions $F,G$ contain no vertices $\ell_1$ ends at the right border of $B$ and $\ell_2$ starts at the left border of $C$.
Any vertical linear slide (degenerately) maintains the $y$-order on each vertical line.
Thus also in $\Gamma_{i-1}$ the right endpoint of $\ell_1$ is above (or equal with) the left endpoint of $\ell_2$.
In the boundary case $\ell_1$ and $\ell_2$ form a single link in $\Gamma_{i-1}$.
In each case $S$ forms of a valid staircase in $\Gamma_{i-1}$.

Assume $\ell_s$ is horizontal.
A horizontal slide can falsify the $x$-monotonicity or the overlap of links.
Consider the spirality of $\ell_s$.

First, assume $\ell_s$ has positive spirality (Fig~\ref{fig:negative}(c)).
As $S$ is a valid staircase in $\Gamma_i$ the projection on the $x$-axis of any two non-adjacent links is non-overlapping.
Then $\ell_1$ and $\ell_2$ cannot be fully contained in $B$ respectively $C$ as otherwise either $\ell_1$ is contained in $\ell_2$ in $\Gamma_i$ or vice versa, and therefore at least one pair of non-adjacent links must overlap.
Moreover, from $S$ only $\ell_1$ enters $B$ and only $\ell_2$ enters $C$.
In $\Gamma_{i-1}$ only the right endpoint of $\ell_1$ and the left endpoint of $\ell_2$ may be inverted.
If not, then $S$ is a downwards staircase in $\Gamma_{i-1}$ as well.
If so, then by Lemma~\ref{lem:order} $\ell_1$ and $\ell_2$ are separated by $\ell_s$.
Thus $(S_1,\ell_s,S_2)$ forms a downwards staircase in $\Gamma_{i-1}$.

Second, assume $\ell_s$ has negative spirality (Fig~\ref{fig:negative}(d)).
Then in $\Gamma_{i-1}$ the projection on the $x$-axis of non-adjacent links from $S$ may overlap.
As $v.x<w.x$ in both $\Gamma_{i-1}$ and $\Gamma_i$ at least some pair of links from $S_1$ and $S_2$ overlap.
Select a subsequence from $S$ satisfying all constraints by dropping links from the end of $S_1$ until only one link from $S_1$ overlaps a link from $S_2$.
Then drop links from the start of $S_2$ until no non-adjacent pair of links in $S$ overlap.
\end{proof}

\begin{note}
If two vertices $v,w$ are inverted along both axes during a morph from some drawing $\Gamma_a$ to some drawing $\Gamma_b$, then there exists a submorph where $v$ and $w$ are inverted exactly once along both axes.
%\label{lem:invertOnce}
\end{note}
% \begin{proof}
% Traverse the sequence of drawings and track the changes in the $x$- and $y$-order in a string of the symbols $x$ and $y$ (e.g. $xxxxyx$). Each such inversion can be traced back to a single slide between two drawings.
% Trivially if the string contains at least one $x$ and at least one $y$, there also exists a substring with exactly one $x$ and one $y$.
% \arthur{proof can be dropped in paper.}
% \end{proof}

\newpage

\bigskip\noindent {\bf Lemma~\ref{lem:xORy}.\ }
\emph{Two vertices $v$ and $w$ can be inverted along only one axis during the morph from $\Gamma_a$ to $\Gamma_b$.}

\begin{proof}
Assume for contradiction a pair of vertices $v,w$ exists that is inverted along both axes.
Assume they are inverted along both axis exactly once, otherwise consider the submorph where this is the case.
W.l.o.g. let $v.x < w.x$, $v.y<w.y$ in $\Gamma_a$, $v.x < w.x$, $v.y>w.y$ in all drawings from $\Gamma_{a+1}$ to $\Gamma_{b-1}$, and $v.x > w.x$, $v.y>w.y$ in $\Gamma_b$.

By Lemma~\ref{lem:staircase}, and the relative position of $v$ and $w$, there exists a downwards staircase separating $v$ and $w$ in $\Gamma_{a+1}$.
By Lemma~\ref{lem:order} and the inversion of $v,w$ from $\Gamma_a$ to $\Gamma_{a+1}$ there must be a vertical link with positive spirality separating $v,w$ in $\Gamma_a$.
We consider the regions $A,B,C,D$ surrounding the link in $\Gamma_a$ that causes the linear slide from $\Gamma_a$ to $\Gamma_{a+1}$ and the matching regions $F,G$ in $\Gamma_{a+1}$.
As regions $F$ and $G$ cannot contain any vertices, the staircase separating $v,w$ must have a link $\ell_1$ ending at the right boundary of $B$ and a link $\ell_2$ ending at the left boundary $C$ (see Fig.~\ref{fig:negative}(b)).
Specifically, the $y$-coordinate of $\ell_1$ is greater or equal than the $y$-coordinate of $\ell_2$ in $\Gamma_{a+1}$.
As any vertical slide, in particular the one from $\Gamma_a$ to $\Gamma_{a+1}$, maintains the order on a vertical line we must also have $v.y > w.y$ in $\Gamma_a$.
Contradiction.
\end{proof}

\bigskip\noindent {\bf Lemma~\ref{lem:equivalent}.\ }
\emph{Each vertical (horizontal) line in $\Gamma_b$ not crossing a vertex, can be matched to a $y$- ($x$-)monotone wire in $\Gamma_a$.}

\begin{figure}[tb]
\centering
\includegraphics[scale=0.85]{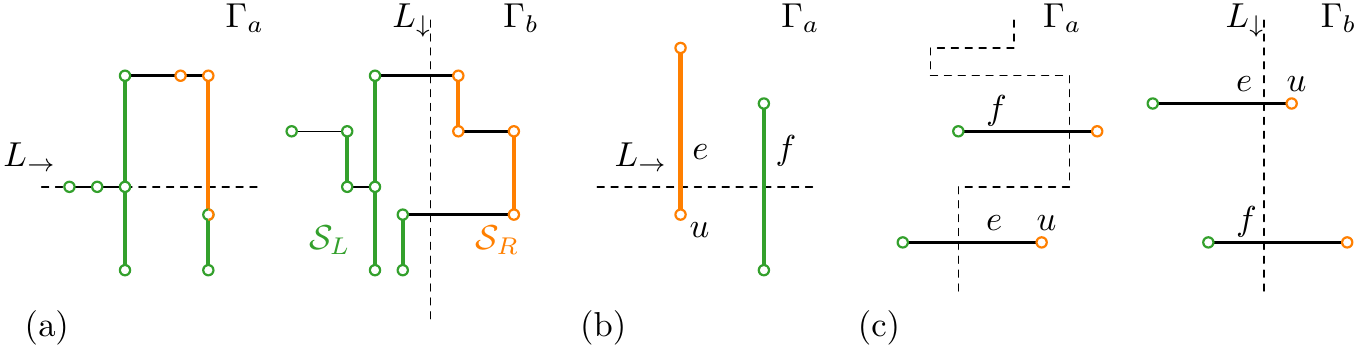}
\caption{\subcap{a} Sets $\mathcal{S}_L$ and $\mathcal{S}_R$ in $\Gamma_a$ and $\Gamma_b$. \subcap{b} A vertical edge $e\in\mathcal{S}_R$ cannot cross $L_\rightarrow$ left of a vertical edge $f\in\mathcal{S}_L$ as vertex $u$ must be $x$- and $y$-inverted with one of the endpoints of $f$ during the morph. \subcap{c} The $y$-monotone line cannot cross the edges in the wrong order as then vertex $u$ must be $x$- and $y$-inverted with an endpoint of $f$.}
\label{fig:intervals}
\end{figure}

\begin{proof}
Let $L_\downarrow$ be a vertical line in $\Gamma_b$ not intersecting any vertex.
Line $L_\downarrow$ partitions the set of vertices and vertical edges in $\Gamma_b$ into two subsets $\mathcal{S}_L,\mathcal{S}_R$.
We consider the shape of these sets in $\Gamma_a$.
Let $L_\rightarrow$ be a horizontal line in $\Gamma_a$ and consider the elements from $\mathcal{S}_L$ and $\mathcal{S}_R$ that intersect $L_\rightarrow$.
Order the elements left-to-right along $L_\rightarrow$ and form maximal subsets of consecutive elements from the same set (see Fig.~\ref{fig:intervals}(a)).
If vertices from $\Gamma_b$ coincide in $\Gamma_a$, then we order those vertices such that vertices from $\mathcal{S}_L$ come before vertices from $\mathcal{S}_R$.
To differentiate we refer to the maximal subsets formed as \emph{intervals}.

Assume for contradiction there are two consecutive intervals $I_R \subseteq \mathcal{S}_R$ and $I_{L}\subseteq \mathcal{S}_L$ when ordered left-to-right along $L_\rightarrow$.
Let $e\in I_R$ and $f\in I_{L}$. 
We have $e.x<f.x$ in $\Gamma_a$ and $e.x>f.x$ in $\Gamma_b$.
As $e$ and $f$ are $x$-inverted, they (or their endpoints) cannot be $y$-inverted between $\Gamma_a$ and $\Gamma_b$ (Lemma~\ref{lem:xORy}).
Assume $e,f$ are vertical edges in $\Gamma_a$ and an endpoint $u$ of $e$ is in the horizontal strip defined by $f$ (see Fig.~\ref{fig:intervals}(b)). The case where $e$ or $f$ are vertices is analogous.
As the morph is planar $u$ cannot move through $f$ while morphing to $\Gamma_b$.
But then $u$ changed in the $y$-order with at least one of the endpoints of $f$ during the morph.
Contradiction.

We conclude no interval composed of elements from $\mathcal{S}_L$ can come after an interval composed of elements from $\mathcal{S}_R$.
Thus there at most two maximal intervals on each horizontal line in $\Gamma_a$ and they are ordered with the elements from $\mathcal{S}_L$ occurring first.
It follows that a $y$-monotone wire must exist in $\Gamma_a$ that correctly partitions the vertices and vertical edges.

A $y$-monotone line intersects each horizontal edge at most once and, as it partitions the vertices correctly, it must intersect exactly the required horizontal edges.
It is left to prove that such a $y$-monotone wire also intersects the horizontal edges in the correct order.
Consider an arbitrary pair of horizontal edges $e,f$ that is intersected by $L_\downarrow$ in this order in $\Gamma_b$.
If $e,f$ have the same vertical order in $\Gamma_a$ then the claim trivially holds.
Assume the end-points of $e,f$ are $y$-inverted in $\Gamma_a$ (see Fig.~\ref{fig:intervals}(c)).
Then by Lemma~\ref{lem:xORy} the $x$-order of the end-points is the same in $\Gamma_a$ and $\Gamma_b$.
W.l.o.g. of assume an endpoint $u$ of $e$ is in the horizontal strip defined by $f$ in $\Gamma_a$ and $\Gamma_b$.
Vertex $u$ must have changed in the $x$-order with at least one of the endpoints of $f$ during the morph.
Contradiction.
\end{proof}

\newpage
\bigskip\noindent {\bf Lemma~\ref{lem:spirOne}.\ }
\emph{Drawing $\Gamma_a$ has spirality one relative to $\Gamma_b$.}

\begin{figure}[t]
\centering
\includegraphics[scale=0.85]{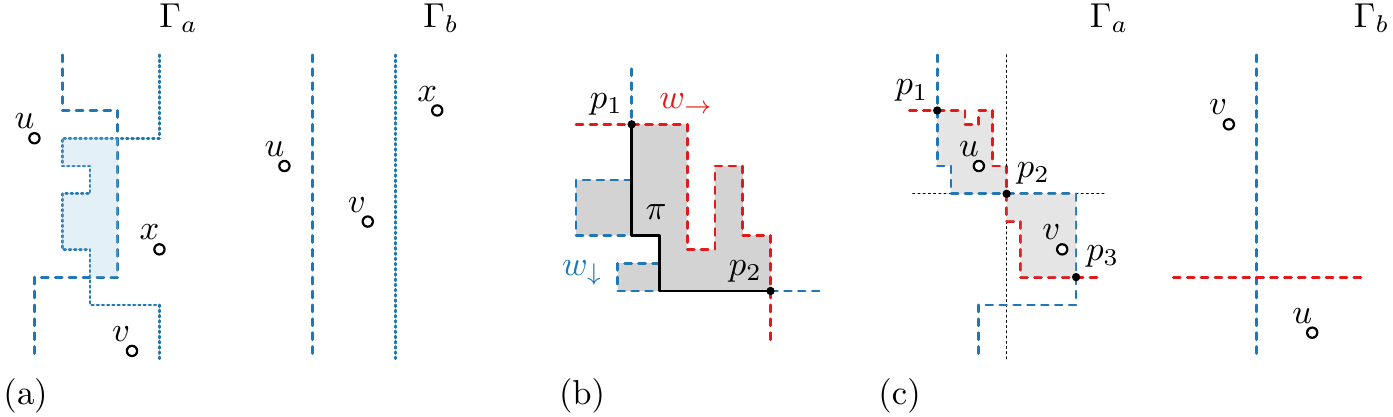}
\caption{\subcap{a} The number of crossings of two $y$-monotone tb-wires that cross at least twice can be reduced as the areas in between them (light blue) cannot contain any vertices. \subcap{b} Each enclosed region contains an $xy$-monotone path $\pi$. \subcap{c} An $x$-monotone lr-wire and a $y$-monotone tb-wire cannot cross three times.}
\label{fig:proper}
\end{figure}

\begin{proof}
To determine the spirality of $\Gamma_a$ relative to $\Gamma_b$ we need to show a monotone wire-grid exists in $\Gamma_a$ that is equivalent to a straight-line wire grid in $\Gamma_b$.
To this end, consider a new straight-line wire-grid in $\Gamma_b$.
By Lemma~\ref{lem:equivalent} we can find an equivalent $x$- ($y$-) monotone wire in $\Gamma_a$ for each straight wire in $\Gamma_b$.
All that is left to show is that there is also set of monotone wires in $\Gamma_a$ that together form an equivalent set to the wires in $\Gamma_b$.

Assume for contradiction such a set does not exist. 
Then for any set of monotone wires in $\Gamma_a$, each individually equivalent to a straight wire in $\Gamma_b$, at least one pair of tb- (lr-) wires intersect at least twice or one pair of a tb-wire and a lr-wire intersect at least three times.

Assume a pair of adjacent $y$-monotone tb-wires intersect at least twice (see Fig.~\ref{fig:proper}(a)).
Consider the top-most two intersections.
The region enclosed by the wires cannot contain vertices as both wires partition the vertices equivalently to $\Gamma_b$.
As the enclosed region is simple and every edge is intersected at most once by a single wire, the order in which edges are intersected along the different wires is the same.
We can locally reroute both wires along the enclosed region to remove both intersections. Repeated application removes all intersections. Contradiction.

Assume an $x$-monotone lr-wire $w_\rightarrow$ and a $y$-monotone tb-wire $w_\downarrow$ intersect at least three times.
Consider a region $R$ enclosed between two consecutive intersections $p_1,p_2$.
Assume w.l.o.g. that $w_\rightarrow$ intersects $w_\downarrow$ left to right in $p_1$ (see Fig.~\ref{fig:proper}(b)).
If $R$ does not contain vertices, then consider the left-most, lowest $x$-monotone increasing path $\pi$ through $R$.
As the boundary right of $\pi$ is $y$-monotone decreasing, $\pi$ must also be $y$-monotone decreasing.
Reroute both wires along $\pi$ between $p_1$ and $p_2$ to remove intersection $p_1$ and $p_2$.

Any region enclosed between two remaining intersections must contain at least one vertex.
Consider the leftmost three consecutive intersections $p_1,p_2,p_3$ along $w_\rightarrow$ .
We have $p_1.x \le p_2.x \le p_3.x$.
Assume for contradiction $w_\downarrow$ crosses through $p_2$ first (last) out of these three intersections.
Then $w_\rightarrow$ upto $p_2$, $w_\downarrow$ upto $p_2$, and the boundingbox around the drawing enclose a simple region.
Wire $w_\downarrow$ enters this region through $p_1$, but cannot exit it without intersecting $w_\downarrow$ somewhere before $p_2$.
Contradiction, as $p_1,p_2,p_3$ are the first three intersection along $w_\rightarrow$.
Thus either $p_1.y \le p_2.y \le p_3.y$ or $p_1.y \ge p_2.y \ge p_3.y$.

Assume w.l.o.g. $p_1.y \ge p_2.y \ge p_3.y$ (see Fig.~\ref{fig:proper}(c)).
The wires between these intersections enclose two disjoint regions $R_1,R_2$.
Each region contains at least one vertex, let $u\in R_1$ and $v\in R_2$. 
Subdivide the plane into four axis-aligned quadrants at $p_2$.
Region $R_1$ lies in the top-left quadrant and $R_2$ in the bottom-right quadrant.
Thus, $u.x<v.x$ and $u.y>v.y$ in $\Gamma_a$.
As the wires are equivalent to $\Gamma_b$, by construction $u.x>v.x$ and $u.y<v.y$ in $\Gamma_b$.
However, by Lemma~\ref{lem:xORy} vertices $u,v$ cannot be both inverted along both axes.
Contradiction.

We conclude that an equivalent set of $x$- respectively $y$-monotone wires exists in $\Gamma_a$ matching the straight-line wiregrid in $\Gamma_b$.
By definition $\Gamma_a$ has spirality one relative to $\Gamma_b$ (with respect to the described set of wires).
\end{proof}

\bigskip\noindent {\bf Lemma~\ref{lem:singleMorph}.\ }
\emph{Let $\Gamma_I$ and $\Gamma_O$ be two degenerate shape-equivalent drawings, where $\Gamma_I$ has spirality one.
There exists a single linear morph from $\Gamma_I$ to $\Gamma_O$ that maintains planarity and orthogonality.}

\begin{proof}
We consider the following linear morph between $\Gamma_I$ and $\Gamma_O$ and prove it has the desired properties.
The partition of the drawing by all wires defines \emph{cells}; regions of the plane not split by any wire.
For each cell containing at least one bend or vertex, linearly interpolate all vertices and bends in $\Gamma_I$ to the unique vertex or bend location in $\Gamma_O$.
This directly defines a linear interpolation for each point (not necessarily a vertex or bend) between $\Gamma_I$ and $\Gamma_O$.

First, we prove that during the described linear morph the drawing remains orthogonal.
The endpoints of all (zero-length) segments crossing a tb-wire have the same $y$- coordinates in $\Gamma_I$ and $\Gamma_O$, hence they remain horizontal.
Symmetrically all segments crossing a lr-wire remain vertical.
All other segments morph to a single point and remain horizontal or vertical as well.

Second, we prove that during the described linear morph the drawing remains planar.
Assume for contradiction there exist two distinct points $p,q$ on an edge or vertex of the drawing that coincide during the linear interpolation (excluding $\Gamma_I,\Gamma_O$).
By linear motion the $x$-coordinates and $y$-coordinates of $p$ and $q$ change linearly.
To be identical at a time $0<t<1$ during the morph $p$ and $q$ must be both $x$- and $y$-inverted in $\Gamma_O$ compared to $\Gamma_I$.
%Specifically either $p$ and $q$ have identical $x$-coordinates in $\Gamma_I$ and $\Gamma_O$ but are $y$-order inverted, $p$ and $q$ have identical $y$-coordinates but are $x$-order inverted, or $p$ and $q$ are inverted along both axes.
Note that this is not excluded as we do not base our analysis on linear slides.

Assume that $p.x < q.x$ and $p.y < q.y$ in $\Gamma_I$.
The case where either the $x$- or $y$-coordinates are identical in $\Gamma_I$ and $\Gamma_O$ works similarly.
Distinguish whether $p$ and $q$ are on a horizontal or vertical segment.
We will work out the first case and indicate the setup for the other cases, which are analogous.

First, assume $p$ and $q$ are both on a vertical segment in $\Gamma_O$ (see Fig.~\ref{fig:monotone}(a)).
Let $r$ be the top endpoint of the segment containing $p$ and $s$ the bottom endpoint of the segment containing $q$.
In $\Gamma_O$ we have $r.y > s.y$ and $r.x>s.x$.
As $r$ and $s$ have distinct $x$- and $y$-coordinates they are split by at least one tb-wire and one lr-wire in $\Gamma_O$.

The matching (monotone) wires in $\Gamma_I$ split $r$ and $s$ identically.
Furthermore, vertical segments $\overline{pr}$ and $\overline{qs}$ also exist in $\Gamma_I$, though they may have zero-length as they were introduced at a wire-intersection.
Thus $p.y \le r.y$ and $s.y \le q.y$ in $\Gamma_I$.
In $\Gamma_I$ vertices $r,s$ are split by a $y$-monotone tb-wire.
Therefore, it must also be that $r.y < s.y$ in $\Gamma_I$.
Similarly, as $r$ and $s$ are split by an $x$-monotone lr-wire, $r.x<s.x$.
However, then the lr-wire and tb-wire must cross at least three times.
Contradiction.

Second, assume $p$ is on a horizontal segment and $q$ on a vertical segment (see Fig.~\ref{fig:monotone}(b)).
Let $r$ be the right endpoint of the segment containing $p$ and $s$ be the bottom endpoint of the segment containing $q$.
Third, assume $p$ and $q$ are both on a horizontal segment (see Fig.~\ref{fig:monotone}(c)).
Let $r$ be the right endpoint of the segment containing $p$ and $s$ be the left-endpoint of the segment containing $q$.

We conclude there do not exist two distinct points $p,q$ on the edges (vertices) of the drawing that coincide during the linear morph.
\end{proof}

\begin{figure}[b]
\centering
\includegraphics[scale=0.85]{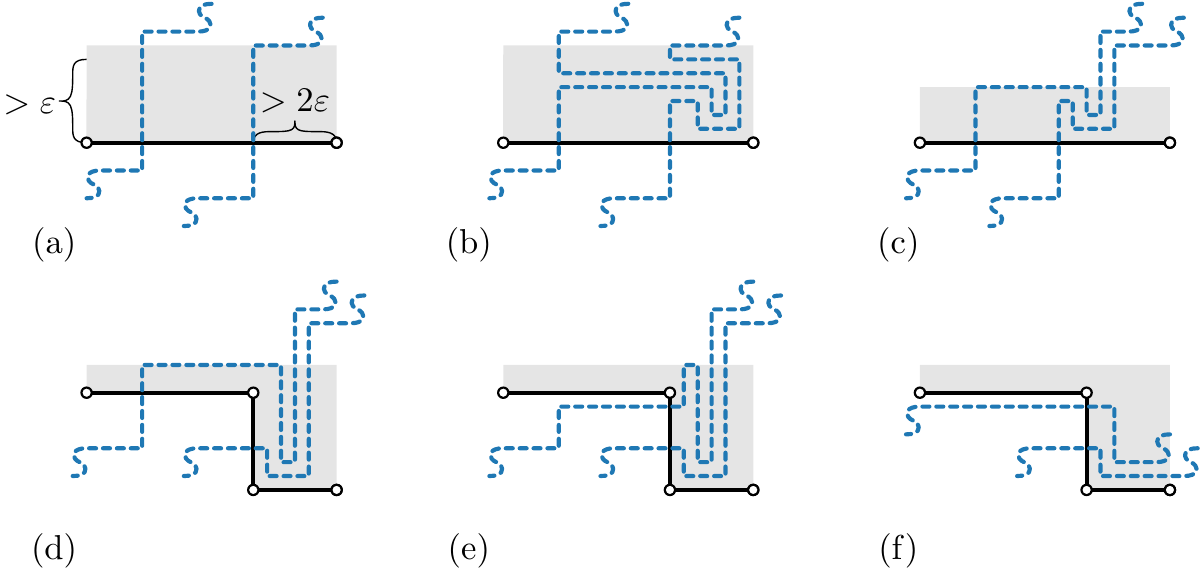}
\caption{\subcap{a} An $\varepsilon$-band adjacent to the edge. \subcap{b} Rerouting all crossing wires inside the $\epsilon$-band while ensuring the wires always pass the last bend-point of the right-most wire. \subcap{c} The wire structure is maintained when the crossing links are reduced. \subcap{d-e} After reducing the right-most link and introducing two bends in $e$ we can safely reroute the remaining wires without increasing spirality as the area above $e$ is completely empty. \subcap{f} The structure is invariant, after reducing all spirality $s$ links the matching structure is left for spirality $s-1$.}
\label{fig:rerouting_new}
\end{figure}

\bigskip\noindent {\bf Lemma~\ref{lem:bandInvariant}.\ }
\emph{At the start of iteration $i$ of the morph, all wires crossing an edge $e$ with links of spirality $i$ form an $i$-windmill in an empty $\varepsilon$-band next to $e$.}

\begin{proof}
At the start of the morph $i=s+1$.
As no edge is crossed by links of spirality $s+1$ the statement holds vacuously.

For the maintenance assume the statement holds at the start of iteration $k\le s+1$.
We prove the statement for $k-1$.
Assume w.l.o.g. that some edge $e$ is crossed by links with positive spirality $k$.
By hypothesis there is a $k$-windmill next to $e$ inside an otherwise empty $\varepsilon$-band (see Fig.~\ref{fig:rerouting_new}(c)).
As a corollary of Lemma~\ref{lem:order}, any slides along links of the wires outside of the $\varepsilon$-band do not affect the $k$-windmill next to $e$.
Thus we only concern ourselves with the local structure.

We reduce the right-most link crossing $e$ and reroute the remaining crossing wires (see Fig.~\ref{fig:rerouting_new}(d)).
This maintains all links in the windmill except for the first link of the right-most wire.
Performing linear slides on the remaining spirality $k$ links within the $\varepsilon$-band also removes the first link in the windmill for the other wires.
Reducing all other spirality $k$ links also removes the last link in the spiral for all wires.
This leaves a $(k-1)$-windmill inside an otherwise empty $\varepsilon$-band next to $e$ (see Fig.~\ref{fig:rerouting_new}(f)).
\end{proof}

\bigskip\noindent {\bf Lemma~\ref{lem:spirOneRerouting}.\ }
\emph{
Let $\Gamma_s$ be the first drawing of an iteration and $\Gamma^r_{s-1}$ the rerouted last drawing.
The spirality of $\Gamma_s$ relative to $\Gamma^r_{s-1}$ is one.}

\begin{proof}
We show that rerouting wires does not destroy staircases.
When rerouting the link crossing $e$ and the first link after the crossing are replaced by four links that together span the same width and the same height.
W.l.o.g. consider the horizontal link and its replacement links.
W.l.o.g. assume the link has positive spirality and that it was part of a downwards staircase (see Fig.~\ref{fig:staircases_the_sequal}(a)).
The two links replacing it span the same width and have the same spirality.
Replacing the original link in the staircase by the two new introduced links maintains the property of the staircase that the left endpoints are $x$-monotone increasing and $y$-monotone decreasing.
Clearly the projection of the two introduced links touches.
Furthermore, as they jointly span the same width they must overlap with the previous and next links in the staircase. (see Fig.~\ref{fig:staircases_the_sequal}(b)).
If the projection of non-consecutive links in the staircase overlap, then a subset of the links exists where all staircase properties are satisfied.

As rerouting links maintains staircases Lemma~\ref{lem:staircase} and therefore Lemmata~\ref{lem:xORy}-\ref{lem:spirOne} still hold and the spirality of $\Gamma_{s}$ with respect to $\Gamma_{s-1}$ is one.
\end{proof}

\begin{figure}[h]
\centering
\includegraphics[scale=0.85]{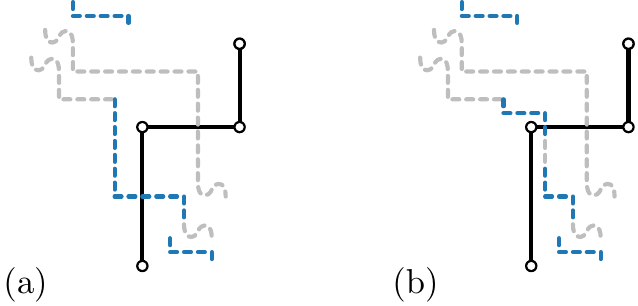}
\caption{\subcap{a} An existing staircase before rerouting (blue).\subcap{b} The rerouted wires may extend the staircase but never destroy it.}
\label{fig:staircases_the_sequal}
\end{figure}

\newpage

The following technical lemma is based on Lemma~12 in~\cite{vanGoethemVerbeek2018}.

\begin{lemma}\label{lem:relativeAngle}
Let $v$ be a vertex with at least two outgoing edges $e,f$ and let $c$  be the turn made at $v$ going from $f$ to $e$, where $c=-1$ for a right turn, $c=1$ for a left turn, and $c=0$ otherwise.
Let $\ell_e$ be a link crossing $e$ and $\ell_f$ be a link crossing $f$.
We have $s(\ell_e)-c+1 \ge s(\ell_f) \ge s(\ell_e)-c-1$.
\end{lemma}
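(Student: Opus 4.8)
The statement relates the spiralities of two links crossing edges $e,f$ incident at a common vertex $v$, in terms of the turn $c$ made at $v$ when going from $f$ to $e$. Since this lemma is explicitly flagged as based on Lemma~12 in~\cite{vanGoethemVerbeek2018}, the plan is to reduce to a local winding argument around the vertex $v$ and then invoke (or re-derive) the parallel-link property from~\cite{vanGoethemVerbeek2018} (Lemma~2 / Lemma~9 there), which states that links crossing the same segment have equal spirality.

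\emph{First}, I would fix the geometry near $v$: both $e$ and $f$ leave $v$ along axis-parallel directions, and the turn $c\in\{-1,0,1\}$ records whether $e$ is obtained from $f$ by a clockwise quarter turn, no turn (collinear, pointing oppositely), or a counter-clockwise quarter turn around $v$. I would then consider a link $\ell_f$ crossing $f$ close to $v$ and a link $\ell_e$ crossing $e$ close to $v$; by the equal-spirality-on-a-segment property it suffices to prove the bound for \emph{some} such pair, since sliding a crossing along $e$ or $f$ does not change the spirality of the crossing link. \emph{Second}, the key observation is that the portion of the wire $w$ between the crossing with $f$ and the crossing with $e$, together with a short arc near $v$ that ``goes around'' the vertex from $f$ to $e$, forms a closed curve whose total turning is controlled. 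Walking along $w$ from $\ell_f$ to $\ell_e$ accumulates exactly $s(\ell_e)-s(\ell_f)$ in left/right turns (by the definition of link spirality as a partial sum of the $b_i$), while the wire must locally route around $v$ consistently with the fact that it separates the two edges $e$ and $f$ in the drawing in a way equivalent to a straight line in $\Gamma_O$. Since a straight line that crosses both $e$ and $f$ near $v$ turns by $c$ relative to itself plus at most one unit of slack coming from the wire entering/leaving on a prescribed side, one gets $|s(\ell_e)-s(\ell_f)-(- c)|\le 1$, i.e. $s(\ell_e)-c-1 \le s(\ell_f) \le s(\ell_e)-c+1$, which is the claimed inequality after rearranging.

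\emph{Third}, to make the ``at most one unit of slack'' rigorous I would argue as in~\cite{vanGoethemVerbeek2018}: consider the sub-wire $w'$ of $w$ from $\ell_f$ to $\ell_e$. It does not cross $e$ or $f$ again (each edge is crossed at most once by a wire), so $w'$ lies in the region bounded by $e\cup f$ near $v$ on one side. The total turning of $w'$ is an integer multiple of the quarter-turn unit, and the difference of its endpoint link-orientations is fixed by the orientations of $e$ and $f$, hence by $c$; the only freedom is which of the (at most two) homotopically distinct ways $w'$ can be completed to a closed loop around $v$ it realizes, and these differ by exactly $\pm 1$ in accumulated turning. Comparing link-orientations at the two endpoints then pins $s(\ell_e)-s(\ell_f)$ to the set $\{-c-1,-c,-c+1\}$ — in fact to a single value once the side is known, but the stated lemma only needs the interval, so I would not belabor the exact value.

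\emph{Main obstacle.} The delicate point is the bookkeeping of orientations versus spiralities: spirality is the accumulated turn along $w$ from its \emph{global} start, whereas the local argument only controls the \emph{difference} of accumulated turns between the two crossings, and one must be careful that the orientation of the link $\ell_e$ (horizontal vs.\ vertical) is forced by the direction of $e$ at $v$ together with $s(\ell_e)\bmod 4$, and similarly for $f$. Matching these parity constraints against the turn $c$ — and checking all the cases $c\in\{-1,0,1\}$ together with the two possible sides of the wire near $v$ — is where the real work lies; it is essentially a finite case check, and I expect it to mirror the proof of Lemma~12 in~\cite{vanGoethemVerbeek2018} closely, with the windmill/rerouting setup of Section~\ref{sub:simplification} guaranteeing that the crossings can indeed be taken arbitrarily close to $v$ inside the $6\varepsilon$-boxes.
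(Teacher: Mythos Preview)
Your plan has a structural gap: you assume there is a single wire $w$ that crosses both $e$ and $f$, and you reason about the sub-wire of $w$ between those two crossings. In general no such wire exists. In $\Gamma_O$ the wires are straight horizontal or vertical lines, so an lr-wire (horizontal) crosses only edges that are vertical in $\Gamma_O$, and a tb-wire (vertical) crosses only edges that are horizontal in $\Gamma_O$. Hence if $e$ and $f$ have different orientations in $\Gamma_O$ (which is precisely the case $c=\pm 1$), the links $\ell_e$ and $\ell_f$ necessarily belong to wires from \emph{different} families, and there is no sub-wire $w'$ connecting the two crossings. Even when $e$ and $f$ have the same orientation ($c=0$), they leave $v$ in opposite directions, so a single straight wire in $\Gamma_O$ close to $v$ crosses exactly one of them, not both. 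Your ``accumulated turning along $w'$ equals $s(\ell_e)-s(\ell_f)$'' step therefore has nothing to accumulate along.

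The paper's argument avoids this by building a closed cycle out of \emph{two} wires (one lr-wire and one tb-wire in the $c=\pm1$ case; two from one family plus one from the other when $c=0$) together with the two edge stubs at $v$. The key fact that makes this work is Lemma~2 of~\cite{vanGoethemVerbeek2018}: the two wires have equal spirality $k$ at their mutual crossing, so $k$ cancels when you sum the turning around the cycle. Counting the forced left turns at the wire--edge crossings and at the wire--wire crossing, and using that a simple counter-clockwise cycle has net turning $4$, yields an \emph{equality} $s(\ell_f)=s(\ell_e)-c+1$, $s(\ell_e)-c$, or $s(\ell_e)-c-1$ depending on the case, and the inequality in the lemma is just the envelope of these three values. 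Your ``$\pm 1$ slack from two homotopy classes'' intuition is pointing at the right phenomenon, but the mechanism is the choice of which side of $v$ the wire pair sits on, not a choice of how a single wire winds around $v$.
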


\begin{proof}
Edge $e$ and $f$ either are both horizontal (vertical) in $\Gamma_O$ or not.

For the case where the edges have different orientations, w.l.o.g. assume $e$ is a left-outgoing edge for $v$ and $f$ a top-outgoing edge for $v$ in $\Gamma_O$.
By construction $e$ and $f$ are intersected by a pair of wires $w\in W_\rightarrow$ and $w'\in W_\downarrow$, and they cross before crossing $e$ respectively $f$.
Wires $w$ and $w'$ together with edges $e$ and $f$ must then enclose a simple region in $\Gamma_O$.
As the wires in $\Gamma_I$ form an equivalent set this simple region also exists $\Gamma_I$, however, the shape may be different, moreover, the orientation of the outgoing edges at $v$ may be different.

This cycle by construction contains at least three left turns.
Two at the crossing of the wires with $e$ and $f$, and one at the crossing of the wires.
The turn at $v$ depends on the configuration of $e$ and $f$ in $\Gamma_I$.
Let $\ell_e,\ell_f$ be the links of $w,w'$ crossing $e$ and $f$.
Furthermore, let $k$ be the spirality of the links of $w$ and $w'$ at the crossing between $w$ and $w'$.
As the number of left turns is four larger than the number of right turns when traversing a cycle counter-clockwise we have $(k-s(\ell_e))+(s(\ell_f)-k)+3+c=4$, simplified $s(\ell_f)=s(\ell_e)-c+1$.
When, in  $\Gamma_O$, $e$ is clockwise adjacent to $f$ at $v$ then we get $s(\ell_f)=s(\ell_e)-c-1$.

For the case where both edges are horizontal (vertical) a similar argument holds, but now the cycle is formed by two wires from $W_\rightarrow$ and one wire from $W_\downarrow$ resulting in one more left turn.
We obtain $s(\ell_f) = s(\ell_e)-c$.

Combining all bounds gives the result.
\end{proof}

\bigskip\noindent {\bf Lemma~\ref{lem:SafeRedraw}.\ }
\emph{
We can redraw all edges in $\Gamma_{s-1}$ that were crossed by a maximum-spirality link in $\Gamma_s$ within $6\varepsilon$-boxes while maintaining planarity of the drawing.}

\begin{proof}
Trivially planarity violations occur only inside the $6\varepsilon$-boxes.
Let $\Gamma^r_{s-1}$ be the redrawn version of $\Gamma_{s-1}$.
There are two possible cases causing a planarity violation in $\Gamma^r_{s-1}$.
First, two perpendicular edges leaving $v$ coincide internally after the redraw step.
This occurs if one of the edges is crossed by links of absolute spirality $s$ and the other is not.
Second, two edges leaving $v$ in opposing direction coincide internally after the redraw step.
This occurs if one of the edges is crossed by links of spirality $s$ and the other by links of spirality $-s$.

For the first case, w.l.o.g. assume $e$ is an right-outgoing edge of $v$ and $f$ a bottom-outgoing edge.
Let $\ell_e$ be a link crossing $e$ and $\ell_f$ a link crossing $f$ in $\Gamma_s$.
Assume w.l.o.g. $s(\ell_e)=s$ and $s(\ell_f) < s$.
By Lemma~\ref{lem:relativeAngle}, using $c=-1$, in $\Gamma_s$ we have $s(\ell_e)+2 \ge s(\ell_f)\ge s(\ell_e)$.
Specifically, as $s(\ell_e)=s$ and no larger spiralities exists in $\Gamma_s$, we must have $s(\ell_f)=s(\ell_e)$. Contradiction.

For the second case, w.l.o.g. assume $e$ is a right-outgoing edge of $v$ and $f$ a left-outgoing edge.
Let $\ell_e$ be a link crossing $e$ and $\ell_f$ a link crossing $f$, where $s(\ell_e) = s$ and $s(\ell_f)=-s$.
Using Lemma~\ref{lem:relativeAngle}, with $c=0$, we get $s(\ell_f) \ge s(\ell_e)-1 \ge 0 > -s$. Contradiction.
\end{proof}

\bigskip\noindent {\bf Lemma~\ref{lem:straightLineDrawing}.\ }
\emph{
If $\Gamma_s$ is a straight-line drawing with spirality $s>0$ then there exists a straight-line drawing $\Gamma'_{s-1}$ with spirality $s-1$.}

\begin{proof}
There exists a morph from $\Gamma_s$ to $\Gamma_O$ monotonously decreasing spirality while using rerouting to prevent excess complexity in the edges.
Let $\Gamma^r_{s-1}$ be the first drawing with spirality $s-1$ in this morph.
Consider an edge $e$ that is crossed by maximum absolute spirality links with positive spirality in $\Gamma_s$.
Mirror right and left turns for edges crossed by maximum absolute spirality links with negative spirality.
Edge $e$ has three segments in $\Gamma^r_{s-1}$ that are joined by a right turn followed by a left turn.

Redraw $e$ within the $6\varepsilon$-boxes near the endpoints to create two left turns and a right turn at the start of the edge, and one left turn and two right turns at the end of the edge (see Fig.~\ref{fig:rerouteEdgesLocally}(b)).
Thus the bends in $e$ in $\Gamma^r_{s-1}$ can be encoded as $LLR~RL~LRR$, where $L$ encodes a left turn and $R$ a right turn.
Split differently we have $LLRR~LLRR$ where any wire crossing the edge crosses in the segment between the two groups of bends.

We can remove a pair of consecutive bends $LR$ by performing a zigzag-removing slide on the segment between the bends.
As any such segment is not crossed by wires this does not introduce new bends in the wires.
The result is a straight-line version $\Gamma'_{s-1}$ of $\Gamma^r_{s-1}$.
As no new bends were introduced in any of the wires, the spirality is still $s-1$.
\end{proof}

\bigskip\noindent {\bf Lemma~\ref{lem:straightSpirOne}.\ }
\emph{The spirality of $\Gamma'_{s-1}$ relative to $\Gamma_{s}$ is one.}

\begin{proof}
Let the \emph{main wire set} be the set of wires used to compute the morph including rerouting from $\Gamma_s$ to $\Gamma^r_{s-1}$.
Consider a \emph{reference wire grid} that is a straight-line wire grid in $\Gamma_s$.
We use the reference wire grid to prove the spirality of $\Gamma'_{s-1}$ relative to $\Gamma_s$ is one.
Using Lemma~\ref{lem:equivalent},~\ref{lem:spirOne},~\ref{lem:spirOneRerouting} but swapping the roles of $\Gamma_a$ and $\Gamma_b$, we obtain the result that there is an equivalent monotone set of wires in $\Gamma^r_{s-1}$ matching the reference grid in $\Gamma_s$.
(We remark that the argument for Lemma~\ref{lem:spirOne} is slightly easier in this direction as no coincident vertices exist in $\Gamma_a$.)
%Let $e$ be an edge in $\Gamma_s$ crossed by one or more maximum absolute spirality links.
%Assume w.l.o.g. that this spirality is positive.
%Performing a zigzag-removing slide on the right-most crossing link and reroute the remaining wires to all cross the newly introduced segment.
%Edge $e$ now consists of three segments, where only the middle segment is crossed by wires.
Let a segment in $\Gamma^r_{s-1}$ that does not cross a wire of the main wire set be a \emph{free} segment.
Free segments have the same orientation in $\Gamma^r_{s-1}$ as in $\Gamma_s$.

We consider straightening $\Gamma^r_{s-1}$ while deforming the set of reference wires along.
The resulting deformed set of wires for $\Gamma'_{s-1}$ is equivalent to the straight-line reference grid in $\Gamma_s$. What is left to prove is that these wires are still $x$- ($y$-)monotone and hence the spirality of $\Gamma'_{s-1}$ relative to $\Gamma_s$ is one.

If a horizontal (vertical) free segment is crossed by a \emph{reference} wire in $\Gamma^r_{s-1}$, then it must be crossed by a tb- (lr-)wire as the orientation of the segment is the same in $\Gamma_s$ and the reference grid consists of straight lines in $\Gamma_s$.
Straightening the segment using a zigzag-removing slide introduces two bends and a new horizontal link in the tb-wire.
Introducing a horizontal link cannot break $y$-monotonicity of the wire though.
Thus deforming the drawing and the reference grid while simplifying maintains the spirality of the reference grid.
But then there exists an equivalent set of monotone wires in $\Gamma'_{s-1}$ matching the straight-line wires in $\Gamma_s$.
We conclude that the spirality of $\Gamma'_{s-1}$ is one relative to $\Gamma_s$.
\end{proof}
\end{document}